\colorlet{MyBlue}{DodgerBlue!60!Black}
\colorlet{MyGreen}{DarkGreen!85!Black}
\numberwithin{equation}{section}  
\crefname{app}{Appendix}{Appendices}
\newcommand{\R}{\mathbb{R}}
\DeclareMathOperator{\card}{card}
\DeclareMathOperator{\degr}{degr}
\DeclareMathOperator{\ex}{\debug{\mathbb{E}}}
\DeclareMathOperator{\hull}{\debug \Delta}
\DeclareMathOperator{\proj}{proj}
\DeclareMathOperator{\prob}{\debug{\mathbb{P}}}
\DeclareMathOperator{\val}{\debug{val}}
\DeclareMathOperator{\Val}{\debug{Val}}
\DeclarePairedDelimiter{\braces}{\{}{\}}
\DeclarePairedDelimiter{\bracks}{[}{]}
\DeclarePairedDelimiter{\parens}{(}{)}
\DeclarePairedDelimiter{\angles}{\langle}{\rangle}
\DeclarePairedDelimiterX{\braket}[2]{\langle}{\rangle}{#1,#2}
\DeclarePairedDelimiterX{\inner}[2]{\langle}{\rangle}{#1,#2}
\DeclarePairedDelimiterX{\setdef}[2]{\{}{\}}{#1:#2}
\DeclarePairedDelimiterXPP{\probof}[1]{\prob}{(}{)}{}{%

#1}
\DeclarePairedDelimiterXPP{\exof}[1]{\ex}{[}{]}{}{%

#1}
\newcommand{\debug}[1]{#1}
\theoremstyle{plain}
\newtheorem{theorem}{Theorem}
\newtheorem{corollary}[theorem]{Corollary}
\newtheorem*{corollary*}{Corollary}
\newtheorem{lemma}[theorem]{Lemma}
\newtheorem{proposition}[theorem]{Proposition}
\theoremstyle{definition}
\newtheorem{definition}[theorem]{Definition}
\newtheorem*{definition*}{Definition}
\newtheorem*{hypothesis*}{Hypothesis}
\theoremstyle{remark}
\newtheorem{remark}{Remark}
\newtheorem*{remark*}{Remark}
\newtheorem*{notation*}{Notational remark}
\newtheorem{example}{Example}
\numberwithin{theorem}{section}
\numberwithin{remark}{section}
\numberwithin{example}{section}
\newcommand{\game}{\debug \Gamma}
\newcommand{\history}{\debug h}
\newcommand{\histories}{\mathscr{\debug H}}
\newcommand{\mixed}{\debug \sigma}
\newcommand{\pure}{\debug s}
\newcommand{\nPures}{\debug S}
\newcommand{\pures}{\mathscr{\debug \nPures}}
\newcommand{\pay}{\debug g}
\newcommand{\cycle}{\debug \gamma}
\newcommand{\cyclealt}{\cycle'}
\newcommand{\edge}{\debug e}
\newcommand{\edges}{\mathscr{\debug E}}
\newcommand{\graph}{\mathscr{\debug G}}
\newcommand{\sink}{\debug D}
\newcommand{\source}{\debug O}
\newcommand{\subtree}{\debug \tree}
\newcommand{\tree}{\mathscr{\debug T}}
\newcommand{\trees}{\debug{\mathbb{T}}}
\newcommand{\vertex}{\debug v}
\newcommand{\vertexalt}{\debug u}
\newcommand{\vertices}{\mathscr{\debug V}}
\newcommand{\simplex}{\debug \Delta}
\newcommand{\altdiff}{\debug \Phi}
\newcommand{\bdfs}{\debug \alpha}
\newcommand{\contnet}{\debug Q}
\newcommand{\cycletime}{\debug \tau}
\newcommand{\diffedge}{\debug \Lambda}
\newcommand{\ebd}{\debug{\mixedh^{\ast}}}
\newcommand{\ebdcond}[1]{\debug{\mixedh_{#1}^{\ast}}}
\newcommand{\edgel}{\debug j}
\newcommand{\euler}{\mathscr{\debug P}}
\newcommand{\eulerstr}{\debug{\mixed^{\ast}}}
\newcommand{\eulertime}{\debug \theta}
\newcommand{\graphs}{\debug{\mathbb{G}}}
\newcommand{\halfpaths}{\debug m}
\newcommand{\length}{\debug \lambda}
\newcommand{\lengthprof}{\debug{\boldsymbol \length}}
\newcommand{\maxdegree}{\debug \delta}
\newcommand{\mixedh}{\debug \varepsilon}
\newcommand{\neighbor}{\debug{\mathscr{N}}}
\newcommand{\nout}{\debug n}
\newcommand{\npaths}{\debug n}
\newcommand{\ppath}{\debug i}
\newcommand{\ppathalt}{\debug k}
\newcommand{\ppathsec}{\debug{\ppathalt'}}
\newcommand{\per}{\debug t}
\newcommand{\proba}{\debug p}
\newcommand{\probas}{\boldsymbol{\proba}}
\newcommand{\probbin}{\debug \zeta}
\newcommand{\subedges}{\debug I}
\newcommand{\totlength}{\debug \mu}
\newcommand{\ud}{\debug{\mixedh^{\textup{U}}}}
\newcommand{\vertexedge}{\debug z}
\newcommand{\totedges}{\debug L}
\newcommand{\TG}[1]{\todo[color=Goldenrod!50,author=\textbf{Tristan},inline]{\small #1\\}}
\newacro{BDFS}{biased depth-first strategy}
\newacro{DFS}{depth-first strategy}
\newacro{DSG}{deterministic search game}
\newacro{EBD}{equal branching distribution}
\newacro{ES}{Eulerian strategy}
\newacro{HSG}{hide-search game}
\newacro{NE}{Nash equilibrium}
\newacro{SG}{search game}
\newacro{SSG}{stochastic search game}
\newacro{UCPS}{uniform Chinese postman strategy}
\newacro{UD}{uniform distribution}
\newacro{UDFS}{uniform depth-first strategy}
\newacro{UES}{uniform Eulerian strategy}
\title
{Search for an Immobile Hider on a Stochastic Network
}
\author%
{Tristan Garrec \\ 
CEREMADE, Universit\'e Paris-Dauphine \\
 place du Mar\'echal de Lattre de Tassigny, 75775 Paris Cedex 16, France\\ 
\& TSE-R, Toulouse School of Economics \\
Manufacture des Tabacs, 21 Allée de Brienne, 31015 Toulouse Cedex 6, France \\
\texttt{tristan.garrec@ut-capitole.fr}
\and
Marco Scarsini\\
Dipartimento di Economia e Finanza, LUISS \\
Viale Romania 32, 00197 Roma, Italy \\
\texttt{marco.scarsini@luiss.it}
}
\begin{document}

\maketitle

\begin{abstract}

Harry hides on an edge of a graph and does not move from there. 
Sally, starting from a known origin, tries to find him as soon as she can. 
Harry's goal is to be found as late as possible.
At any given time, each edge of the graph is either active or inactive, independently of the other edges, with a known probability of being active. 
This situation can be modeled as a zero-sum two-person stochastic game.
We show that the game has a value and we provide upper and lower bounds for this value.
Finally, by generalizing optimal strategies of the deterministic case, we provide more refined results for trees and Eulerian graphs.


\bigskip 

\noindent
AMS Subject Classification 2010: {Primary 91A24; secondary  91A05, 91A15, 91A25.}

\bigskip

\noindent 
Keywords:
{
Game theory;
hide-search game;
zero-sum two-person game;
random graph.}

\end{abstract}

\acresetall

\maketitle

\newpage

\section{Introduction}\label{se:intro}

\subsection{The problem}\label{suse:problem}

	In a typical search game a hider hides in a space and a searcher, starting from a specified point, searches for the hider, trying to find him as fast as possible. 
	Often the space where the hider hides is assumed to be a network. 
	In almost all  existing versions of the game the network is fixed and all the edges are always available to the searcher.
	In real life it is often the case that some edges of the network are momentarily unavailable, for various reasons.  
	For instance when the police are looking for a suspect in a city, it is possible that the presence of traffic, or civilians, or other unexpected obstacles, forces them to deviate from the planned path.
	Most often the obstacles on the network are not permanent, but vary with time. 
	For instance, traffic may be intense in an area of the city at some time and in a different area at a different time. 
	The vehicles involved in an accident at some point get removed from the road and traffic goes back to normal.
	In a more common scenario, a road may be unavailable because of a red traffic light.
	This scenario, although simple to describe, would require considering stochastic dependence among the availability of different edges.
	
	Similar scenarios appear for instance when a rescue team is searching for miners in a mine. 
	Explosions or landslides may force the rescuers to change the course of actions. 
	Although in this case we do not have an adversarial hider, we can frame the situation as a zero-sum game, by considering the worst-case scenario, a game against Nature.
	
	It is clear that the stochastic elements that affect the shape of the network must be taken into account by both the hider and the searcher. 
	Consider the set of edges available to a searcher at a specific time.
	If the edge that she would have chosen is unavailable, she has two options: she can either wait until the edge becomes available, or she can take a different edge.
	Her choice clearly depends on the probability that each edge is available, on the structure of the network, and on her position in the game.

\subsection{Our contribution}\label{suse:contribution}

We study a hide-search model where a hider (Harry) hides on an edge of a graph and a searcher (Sally) travels around the graph in search of Harry. Her goal is to find him as soon as possible.

The novelty of the model is that, due to various circumstances, at any given time, some edges may be unavailable, so the graph randomly evolves over time.
At each stage, each edge $\edge$ of the  graph is, independently of the others, active with probability $\proba_{\edge}$ and inactive with probability $1-\proba_{\edge}$.
 
	At the beginning of the game, Harry hides on one edge of his choice and is immobile for the rest of the game.
	Starting from an initial vertex, called the root of the graph, Sally chooses at each stage a vertex among those reachable through active edges in the neighborhood of her current vertex. An equivalent approach is to let Sally choose an available incident edge, if any.
	The game ends when Sally traverses the edge where Harry is hidden, and his payoff is the number of  stages needed for the game to end. 
	So, Sally tries to minimize this time needed to find Harry and Harry aims to maximize this time. 
	This can be modeled as a zero-sum two-person game.

	We first examine the deterministic version of the game when  $\proba_{\edge}=1$ for each edge $\edge$. 	
	This game has a value and optimal strategies. 
	Analogously to well-known models in continuous time, we provide an upper and lower bound for this value, which correspond, for a fixed number of edges, to the value of games played on trees and on Eulerian graphs, respectively. 
	We also characterize optimal strategies when the graph is either a tree or an Eulerian graph. 
	We then turn to the stochastic framework and  show that, even in this case, the game has a value for all positive $\proba_{\edge}$. 
	We provide an upper and lower bound for this value and  show that it converges to the value of the deterministic game when $\proba_{\edge}\to 1$ for each edge $\edge$.
	We consider some particular instances when all $\proba_{\edge}$ are equal.
	We generalize optimal strategies of the deterministic setting to the stochastic one and obtain upper bounds on the value of the games played on binary trees and on parallel Eulerian graphs. 
	The upper bounds are tight when Sally is restricted to some search trajectories. 
	Finally we solve the stochastic search games played on the line and on the circle.
	
	We are aware that the stochastic representation of reality in our model is quite simplistic, but we see this as a first step to analyze search games under uncertainty on the network structure. 
	Moreover, even under our simplifying assumptions, the analysis is already quite complicated and general results are hard to achieve.

\subsection{Related literature}\label{suse:related-literature}

Several types of \acp{HSG} have been studied by various authors under different assumptions. 
\citet{Neu:CTG1953} studied a discrete version of the model where a hider hides in a cell $(i,j)$ of a matrix and a searcher chooses a row or column of the matrix; she finds the hider if the row or column contains the cell $(i,j)$. 
The problem was framed as a two-person zero-sum game.
Several variations of this discrete game were studied by various authors, among them 
\citet{Neu:JSIAM1963,Efr:JSIAM1964,GitRob:NRLQ1979,RobGit:NRLQ1978,Sak:JORSJ1973,Sub:JAP1981,BerMen:OR1986,BasBosRuc:JOTA1990}.

The search game with an immobile hider was introduced by  \citet{Isa:Wiley1996}.
\citet{BecNew:IJM1970} considered a continuous \ac{HSG} with a hider hiding on a line according to some distribution and a searcher, starting from an origin and moving at fixed speed, tries to find the hider as soon as possible. 
The continuous model was then generalized by \citet{Gal:IJM1972,Gal:SJAM1974,GalCha:SJAM1976}, who, among other things extended the state space from a line to a plane.

More relevantly to our paper, some authors dealt with \acp{HSG} on a network. Among them, \citet{Bos:SIAMJADM1984} studied a  discrete version of a continuous \ac{HSG} proposed by  \citet{Gal:AP1980}. This game is played on a parallel multi-graph with three edges that join two vertices $A$ and $B$ and the searcher, starting from $B$ has to find an immobile hider. The fact that the network has an odd number of parallel edges and, therefore, is not Eulerian makes the problem difficult to solve. 
\citet{Kik:JORSJ1990,Kik:JORSJ1991} considered a \ac{HSG} where the hider hides in one of $n$ cells on a straight line and the searcher incurs some traveling cost. 
\citet{AndAra:N1990} considered a \ac{HSG} on a network and framed the problem as an infinite-dimensional linear program.
\citet{Gal:SJCO1979,ReiPot:IJGT1993,Cao:EJOR1995,DagGal:N2008,Alp:N2008} examined \acp{HSG}  on trees, Eulerian networks, and some more general classes. 
\citet{Pav:NRL1995,Gal:IJGT2000,Kik:NRL2004,AlpBasGal:IJGT2008,AlpBasGal:N2009} extended the analysis to more general networks.
\citet{Alp:OR2011} considered a find-and-fetch game on a tree where the searcher has to find a hider on a network and can travel at speed $1$ to find him, and then has to return to the origin at a different speed.
\citet{AlpLid:OR2013,AlpLid:AOR2019}
replaced the usual pathwise search with what they call expanding search, where the searched area of a rooted network expands over different paths from the origin at different speeds chosen by the searcher, in such a way that the sum of the speeds is fixed.
\citet{AlpLid:OR2015} dealt with a situation where the searcher can choose one of two speeds to travel and can detect the hider, when passing in front of him, only if she travels at the lower speed.
\citet{Alp:OR2017} considered a model where the hider can hide anywhere in a network and the searcher has to entirely traverse an edge before being able to turn around. 
This constraints gives the problem a more combinatorial flavor.
\citet{Alp:TCSfrth} consider a search game where the hider is constrained to hide in a fixed subset of the whole network. 
If this subset is the set of the midpoints of all the edges, then the model becomes similar to the one we use here, where the hider hides on edges. 
Related to our stochastic model, \citet{BocKorRod:ESA2018} dealt with a search model on a graph, where randomness is induced by potentially unreliable advice, that is, with some fixed probability each node is faulty and points to the wrong neighbor. 
\citet{SteWer:DAM1997} studied the complexity of a \ac{HSG} on a graph when the hider hides on one of the nodes of the graph. 
\citet{JosBat:EJOR2008} proposed a heuristic algorithm to find a hider hidden uniformly at random on a network. 

In the \ac{HSG} studied by \citet{Alp:SJCO2010,AlpLid:MOR2014} the searcher moves on a network at a speed that depends on her location and direction. 
An intuitive link can be established between the speed variations considered in these two articles, and the expected time to cross some edges considered in the present article. 
In particular the \acl{BDFS} that we define and use in \cref{se:tree} is strictly related to the depth-first search defined in \citet{Alp:SJCO2010}.

This article should also be put into perspective with \citet{AlpLid:unpublished} which deals with the question of knowing when depth-first search is optimal. Our article brings some elements to answer this question in the stochastic setting.

In a forthcoming paper \citet{GlaClaLin:ORfrth} considered a search game where an object is hidden in one of many discrete locations and the searcher can use one of two search modes: a fast but inaccurate mode or a slow but accurate one. 
The reader is referred to the classical book by \citet{AlpGal:Kluwer2003} for an extended treatment of search games and to \citet{Hoh:JORSJ2016} for a recent survey of the relevant literature. 

To the best of our knowledge, the model where edges of a network are present only with some probability has not been studied before in the framework of search games, but is standard in other fields. For instance, it is at the foundations of the classical model of random graphs proposed by \citet{ErdRen:PMD1959,ErdRen:MTAMKIK1960,ErdRen:BIIS1961}, where, given a set of vertices, a random graph is generated by creating an edge between any two pairs of vertices independently with probability $\proba$. 
A similar model is studied in percolation theory, where edges of a graph are independently active with probability $\proba$ and one relevant problem is the number of clusters in the random graph and, as a consequence, the possibility of reaching one vertex starting from another one.
The reader is referred, for instance, to 
\citet{Gri:Springer1999,
Bol:CUP2001,Bol:CUP2006,
Hof:CUP2017} for a general treatment of random graphs and percolation.  \citet{BolKunLea:JCT2013} considered a cop and robbers games played on a random graph.
Some intriguing interactions between percolation and game theory have been recently studied by
\citet{DayFal:arXiv2018,HolMarMar:PTRFfrth}, who considered two-person zero-sum games on a graph with alternating moves.

\subsection{Organization of the paper}
The paper is organized as follows. 
\cref{se:model} describes the model.
\cref{se:deterministic} deals with the deterministic case, where all edges are active with probability $1$. 
\cref{se:value} shows existence of the value for the stochastic case and provides upper and lower bounds for this value.
\cref{se:dynamic-programming} uses dynamic programming to find best responses of the searcher against a known hiding distribution of the hider.
\cref{se:tree,se:eulerian} are devoted to the analysis of search games on trees and Eulerian graphs, respectively.
Most of the proofs can be found in \cref{se:appendix}.

\section{The model}\label{se:model}

\subsection{Notation}\label{suse:notation}

Given a finite set $A$, we call $\card A$ its cardinality and $\simplex(A)$ the set of probability measures on $A$. 

Let $\graph=(\vertices,\edges)$ be a connected undirected graph, where $\vertices$ is the nonempty finite set of vertices and $\edges$ is the nonempty finite set of edges. All edges have length $1$. The degree of a vertex $\vertex$ is denoted $\degr(\vertex)$.
There exists a special vertex $\source\in \vertices$, called the \emph{root} of the graph $\graph$. 
Let $\graphs$ be the set of subgraphs of $\graph$. For all $\vertex\in \vertices$, we call $\neighbor(\graph,\vertex) $ the \emph{immediate neighborhood} of $\vertex$ in $\graph$:
\begin{equation}\label{eq:neighbors}
\neighbor(\graph,\vertex) = \{\vertex\} \cup \{\vertexalt\in \vertices \vert \{\vertex,\vertexalt\}\in \edges\}.
\end{equation}

The graph will evolve in discrete time as follows.
Let $\probas=(\proba_{\edge})_{\edge\in\edges} \in(0,1]^\edges$. At each stage $\per\geq 1$, each edge $\edge\in\edges$ is active with probability $\proba_{\edge}$ or inactive with probability $1-\proba_{\edge}$, independently of the other edges.
This defines a random graph process on $\graphs$ denoted $(\graph_{\per})_{\per} = (\vertices,\edges_{\per})_{\per\geq 1}$, where $\edges_{\per}$ is the random set of active edges at time $\per$.

\subsection{The game}

We consider a stochastic zero-sum game $\game=\angles*{\graph,\source,\probas}$ with two players: a maximizer, called the \emph{hider} (Harry), and a minimizer, called the \emph{searcher} (Sally). We call this game a \acfi{SSG}\acused{SSG}.

The game is played as follows. 
At stage $0$ both players know $\graph_{0}=\graph$ and the initial position of the searcher $\vertex_{0} = \source$. 
The hider chooses an edge $\edge\in\edges$. 
Then the graph $\graph_{1}$ is drawn and the searcher chooses $\vertex_{1}\in \neighbor(\graph_{1},\vertex_{0})$. 
If $\{\vertex_{0},\vertex_{1}\}=\edge$, then the game ends and the payoff to the hider is $1$, otherwise the graph $\graph_{2}$ is drawn and the game continues. 
Inductively, at each stage $\per\geq 1$, knowing $\history_{\per}=(\graph_{0},\vertex_{0},\dots,\graph_{\per-1},\vertex_{\per-1},\graph_{\per})$, the searcher chooses $\vertex_{\per}\in \neighbor(\graph_{\per},\vertex_{\per-1})$. 
If $\{\vertex_{\per-1},\vertex_{\per}\}=\edge$, then the game ends and the payoff to the hider is $\per$, otherwise the graph $\graph_{\per+1}$ is drawn and the game continues.

Hence in this \ac{SSG}, the state space is $\graphs\times \vertices$, the action set of the hider is $\edges$, and the action set of the searcher in state $(\graph',\vertex)\in\graphs\times \vertices$ is $\neighbor(\graph',\vertex)$.
We now describe the sets of strategies of the players. For $\per\geq 0$, let $\histories_{\per} =  \graphs \times (\graphs \times \vertices)^{\per}$ be the set of histories at stage $\per$ and let $\histories = \bigcup_{\per\geq 0} \histories_{\per}$ be the set of all histories. 
Call $\pures$ the set of (behavior) strategies of the searcher, that is the strategies $\mixed\colon \histories\to \hull(\vertices)$ such that $\mixed(\history_{\per})\in\hull(\neighbor(\graph_{\per},\vertex_{\per-1}))$.

We call pure the strategies $\pure$ such that, for all $\per\geq 0$ and all $\history_{\per}\in \histories_{\per}$, 
\begin{equation*}
\pure(\history_{\per})=\vertex_{\per}\in\neighbor(\graph_{\per},\vertex_{\per-1}).
\end{equation*}

A behavior strategy $\mixed$ naturally induces a probability measure on each $\histories_{\per}$, for every $\per\ge 1$, which can be uniquely extended to  $\histories_{\infty}$ by Kolmogorov's extension theorem. 
This probability is denoted $\prob_\mixed$ and the corresponding expectation is denoted $\ex_\mixed$.

A mixed strategy of the searcher is a probability distribution over pure strategies, endowed with the product $\sigma$-algebra. By Kuhn's theorem, behavior and mixed strategies are equivalent \citep[see, e.g.,][]{Aum:AGT1964,Sor:Springer2002}.
The sets of pure and mixed strategies of the hider are  $\edges$ and $\hull(\edges)$, respectively. 
Pure strategies of the hider and the searcher will usually be denoted with the letters $\edge$ and $\pure$ respectively, while mixed and behavior strategies will usually be denoted with the letters $\mixedh$ and $\mixed$, respectively. We denote $\ud$ the \ac{UD} on $\edges$.

Finally, the payoff function of the hider is the function $\pay \colon \edges\times \pures \to \R_{+}\cup\{+\infty\}$, defined as
\begin{equation}\label{eq:payoff}
\pay(\edge,\mixed)=\ex_\mixed\bracks{\inf\braces{\per \ge 1 \vert \braces{\vertex_{\per-1},\vertex_{\per}}=\edge}},
\end{equation}
where the infimum over the empty set is $+\infty$. 
The function $\pay$ is linearly extended to $\hull(\edges)$.
The goal of the hider is thus to maximize the expected time by which he is found by the searcher, while the goal of the searcher is to minimize the expected time by which she finds the hider.

\section{Deterministic search games}
\label{se:deterministic}

\cref{pr:valueexistence} below will show that the search game $\angles{\graph,\source,\probas}$ has a value, which we  denote $\val(\probas)$. If $\proba_{\edge}$ is equal to $1$ for all $\edge\in\edges$, we then recover a search game with an immobile hider played on a graph. We call this game a \acfi{DSG}\acused{DSG}. \acp{DSG} have a value $\val(\boldsymbol{1})$.

We recall some important definitions and results for \acp{DSG}. Versions of these results are well known when the game is played in continuous time over a continuous network  \citep[see, e.g.,][]{AlpGal:Kluwer2003}.  

\begin{definition}\label{de:cycle-types}
\begin{enumerate}[(i)]
\item
A cycle in an graph is called \emph{Eulerian} if it uses each edge exactly once. If such a cycle exists, the graph is called Eulerian.

\item
A \emph{Chinese postman tour} is a cycle of minimal length that visits each edge. In Eulerian graphs, the Chinese postman tours are the Eulerian cycles.
\end{enumerate}
\end{definition}

\begin{definition}\label{de:chinese-postman-strategy}
\begin{enumerate}[(i)]
\item
The \acfi{UES}\acused{UES} is a mixed strategy that mixes over all Eulerian cycles with equal probability. 

\item
The \acfi{UCPS}\acused{UCPS} is a mixed strategy that mixes over all Chinese postman tours with equal probability. 
\end{enumerate}
\end{definition}

In \cref{de:chinese-postman-strategy} above, note that the \ac{UCPS} is not the same as a Random Chinese Postman Tour usually found in the literature. In a Random Chinese Postman Tour, the searcher follows equiprobably a Chinese postman tour or its reverse. Both strategies would be optimal in \cref{pr:tree} below, however only the \ac{UCPS} generalizes well to the stochastic setting.

When considering trees, we will endow them with an orientation outgoing from the root. This orientation does not affect the behavior of the searcher, who can traverse any edge in any direction, but is just needed to state and prove some of our results.

Let $\graph=\tree$ be a tree. If $\vertex$ is a vertex of $\tree$, then $\subtree_{\vertex}$ is the subtree that has $\vertex$ as a root and contains all edges below $\vertex$ in the original tree $\tree$. Hence $\tree=\subtree_{\source}$.

If $\edge$ is an edge of $\graph$, then $\subtree_{\edge}\coloneqq\{\edge\}\cup\subtree_{\vertex}$ where $\vertex$ is the head of $\edge$, i.e., $\subtree_{\edge}$ includes $\edge$ and the maximal subtree below the head of $\edge$. 
We denote $\edges_{\vertex}$ (resp. $\edges_{\edge}$) the  set of edges of $\subtree_{\vertex}$ (resp. $\subtree_{\edge}$).

The following definition is an adaptation to our framework of what \citet[Section~3.3]{AlpGal:Kluwer2003} have in the continuous setting.

\begin{definition}\label{de:EBD}
The \acfi{EBD}\acused{EBD} $\ebd$ of the hider is the unique distribution  on $\edges$  that is supported  on the leaf edges and, for every branching vertex $\vertex$ with outgoing edges $\edge_{1},\dots,\edge_{\nout}$, satisfies
\begin{equation}\label{eq:EBD}
\frac{\ebd(\edges_{\edge_{i}})}{\card \edges_{\edge_{i}}} = \frac{\ebd(\edges_{\edge_{1}})}{\card \edges_{\edge_{1}}}, \quad\text{for all }i\in \{1,\dots,\nout\}.
\end{equation}
\end{definition}

\begin{proposition}
\label{pr:tree}
Let $\game=(\vertices,\edges)$.
In a  \ac{DSG} $\game=\angles{\graph,\source,\boldsymbol{1}}$  we have
\begin{equation}\label{eq:val(1)}
\val(\boldsymbol{1}) \leq \card \edges.
\end{equation} 
Moreover, $\val(\boldsymbol{1}) = \card \edges$ if and only if $\graph$ is a tree.
In this case, the \ac{EBD} and the \ac{UCPS} are optimal strategies.
\end{proposition}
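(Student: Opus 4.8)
The plan is to prove the bound \eqref{eq:val(1)} and the optimality of the two strategies by matching an upper bound coming from a searcher strategy against a lower bound coming from the hider strategy \ebd, and then to recover the strict inequality for non-trees from a shorter covering tour. For the general bound $\val(\boldsymbol{1})\le\card\edges$ I would exhibit one searcher strategy guaranteeing expected capture time at most $\card\edges$ against every edge \edge. Double every edge of $\graph$; the resulting multigraph is connected with all degrees even, hence Eulerian, so it has an Eulerian circuit based at \source of length $2\card\edges$ that crosses each edge of $\graph$ exactly twice. Let the searcher follow this circuit, reversing its orientation with probability $1/2$. If \edge is crossed at the two distinct times $s_{\edge}<s'_{\edge}$ in $\{1,\dots,2\card\edges\}$, then under the random orientation its first-crossing time has expectation $\tfrac12\bigl(s_{\edge}+(2\card\edges+1-s'_{\edge})\bigr)=\card\edges+\tfrac12-\tfrac12(s'_{\edge}-s_{\edge})\le\card\edges$ since $s'_{\edge}-s_{\edge}\ge1$. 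Thus $\pay(\edge,\mixed)\le\card\edges$ for all \edge, giving \eqref{eq:val(1)}.

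For a tree I would analyse the \ac{UCPS} directly, noting that it coincides with a uniformly random depth-first traversal (at each branching vertex the child subtrees are explored in independent uniform order). Fix \edge and let $\source=\vertex_0,\dots,\vertex_k$ be the path from the root to the head of \edge. The first time \edge is crossed, the searcher has descended the $k$ path edges once and has fully explored (down and up, hence twice) exactly those sibling subtrees along the path that were ordered before the direction toward \edge; by symmetry each such subtree precedes that direction with probability $1/2$, so the expected first-crossing time is $k+\sum_i B_i$, where $B_i$ is the number of edges in the sibling subtrees hanging at $\vertex_i$. Since the edge set splits into the $k$ path edges, the sibling-subtree edges, and the edges strictly below \edge, this equals $\card\edges$ minus the number of edges below \edge, hence is $\le\card\edges$, with equality exactly for leaf edges. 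Therefore the \ac{UCPS} guarantees $\card\edges$.

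The crux is the matching lower bound: against \ebd every searcher walk has expected capture time $\ge\card\edges$, which I would prove by induction on $\card\edges$. The single-child root case is immediate, since crossing the unique root edge costs one step and the inductive hypothesis bounds the remaining expected time below by $\card\edges-1$. For a branching root with child subtrees of sizes $m_1,\dots,m_n$ (so $\sum_i m_i=\card\edges$), recall that \ebd assigns mass $m_i/\card\edges$ to the $i$-th subtree and restricts to its own \ac{EBD} there. The step I expect to require the most care is an interchange argument handling the interleaving of subtree explorations: because \ebd weights each subtree in proportion to its size while a complete exploration-and-return of a subtree costs at least twice its size, all subtrees share the same weight-to-length ratio, so a depth-first searcher (one never leaving a subtree before finishing it) is optimal against \ebd. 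For such a searcher, writing the contribution of each subtree as its within-subtree expected discovery time plus the total length of the previously completed subtrees, the inductive hypothesis bounds the first part below by $\sum_i(m_i/\card\edges)\,m_i$ while the return-cost bound $\ge2m_i$ makes the cross terms sum to $(\card\edges^{2}-\sum_i m_i^{2})/\card\edges$; the two pieces add to exactly $\card\edges$. Together with the previous paragraph this gives $\val(\boldsymbol{1})=\card\edges$ for trees and the optimality of \ebd and the \ac{UCPS}.

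Finally, to see that equality forces a tree, I would show $\val(\boldsymbol{1})<\card\edges$ whenever $\graph$ contains a cycle $C$, of length $c\ge2$. The set of all edges is a $T$-join for the odd-degree vertices of $\graph$, and its symmetric difference with $C$ (an even subgraph) is another $T$-join avoiding $C$, of weight $\card\edges-c$; hence $\graph$ has a Chinese postman tour of length $L\le\card\edges+(\card\edges-c)=2\card\edges-c\le2\card\edges-2$. Following this tour with a random orientation, the expected first-crossing time of any edge is $\tfrac12\bigl(L+1-(\gamma_{\edge}-\alpha_{\edge})\bigr)\le\tfrac12(L+1)\le\card\edges-\tfrac12<\card\edges$, where $\alpha_{\edge},\gamma_{\edge}$ are its first and last crossing times. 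This yields $\val(\boldsymbol{1})<\card\edges$ for non-trees and completes the characterisation.
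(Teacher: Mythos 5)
Your proposal is largely correct and is genuinely more self-contained than the paper's proof, which handles the tree case by importing Gal's continuous-network theorem (\cref{le:Alpern-Gall-T3.21}) through a discrete-to-continuous mapping plus a domination argument for non-leaf hiding. Three of your four ingredients are complete and arguably cleaner: the doubled-graph Eulerian circuit with random orientation gives the bound $\val(\boldsymbol{1})\le\card\edges$ directly for every graph; your direct computation for the \ac{UCPS} on a tree (expected first-crossing time $\card\edges$ minus the number of edges strictly below the target edge) is a correct, purely discrete replacement for the citation of Gal; and your $T$-join argument that a graph with a cycle of length $c\ge 2$ admits a covering closed walk of length at most $2\card\edges-c$ is a slicker route to the paper's \cref{le:CPC}(ii), after which your random-orientation estimate $\tfrac12(L+1)\le\card\edges-\tfrac12$ coincides with the paper's computation. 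Your equalization identity for the cross terms, $\sum_i m_i\sum_{j\prec i}2m_j=\card\edges^2-\sum_i m_i^2$ independently of the order, is also exactly right and explains why the \ac{EBD} equalizes over depth-first searches.

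The genuine gap is the step you yourself flag: the claim that against the \ac{EBD} some \ac{DFS} is a best response among \emph{all} search trajectories. The heuristic you offer (equal weight-to-length ratio $m_i/\card\edges$ over cost $2m_i$ for every subtree) justifies only that all \emph{orderings of complete subtree explorations} are equally good; it does not rule out trajectories that enter a subtree, collect a portion of its mass, leave before completing it, and return later, and your cross-term bound $\ge 2m_j$ for ``previously completed'' subtrees presupposes the depth-first structure you are trying to establish. Note that against a general hiding distribution depth-first search is \emph{not} optimal (grabbing a heavy leaf near the root of one subtree and then switching can strictly help), so the claim depends delicately on the \ac{EBD}'s equalization property holding recursively at every branching vertex, and the interchange argument must handle arbitrarily interleaved partial explorations. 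This is precisely the hard core of \citet{Gal:SJCO1979} (Alpern--Gal, Theorem~3.21), which the paper deliberately cites rather than reproves. As written, your induction establishes the lower bound only against \acp{DFS}; to close the argument you must either carry out the exchange lemma in full (nontrivial, several pages in the literature) or, as the paper does, invoke the known continuous result and transfer it, using that discrete hiding strategies embed into continuous ones so the \ac{EBD} still guarantees $\card\edges$.
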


We first prove the following lemma.

\begin{lemma}
\label{le:CPC}
Let $\graph=(\vertices,\edges)$ be a connected graph. 
Any Chinese postman tour has length 
\begin{enumerate}[\upshape{(}i\upshape{)}]
\item $2\card \edges$ if $\graph$ is a tree,

\item at most $2\card \edges -2$ if $\graph$ is not tree.
\end{enumerate}
\end{lemma}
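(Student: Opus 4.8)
The plan is to reduce the Chinese postman tour to an edge-duplication problem and then bound the number of duplicated edges. First I would observe that a closed walk traversing every edge of the connected graph $\graph$ is exactly an Eulerian circuit of some multigraph $\hat\graph$ obtained from $\graph$ by duplicating a set $D$ of edges, and that such a walk has length $\card\edges + \card D$. Minimising over $D$, the Chinese postman tour has length $\card\edges + d^\ast$, where $d^\ast$ is the least number of duplications that renders every vertex degree even. Two standard reductions then apply to an optimal $D$: each edge is duplicated at most once (a pair of parallel copies can be deleted without changing any parity), and $D$ contains no cycle (deleting the edges of a cycle in $D$ lowers each incident $D$-degree by $2$, hence preserves all parities while shrinking $D$). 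Consequently $D$ is a forest, so $\card D \le \card\vertices - 1$.

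For part (i), when $\graph$ is a tree I would argue directly rather than through $D$: every edge is a bridge, so any closed walk crosses it an even number of times, and at least twice in order to reach the edges lying beyond it. Thus every tour has length at least $2\card\edges$, while duplicating each edge once attains it, giving length exactly $2\card\edges$.

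For part (ii), $\graph$ is connected but not a tree, so $\card\edges \ge \card\vertices$ and $\graph$ contains a cycle $\cycle$. I would split on the cyclomatic number. If $\card\edges \ge \card\vertices + 1$, the forest bound already yields $d^\ast \le \card\vertices - 1 \le \card\edges - 2$, hence tour length $\card\edges + d^\ast \le 2\card\edges - 2$. The remaining case is the unicyclic one, $\card\edges = \card\vertices$, where the forest bound gives only $d^\ast \le \card\vertices - 1 = \card\edges - 1$, one short of the target. To close the gap I would show that the extremal value $\card D = \card\vertices - 1$ cannot occur. Indeed, a forest with $\card\vertices - 1$ edges is a spanning tree, and in a unicyclic graph every spanning tree has the form $\edges \setminus \{f\}$ for a single edge $f$ of the cycle $\cycle$. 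For such a $D$, each endpoint $x$ of $f$ has $\degr_D(x) = \degr(x) - 1$, of parity opposite to $\degr(x)$, which violates the defining condition $\degr_D(x) \equiv \degr(x) \pmod 2$ of a valid duplication set. Hence $\card D \le \card\vertices - 2 = \card\edges - 2$, and again the tour length is at most $2\card\edges - 2$.

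The main obstacle is precisely this boundary (unicyclic) case: the generic forest-counting bound is off by one, and one must exploit the \emph{parity structure} of the duplication set, not merely its cardinality, to rule out a spanning-tree-sized $D$. Everything else is bookkeeping, together with the two standard facts that a minimum duplication set may be taken acyclic and of multiplicity at most one, which I would justify in one line each by the cycle-removal and parallel-copy-removal arguments above.
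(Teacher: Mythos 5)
Your proof is correct, and it takes a genuinely different route from the paper's. The paper argues by induction on $\card\edges$: the tree case is dispatched by a (sketched) induction, and in the non-tree case it picks an edge $\edge$ whose removal keeps $\graph$ connected and performs explicit tour surgery --- either rerouting the final, already-retraced segment of a Chinese postman tour of the spanning tree through $\edge$, or prepending a back-and-forth traversal of $\edge$ to an inductively obtained tour --- landing in both cases at length $2\card\edges-2$. You instead invoke the classical T-join reformulation of the Chinese postman problem: the optimal tour length is exactly $\card\edges+d^\ast$, where $d^\ast$ is the minimum size of a parity-correcting duplication set $D$, which by your two standard reductions may be taken to have multiplicity one and to be acyclic, hence a forest with $d^\ast\le\card\vertices-1$. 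The delicate point is exactly where you locate it: in the unicyclic case $\card\edges=\card\vertices$ the forest count is one short, and your parity obstruction --- a duplication set of size $\card\vertices-1$ would be a spanning tree, hence $\edges\setminus\{f\}$ for some edge $f$ of the unique cycle, violating $\degr_D(x)\equiv\degr(x)\pmod 2$ at the endpoints of $f$ --- correctly closes the gap (note $f$ has two distinct endpoints since a simple graph's cycle has length at least $3$). What each approach buys: the paper's induction is constructive and self-contained, exhibiting the tours; yours yields the exact identity $\card\edges+d^\ast$, supplies via the bridge-parity observation a complete lower-bound argument in the tree case (filling in the detail the paper leaves to ``induction''), and generalizes readily, e.g.\ to weighted edges. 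One cosmetic remark: for a leaf edge there is nothing ``lying beyond it,'' but your argument is unaffected, since the tour must traverse every edge at least once and crosses each bridge an even number of times, hence at least twice.
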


\begin{proof}
If $\graph$ is a tree the result follows by induction on $\card \edges$. 

Suppose now that $\graph$ is not a tree. We again proceed by induction on $\card \edges $. There exists an edge $\edge = \{\vertexalt,\vertex\}\in \edges$ such that $\graph' = (\vertices,\edges\setminus\{\edge\})$ is connected.

If $\graph'$ is a tree, we consider a Chinese postman tour $\cycle\in\graph'$ starting at $\vertexalt$, such that the subtree with root $\vertex$ is the last visited.
Once the vertex $\vertex$ is visited for the last time on $\cycle$, we replace the end of the cycle---which has already been visited---with $\edge$, going straight from $\vertex$ to $\vertexalt$.
This new cycle in $\graph$ has length  at most $2(\card \edges -1) + 1 - 1 = 2\card \edges -2$, since the length of the cycle in  $\graph'$ is $2(\card \edges -1)$, the length of $\edge$ is $1$, and the number of the edges not visited a second time is at least $1$.

If $\graph'$ is not a tree, then it admits a Chinese postman tour $\cycle$ with length at most $2(\card \edges -1)-2$. We now consider the cycle $\cyclealt\in\graph$ which starts at $\vertexalt$, goes back and forth on $\edge$ and then follows the cycle $\cycle$ on $\graph'$. This cycle has length  $2(\card \edges -1)-2 + 2 = 2\card \edges-2$.
\end{proof}

The proof of \cref{pr:tree} will make use of the following lemma, which refers to a model for continuous networks in continuous time. 
Let $\contnet$ be a continuous tree network, and suppose that the edges of $\contnet$ have integer length. 
Then $\contnet$ is mapped to a tree graph $\tree$ in the natural way. 
The \ac{UCPS} and the \ac{EBD} are defined in a similar way in $\tree$ and in $\contnet$, and are naturally mapped from the graph setting to the continuous network setting, and vice versa.

\begin{lemma}[\protect{\citet{Gal:SJCO1979}, \citet[][Theorem~3.21]{AlpGal:Kluwer2003}}] 
\label{le:Alpern-Gall-T3.21}
Let $\contnet$ be a continuous tree network with total length $\totlength$. Then
\begin{enumerate}[\upshape{(}i\upshape{)}]
\item
The \ac{UCPS} is an optimal search strategy.

\item
The \ac{EBD} is an optimal hiding strategy.

\item
$\val(\boldsymbol{1})=\totlength$.
\end{enumerate}

If the continuous network $\contnet$ with total length $\totlength$ is not a tree, then $\val(\boldsymbol{1})<\totlength$.
\end{lemma}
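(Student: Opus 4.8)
The plan is to establish the tree case by exhibiting a saddle point directly, rather than appealing to a minimax theorem: I will show that the \ac{UCPS} guarantees the searcher an expected capture time at most $\totlength$ against \emph{every} hider location, and that the \ac{EBD} guarantees the hider at least $\totlength$ against \emph{every} searcher trajectory. Since the lower value of a zero-sum game never exceeds its upper value, these two one-sided bounds force the value to equal $\totlength$ and simultaneously certify optimality of both strategies. Items (i)--(iii) then follow at once, and the non-tree claim will come from running the upper-bound argument with a strictly shorter covering tour.

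\textbf{Upper bound (searcher).} First I record an elementary averaging fact: if $\cycle$ is any closed tour of length $L$ covering all of $\contnet$ and $\cyclealt$ is its time-reversal, then for a point $h$ the first-passage time of $h$ under $\cycle$ is some $a$ and under $\cyclealt$ is $L-b$, where $b\ge a$ is the last-passage time of $h$ under $\cycle$; hence the tour-then-reverse mixture yields expected first-passage time $(a+L-b)/2\le L/2$. In a tree the Chinese postman tours have length $2\totlength$ (each edge traversed twice, by \cref{le:CPC}(i)), and this family is closed under reversal, so pairing each tour in the \ac{UCPS} with its reverse gives $\ex_\mixed[\tau(h)]\le\totlength$ for every $h$, with equality exactly at the leaf tips (where $a=b$). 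Thus the searcher can guarantee $\totlength$.

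\textbf{Lower bound (hider) --- the crux.} Fix any unit-speed trajectory $f$ from $\source$; I must show the \ac{EBD}-average of the capture time is at least $\totlength$. The structural property I will use is that $\ebd$ is supported on the leaf tips and assigns to every maximal subtree $\subtree_{\edge}$ below a branching vertex the mass $\length(\subtree_{\edge})/\totlength$, recursively. I proceed by induction on the number of branching vertices. Since all mass lies beyond the first branching vertex $\vertex$, any retreat of $f$ up the initial trunk only delays capture, so it suffices to treat trajectories that reach $\vertex$ as early as possible and thereafter stay in the ``bouquet'' of subtrees $\subtree_1,\dots,\subtree_{\nout}$ rooted at $\vertex$ with lengths $L_1,\dots,L_{\nout}$, adding the trunk length back to every capture time. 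For the bouquet, a depth-first order that fully explores the subtrees one at a time, returning to $\vertex$ between them, yields --- by the inductive hypothesis together with the telescoping identity $\sum_i L_i^2+2\sum_{k<i}L_iL_k=(\sum_i L_i)^2$ --- an \ac{EBD}-average equal to exactly $\sum_i L_i$, \emph{independently of the order}. The main difficulty is to show that no \emph{interleaved} trajectory does better. Here I exploit the feature special to the \ac{EBD} that the priority index $(\length(\subtree_i)/\totlength)/(2\length(\subtree_i))$ is the same constant for every subtree; an adjacent-transposition exchange argument then shows that interrupting the exploration of one subtree to work on another never decreases the weighted latency, so the infimum over all trajectories is attained by a depth-first order and equals $\sum_i L_i$. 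With the trunk term this is $\totlength$, closing the induction.

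\textbf{Conclusion and non-tree case.} The two bounds meet, giving $\val(\boldsymbol 1)=\totlength$ with the \ac{UCPS} and the \ac{EBD} optimal, which is (i)--(iii). If $\contnet$ is not a tree it contains a cycle, hence admits a closed covering tour of length $L<2\totlength$ (the continuous analogue of \cref{le:CPC}(ii)); applying the tour-plus-reverse averaging to this shorter tour bounds the searcher's guarantee, and therefore the value, by $L/2<\totlength$. I expect the exchange argument ruling out interleaving to be the main obstacle: the depth-first computation and the reversal averaging are routine once set up, but the per-point relation between covered length and collected hiding mass is not pointwise favorable to the hider --- it only balances on average --- which is precisely why one must invoke the index-equalizing property of the \ac{EBD} rather than argue time-slice by time-slice.
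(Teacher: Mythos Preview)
The paper does not give its own proof of this lemma: it is quoted as a classical result of \citet{Gal:SJCO1979} and \citet[Theorem~3.21]{AlpGal:Kluwer2003} and then used as a black box in the proof of \cref{pr:tree}. So there is no in-paper argument to compare against; I can only assess your sketch on its own terms and against the cited sources. Your upper bound and the non-tree clause are correct and standard --- the pair-with-reverse averaging is exactly the classical argument, and closure of the set of Chinese postman tours under reversal transfers it to the \ac{UCPS}. The depth-first computation with the telescoping identity is also correct and shows that every depth-first order yields exactly $\sum_i L_i$ against the \ac{EBD}.

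The genuine gap is precisely where you flag it: ruling out interleaved trajectories. Your stated reason --- that the index $(L_i/\totlength)/(2L_i)$ is the same for every subtree --- only shows that all depth-first \emph{orders} are equally good; it does not control partial excursions. For a partial excursion into $\subtree_i$ of duration $a$, the \ac{EBD} mass captured need not equal $a/(2\totlength)$, because the \ac{EBD} sits on leaf tips and a partial excursion may or may not reach one. Thus the density of an individual excursion can be either above or below $1/(2\totlength)$, and a single adjacent transposition toward depth-first can move the weighted latency in either direction; you do not get the monotone descent your exchange argument needs. The cited proofs close this gap by first establishing, as a separate structural lemma, that on a tree the searcher has a best response to \emph{any} hiding distribution that never re-enters a subtree once left, and only then invoking the telescoping calculation you wrote down. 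That reduction-to-depth-first step is the missing ingredient in your sketch.
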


\begin{proof}[Proof of \cref{pr:tree}]
If $\graph$ is a tree, the result follows from  \cref{le:Alpern-Gall-T3.21}.
Indeed, in the discrete setting, hiding on edges that are not leaves is strictly dominated. Similarly in the continuous setting, hiding at a point of the tree which is not terminal is strictly dominated. Hence the \ac{UCPS} guarantees the value of the continuous game in the discrete one---with the natural mapping. Moreover, since the set of hiding strategies in the discrete setting is a subset of the set of hiding strategies on the continuous setting---again with the natural mapping---the \ac{EBD} guarantees in the discrete game the value of the continuous one.

If $\graph$ is not a tree, suppose that the searcher uniformly chooses between any Chinese postman tour, and let the hider choose an edge $\edge$.  For any fixed Chinese postman tour of length $n$, $\edge$ has position $k$ in the cycle and position $n-k+1$ in the reverse cycle. 
By \cref{le:CPC}, $n\le 2\card \edges -2$, hence, the payoff is at most 
\[
\frac{k + 2\card \edges -2 - k +1}{2} = \card \edges - \frac{1}{2} < \card \edges. \qedhere
\] 
\end{proof}

\begin{proposition}
\label{pr:eul}
Let $\graph=(\vertices,\edges)$.
In a  \ac{DSG}  $\game=\angles{\graph,\source,\boldsymbol{1}}$ we have
\begin{equation}\label{eq:Val(1)}
\val(\boldsymbol{1})\geq \frac{\card \edges +1}{2}.
\end{equation}
Moreover, if $\card \edges>1$, then 
\begin{equation}\label{eq:Val(1)=}
\val(\boldsymbol{1})=\frac{\card \edges +1}{2}.
\end{equation}
if and only if $\graph$ is Eulerian.
In this case, the \ac{UD} on $\edges$ and the \ac{UES} are optimal strategies.
\end{proposition}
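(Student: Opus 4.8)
My plan is to prove the bound \eqref{eq:Val(1)} from the hider's side, the matching upper bound when $\graph$ is Eulerian from the searcher's side, and the converse by a squeezing argument. \emph{Lower bound.} First I would let the hider play $\ud$. Fix any pure searcher trajectory: she crosses exactly one edge per stage, so the first-passage times $\tau_{\edge}$ (the first stage at which $\edge$ is traversed) of the $\card\edges$ edges are $\card\edges$ \emph{distinct} positive integers. Hence $\sum_{\edge}\tau_{\edge}\ge 1+2+\dots+\card\edges=\card\edges(\card\edges+1)/2$, and averaging over edges gives $\pay(\ud,\mixed)=\tfrac{1}{\card\edges}\ex_{\mixed}[\sum_{\edge}\tau_{\edge}]\ge(\card\edges+1)/2$ for every pure, and hence by taking expectations every behavior, searcher strategy $\mixed$. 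Thus $\ud$ guarantees the hider $(\card\edges+1)/2$, which is \eqref{eq:Val(1)}.

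\emph{Eulerian implies equality.} Assume $\graph$ is Eulerian and let the searcher play the \ac{UES}, i.e.\ mix uniformly over the Eulerian cycles rooted at $\source$. Each such cycle crosses every edge exactly once, at a distinct position in $\{1,\dots,\card\edges\}$. Reversal $\cycle\mapsto\cyclealt$ is an involution on $\source$-rooted Eulerian cycles that preserves the uniform law and sends the position $k$ of an edge to $\card\edges+1-k$. Consequently the position $K_{\edge}$ of each edge and $\card\edges+1-K_{\edge}$ have the same law, so $\ex[K_{\edge}]=(\card\edges+1)/2$. Hence the \ac{UES} holds every hider edge to expected capture time $(\card\edges+1)/2$, giving $\val(\boldsymbol 1)\le(\card\edges+1)/2$; together with the lower bound this proves \eqref{eq:Val(1)=} and the optimality of $\ud$ and the \ac{UES}.

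\emph{Equality implies Eulerian.} Suppose $\card\edges>1$ and $\val(\boldsymbol 1)=(\card\edges+1)/2$. Since every edge can be covered within a Chinese postman tour, whose length is at most $2\card\edges$ by \cref{le:CPC}, I may restrict the searcher to walks of bounded length; her strategy set is then a compact simplex and an optimal $\mixed$ exists with $\max_{\edge}\pay(\edge,\mixed)=(\card\edges+1)/2$. Combined with the lower bound $\tfrac{1}{\card\edges}\sum_{\edge}\pay(\edge,\mixed)\ge(\card\edges+1)/2$, the squeeze forces $\pay(\edge,\mixed)=(\card\edges+1)/2$ for \emph{every} edge and, simultaneously, $\sum_{\edge}\tau_{\edge}=\card\edges(\card\edges+1)/2$ almost surely under $\mixed$; equivalently every trajectory in the support of $\mixed$ crosses all edges during its first $\card\edges$ stages, i.e.\ it begins with an Eulerian trail from $\source$. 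If all these trails are closed then $\graph$ is Eulerian and we are done. Otherwise $\graph$ has exactly two odd-degree vertices, namely $\source$ and some $\sink\neq\source$, and every supporting trajectory is an \emph{open} Eulerian trail from $\source$ to $\sink$.

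\emph{Main obstacle.} The crux is to rule out this open case, and here the reversal symmetry used above breaks down because the reverse of an $\source$--$\sink$ trail starts at $\sink\neq\source$. The underlying asymmetry is that stage $1$ is always spent on an edge incident to $\source$ and stage $\card\edges$ on an edge incident to $\sink$, so edges near $\sink$ are systematically crossed late and cannot all attain expected position $(\card\edges+1)/2$. I would make this precise by bounding $\sum_{\edge\ni\sink}\ex_{\mixed}[\tau_{\edge}]$ from below: the final crossing is pinned at stage $\card\edges$ while no edge at $\sink$ is crossed at stage $1$, which already settles $\degr(\sink)=1$ (the unique edge at $\sink$ then has $\pay=\card\edges>(\card\edges+1)/2$ since $\card\edges>1$) and the small instances. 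The general case needs a finer accounting of the positions of the edges incident to $\sink$ --- equivalently, exhibiting a hider deviation from $\ud$ that shifts mass toward $\sink$ and strictly beats $(\card\edges+1)/2$. Any such bound contradicts $\pay(\edge,\mixed)=(\card\edges+1)/2$ for all $\edge$, so the open case is impossible and $\graph$ must be Eulerian. I expect this positional balancing argument for open trails to be the main technical difficulty.
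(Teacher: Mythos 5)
Your lower bound and your ``Eulerian implies equality'' step are correct and essentially identical to the paper's proof: the paper also plays the uniform distribution for the hider and observes that distinct first-passage times force the average capture time to be at least $(\card\edges+1)/2$, and its upper bound on Eulerian graphs is (by explicit reference to the argument in \cref{pr:tree}) exactly your reversal pairing, which matches each Eulerian cycle with its reverse so that every edge has expected position $(\card\edges+1)/2$. Your write-up of these two steps is, if anything, more careful than the paper's.

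The genuine gap is the one you flag yourself: the converse direction is not proved. Your reduction is sound --- equality forces an optimal searcher strategy to be supported on trajectories whose first $\card\edges$ steps form an Eulerian trail from $\source$, and since closed circuits and open trails cannot coexist in a connected graph, either $\graph$ is Eulerian or $\graph$ has exactly two odd vertices $\source$ and $\sink$ and every supporting trajectory is an open $\source$--$\sink$ trail with $\pay(\edge,\mixed)=(\card\edges+1)/2$ for every edge $\edge$. But you only dispose of $\degr(\sink)=1$, and the ``finer accounting'' you hope for is not routine. The consecutive-pair bookkeeping of visits to $\sink$ (the last edge pinned at stage $\card\edges$, each intermediate visit occupying a pair of consecutive stages) yields a contradiction only when $\degr(\sink)=1$; for $\degr(\sink)\geq 3$ the resulting inequalities are all satisfiable, so no purely positional counting at $\source$ and $\sink$ can finish the job. (Also note a small slip: an edge incident to $\sink$ \emph{can} be crossed at stage $1$ when $\{\source,\sink\}\in\edges$.) Worse, the balancing you must exclude is genuinely achievable once one leaves simple graphs: on two vertices joined by three parallel edges, mixing uniformly over the $3!$ open trails gives every edge expected position exactly $2=(\card\edges+1)/2$, so the value equals the lower bound although the graph is not Eulerian. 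Any correct completion must therefore exploit that $\graph$ is simple (in the paper's model edges are vertex pairs), which your sketch nowhere does; as written, the proposal is an incomplete proof.

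For comparison, be aware that the paper's own justification of this direction is a single clause --- ``if the graph is not Eulerian, then an edge is visited twice'' --- which, read literally, overlooks precisely the case you isolate: on a semi-Eulerian graph rooted at an odd vertex the searcher can traverse every edge exactly once in her first $\card\edges$ moves. So you have correctly identified a real lacuna rather than missed an easy argument available in the paper; but identifying the obstacle is not the same as resolving it, and closing the $\degr(\sink)\geq 3$ case requires an argument that neither your proposal nor the paper's proof supplies.
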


\begin{proof}
Suppose the hider hides uniformly over $\edges$. Now let the searcher choose any sequence of edges (without necessarily following a path in $\graph$). Then if the searcher does not search the same edge twice during his $\card \edges$ first picks, the payoff is $(\card \edges +1)/2$, hence the lower bound.  
Suppose $\card{\edges} > 1$,  it is clear that this bound is reached only in Eulerian graphs, following an Eulerian cycle, because, if the graph is not Eulerian, then an edge is visited twice.
Finally, using an  argument similar to the one used in  \cref{pr:tree}, we can show that the uniform Eulerian strategy yields the payoff $(\card \edges +1)/2$ against any strategy of the hider.
\end{proof}

Together, \cref{pr:tree,pr:eul} yield the next theorem, whose continuous version is a cornerstone of the search game literature. 
It gives bounds on the value of deterministic search games played on any graphs. Moreover, it shows that Eulerian graphs and trees are the two extreme classes of graphs in term of value of the game. 

\begin{theorem}
For any graph $\graph=(\vertices,\edges)$, the value of the  \ac{DSG}  $\game=\angles{\graph,\source,\boldsymbol{1}}$ satisfies 
\begin{equation}\label{eq:bounds-Val(1)}
\frac{\card{\edges}+1}{2}\leq \val(\boldsymbol{1}) \leq \card{\edges}.
\end{equation}
Moreover, if $\card{\edges} > 1$, the upper bound is reached if and only if $\graph$ is a tree and the lower bound is reached if and only if $\graph$ is an Eulerian graph.

If $\graph$ is an Eulerian graph, then the \ac{UD} on $\edges$ and the \ac{UES}  are optimal strategies.

If $\graph$ is a  tree,  then the \ac{EBD} and the \ac{UCPS} are optimal strategies.
\end{theorem}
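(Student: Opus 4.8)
The plan is to obtain the theorem as a direct conjunction of \cref{pr:tree,pr:eul}, which together already contain every assertion to be proved; there is no new argument required, only an assembly of the two results.

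First I would establish the two-sided bound \eqref{eq:bounds-Val(1)}. The upper inequality $\val(\boldsymbol{1}) \le \card{\edges}$ is precisely \eqref{eq:val(1)}, and the lower inequality $(\card{\edges}+1)/2 \le \val(\boldsymbol{1})$ is precisely \eqref{eq:Val(1)}. Chaining the two yields \eqref{eq:bounds-Val(1)} at once.

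Next I would address the two equality characterizations and the associated optimal strategies. That the upper bound is attained exactly when $\graph$ is a tree is the \emph{moreover} clause of \cref{pr:tree}, and in that case the optimality of the \ac{EBD} and the \ac{UCPS} is also asserted there. That, under the hypothesis $\card{\edges} > 1$, the lower bound is attained exactly when $\graph$ is Eulerian is the \emph{moreover} clause of \cref{pr:eul}, where the optimality of the \ac{UD} on $\edges$ and of the \ac{UES} is likewise recorded. I would simply quote these clauses.

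The only delicate point, which I would flag explicitly, is the role of the hypothesis $\card{\edges} > 1$ in the lower-bound characterization. When $\card{\edges} = 1$ the two bounds collapse, since $(\card{\edges}+1)/2 = 1 = \card{\edges}$, and the single-edge graph is a tree that is not Eulerian yet attains the lower bound; thus the equivalence ``lower bound attained if and only if Eulerian'' would fail without excluding this degenerate case. Since all the substantive work---the continuous-network comparison of \cref{le:Alpern-Gall-T3.21}, the Chinese-postman-length estimate of \cref{le:CPC}, and the uniform-hiding lower bound---has already been carried out inside the two propositions, I expect no genuine obstacle here: the theorem is a clean repackaging of what precedes it into a single statement, and the ``hard part,'' such as it is, lives entirely in the proofs of \cref{pr:tree,pr:eul}.
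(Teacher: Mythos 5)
Your proposal coincides with the paper's treatment: the paper offers no separate proof of this theorem, stating only that \cref{pr:tree,pr:eul} together yield it, which is exactly the assembly you carry out. Your explicit remark that the hypothesis $\card{\edges}>1$ is needed because the single-edge graph is a tree attaining the collapsed lower bound $(\card{\edges}+1)/2=\card{\edges}=1$ without being Eulerian is correct and makes precise a point the paper leaves implicit.
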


In  \cref{se:tree,se:eulerian} we focus on subclasses of these two extreme classes that are Eurelian graphs and trees. Both subclasses  have a recursive structure. We generalize the strategies of interest to our stochastic setting and derive bounds on the value. We also prove that these strategies are optimal in the cases of circles and lines.

\section{Value of the game}
\label{se:value}

\begin{proposition}
\label{pr:valueexistence}
For any $\probas\in(0,1]^\edges$ the  \ac{SSG} $\angles{\graph,\source,\probas}$ has a value $\val(\probas)$. Moreover both players have an optimal strategy.
\end{proposition}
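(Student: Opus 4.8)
The plan is to exploit the asymmetry between the two players: the hider acts only once, at stage $0$, so his pure-strategy set is the finite set $\edges$ and his mixed-strategy set $\hull(\edges)$ is a compact simplex, on which $\pay(\mixedh,\argdot)=\sum_{\edge\in\edges}\mixedh_{\edge}\,\pay(\edge,\argdot)$ is affine; hence $\sup_{\mixedh\in\hull(\edges)}\pay(\mixedh,\mixed)=\max_{\edge\in\edges}\pay(\edge,\mixed)$, and everything reduces to a minimax statement over the \emph{compact} set $\hull(\edges)$ and the infinite searcher side. Write $\overline{\val}=\inf_{\mixed\in\pures}\max_{\edge}\pay(\edge,\mixed)$ and $\underline{\val}=\sup_{\mixedh\in\hull(\edges)}\inf_{\mixed\in\pures}\pay(\mixedh,\mixed)$, so that $\underline{\val}\le\overline{\val}$ always. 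First I would show $\overline{\val}<\infty$: let the searcher pick at every stage an active incident edge uniformly at random, waiting if none is active (allowed since $\vertex_{\per-1}\in\neighbor(\graph_{\per},\vertex_{\per-1})$). As $\graph$ is finite and connected and every $\proba_{\edge}>0$, the searcher's position is an irreducible positive-recurrent Markov chain on $\vertices$, so each edge is traversed in finite expected time; thus $\max_{\edge}\pay(\edge,\mixed)<\infty$ for this strategy and $\overline{\val}<\infty$.

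Next I would install the topology needed on the searcher's side. Since each $\histories_{\per}$ is finite, $\histories$ is countable and the set $\Pi$ of pure searcher strategies is a closed subset of $\prod_{\history\in\histories}\vertices$, hence compact and metrizable by Tychonoff's theorem; the set $\hull(\Pi)$ of mixed strategies, with the weak topology, is then compact, and we extend the payoff by $\pay(\edge,\mu)=\int_{\Pi}\pay(\edge,\pure)\dd\mu(\pure)$. Writing $\cycletime_{\edge}=\inf\braces{\per\ge 1\vert\braces{\vertex_{\per-1},\vertex_{\per}}=\edge}$ and
\begin{equation*}
\pay(\edge,\pure)=\sum_{\per\ge 0}\prob_{\pure}(\cycletime_{\edge}>\per),
\end{equation*}
each probability $\prob_{\pure}(\cycletime_{\edge}>\per)$ depends on $\pure$ only through its values on the finitely many histories of length at most $\per-1$, and is therefore continuous and bounded on $\Pi$; hence $\mu\mapsto\int\prob_{\pure}(\cycletime_{\edge}>\per)\dd\mu(\pure)$ is continuous on $\hull(\Pi)$, and $\mu\mapsto\pay(\edge,\mu)$, as an increasing limit of such functions, is lower semicontinuous and affine. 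Consequently $\mu\mapsto\max_{\edge}\pay(\edge,\mu)$ is lower semicontinuous on the compact set $\hull(\Pi)$ and attains its infimum, which by Kuhn's theorem equals $\overline{\val}$: the searcher has an optimal strategy $\mu^{\ast}$ with $\pay(\edge,\mu^{\ast})\le\overline{\val}$ for every $\edge$.

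On the hider's side, $w(\mixedh)\coloneqq\inf_{\mu}\pay(\mixedh,\mu)$ is concave, as an infimum of functions affine in $\mixedh$, and upper semicontinuous, as an infimum of continuous functions; it therefore attains its maximum $\underline{\val}$ on the compact simplex $\hull(\edges)$ at some $\mixedh^{\ast}$. To close the gap I would use a separation argument. Consider the achievable set
\begin{equation*}
C=\setdef{c\in\R^{\edges}}{\exists\,\mu\in\hull(\Pi),\ \pay(\edge,\mu)\le c_{\edge}\ \text{for all }\edge\in\edges},
\end{equation*}
which is convex by affineness of $\mu\mapsto\pay(\edge,\mu)$, closed by compactness of $\hull(\Pi)$ together with lower semicontinuity, and upward closed by construction. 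Since $\overline{\val}<\infty$, the set $C$ is disjoint from the open convex set $\setdef{c\in\R^{\edges}}{c_{\edge}<\overline{\val}\ \text{for all }\edge\in\edges}$; a separating hyperplane, whose normal is forced to be nonnegative by upward closedness of $C$, yields after normalization a hider strategy $\mixedh^{\ast}\in\hull(\edges)$ with $w(\mixedh^{\ast})=\inf_{c\in C}\inner{\mixedh^{\ast}}{c}\ge\overline{\val}$. Since also $w(\mixedh^{\ast})\le\underline{\val}\le\overline{\val}$, all these quantities coincide, so $\val(\probas)=\underline{\val}=\overline{\val}$ and the maximizer of $w$ is an optimal hiding strategy. (Equivalently, one may verify the hypotheses of Sion's minimax theorem on $\hull(\Pi)\times\hull(\edges)$ after restricting to the compact sublevel set $\setdef{\mu}{\max_{\edge}\pay(\edge,\mu)\le\overline{\val}+1}$, on which the payoff is finite.)

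I expect the main obstacle to be the analytic control of the searcher's infinite-horizon, non-compact-looking strategy space: compactifying it correctly, establishing lower semicontinuity of an unbounded and possibly infinite expected hitting time, and justifying the separation (equivalently, discharging the finiteness hypothesis of the minimax theorem). The finiteness step is exactly what rules out the degenerate situation in which every searcher strategy leaves some edge unsearched, and it is there that the standing assumption $\proba_{\edge}>0$ for every $\edge$ enters in an essential way.
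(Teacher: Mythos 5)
Your route is genuinely different from the paper's. The paper proves this proposition by recasting the \ac{SSG} as a \emph{positive} zero-sum stochastic game with finite state and action spaces (enlarging the state space to record the hider's edge and adding an absorbing state) and then invoking a known theorem of Flesch, Predtetchinski and Sudderth, which directly yields the value and an optimal \emph{stationary} strategy for the minimizer; the hider's optimal strategy then comes for free from finiteness of $\edges$. You instead give a self-contained compactness argument, and your searcher side is sound: the compactification of pure strategies via Tychonoff, the representation $\pay(\edge,\pure)=\sum_{\per\ge 0}\prob_{\pure}(\cycletime_{\edge}>\per)$ with each summand depending on finitely many coordinates, the resulting lower semicontinuity and affineness of $\mu\mapsto\pay(\edge,\mu)$, and the attainment of $\overline{\val}$ on $\hull(\Pi)$ are all correct, as is the finiteness step via the sweeping random walk. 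What the paper's reduction buys that yours does not is stationarity of the searcher's optimal strategy and brevity; what yours buys is independence from the cited black box.

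There is, however, one genuine gap, and it occurs at the duality step. The set $C$ only represents strategies $\mu$ whose payoff vector is finite in \emph{every} coordinate. But when the separating $\mixedh^{\ast}$ is not of full support---the typical case: optimal hiding distributions such as the \ac{EBD} on a tree charge only leaf edges---the searcher's natural best responses never traverse the null edges, so $\pay(\edge,\mu)=+\infty$ off the support of $\mixedh^{\ast}$, and such $\mu$ contribute no point of $C$ at all. Consequently the direction $w(\mixedh^{\ast})\ge\inf_{c\in C}\inner{\mixedh^{\ast}}{c}$ of your asserted identity does not follow as written; the same strategies invalidate the claim that $w$ is upper semicontinuous ``as an infimum of continuous functions,'' since $\mixedh\mapsto\sum_{\edge}\mixedh_{\edge}\pay(\edge,\mu)$ with an infinite coefficient is lower, not upper, semicontinuous; and your parenthetical Sion fallback inherits the identical defect, because restricting to the sublevel set $\setdef{\mu}{\max_{\edge}\pay(\edge,\mu)\le\overline{\val}+1}$ excludes precisely these $\mu$ without bounding $\pay(\mixedh^{\ast},\mu)$ for them. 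The gap is repairable by a patching argument: given any $\mu$ and $T\ge 1$, let $\mu_{T}$ follow $\mu$ up to stage $T$ and then switch to your sweeping random walk, whose expected time to cover all remaining edges from any position is bounded by some finite $K$ (again using $\proba_{\edge}>0$). Then $\pay(\edge,\mu_{T})\le\ex_{\mu}\bracks{\cycletime_{\edge}\wedge T}+(T+K)\prob_{\mu}(\cycletime_{\edge}>T)$ is finite for all $\edge$, so $(\pay(\edge,\mu_{T}))_{\edge}\in C$, and for every $\edge$ with $\pay(\edge,\mu)<\infty$ the right-hand side converges to $\pay(\edge,\mu)$ as $T\to\infty$. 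Hence $\inf_{c\in C}\inner{\mixedh^{\ast}}{c}\le\pay(\mixedh^{\ast},\mu)$ for every $\mu$ (the case of an infinite coordinate inside the support being trivial), which restores $w(\mixedh^{\ast})\ge\overline{\val}$ and closes the proof. With this lemma added, your argument is complete and constitutes a valid alternative to the paper's proof.
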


The proof of \cref{pr:valueexistence} is postponed to \cref{se:appendix}.

\begin{proposition}
\label{pr:boundsonthevalue}
For all $\probas\in (0,1]^\edges$ the value of the  \ac{SSG} $\angles{\graph,\source,\probas}$ satisfies
\begin{equation}\label{eq:value-bounds}
\frac{\val(\boldsymbol{1})}{1-(1-\min_{\edge\in\edges} \proba_{\edge})^{\maxdegree}}\leq \val(\probas)\leq  \frac{\val(\boldsymbol{1})}{\min_{\edge\in\edges} \proba_{\edge}},
\end{equation}
where $\maxdegree$ is the maximum degree of $\graph$. 

As a consequence
\begin{equation}\label{eq:limit-value}
\val(\probas)\to\val(\boldsymbol{1}), \text{ as }\min_{\edge\in\edges}\proba_{\edge} \to 1.
\end{equation}
\end{proposition}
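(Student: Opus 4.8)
The plan is to establish the two inequalities in \eqref{eq:value-bounds} separately and then deduce \eqref{eq:limit-value} by squeezing, since both $1/\min_{\edge}\proba_{\edge}$ and $1-(1-\min_{\edge}\proba_{\edge})^{\maxdegree}$ tend to $1$ as $\min_{\edge}\proba_{\edge}\to1$. For the upper bound I would let the searcher emulate the deterministic game. Since the deterministic game $\angles{\graph,\source,\boldsymbol{1}}$ has a value and optimal strategies, fix an optimal mixed searcher strategy $\mixed^{\ast}$, viewed as a distribution over pure walks, and let the searcher use the \emph{patient} strategy $\tilde{\mixed}$: sample a walk from $\mixed^{\ast}$ and follow its prescribed sequence of edges, waiting at the current vertex until the next prescribed edge is active and then crossing it. Crossing a prescribed edge $\edge$ costs a geometric number of stages with mean $1/\proba_{\edge}\le 1/\min_{\edgealt}\proba_{\edgealt}$, so, conditionally on the sampled walk and on the hider's edge being the $m$-th edge crossed, the expected number of stages to locate him is $\sum_{j=1}^{m}1/\proba_{\edge^{(j)}}\le m/\min_{\edgealt}\proba_{\edgealt}$. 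Averaging over the walk and using that the deterministic payoff against any hiding edge is at most $\val(\boldsymbol{1})$ gives $\pay(\edge,\tilde{\mixed})\le\val(\boldsymbol{1})/\min_{\edgealt}\proba_{\edgealt}$ for every $\edge$, hence $\val(\probas)\le\val(\boldsymbol{1})/\min_{\edge}\proba_{\edge}$.

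For the lower bound I would let the hider use an optimal deterministic strategy $\mixedh^{\ast}$ and compare the number of \emph{stages} with the number of \emph{moves}. Fix any searcher strategy, let $N_{\per}$ be the number of edges crossed up to stage $\per$, and let $\tau$ be the stage at which the hider is found, so that $N_{\tau}$ is the move that locates him. The first ingredient is $\ex[N_{\tau}]\ge\val(\boldsymbol{1})$: conditionally on the realized random graph process (and the searcher's own randomization), the searcher's trajectory with its waiting stages deleted is an admissible deterministic search walk, against which $\mixedh^{\ast}$ guarantees an expected locating move of at least $\val(\boldsymbol{1})$, and averaging over the process preserves the inequality. The second ingredient is a per-stage advance estimate: at any stage the searcher can move only if at least one of the at most $\maxdegree$ edges incident to her current vertex is active. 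Bounding the probability of this event, uniformly over her position and the past, by $1-(1-\min_{\edge}\proba_{\edge})^{\maxdegree}$ makes $Z_{\per}=\bigl(1-(1-\min_{\edge}\proba_{\edge})^{\maxdegree}\bigr)\per-N_{\per}$ a submartingale. Optional stopping at $\tau$, legitimate because $\ex[\tau]<\infty$ by the upper bound, then yields $\bigl(1-(1-\min_{\edge}\proba_{\edge})^{\maxdegree}\bigr)\ex[\tau]\ge\ex[N_{\tau}]\ge\val(\boldsymbol{1})$, which is the claimed lower bound.

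The main obstacle is exactly this per-stage advance estimate. One must control, uniformly in the searcher's current vertex and in the entire history, the probability that she is able to cross some edge and tie it to the two global constants $\maxdegree$ and $\min_{\edge}\proba_{\edge}$; this is the place where the maximum degree and the least edge-probability enter, and it is the step that requires the most care, since the searcher observes the active graph before moving and can exploit high-probability incident edges. Everything else in the lower bound is bookkeeping: identifying deleted-wait trajectories with admissible deterministic walks, and verifying the integrability needed for optional stopping, which the upper bound supplies. Granting both inequalities, the convergence \eqref{eq:limit-value} is immediate, as the two bounding constants both converge to $1$ when $\min_{\edge}\proba_{\edge}\to1$.
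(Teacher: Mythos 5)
Your proposal follows the paper's proof exactly in structure, and on the searcher's side also in substance: the patient emulation of an optimal deterministic search strategy, with geometric crossing times of mean $1/\proba_{\edge}\le 1/\min_{\edgealt\in\edges}\proba_{\edgealt}$, is precisely the paper's argument for the upper bound, and it is correct (staying put is always feasible since $\vertex\in\neighbor(\graph',\vertex)$, and the Wald-style averaging over the sampled walk is sound). On the hider's side, your submartingale/optional-stopping formalization, together with the conditioning argument giving $\ex[N_{\tau}]\ge\val(\boldsymbol{1})$, is a careful rendering of what the paper does in one line. However, the step you yourself single out as the crux is genuinely false as you state it. The probability that the searcher can move from $\vertex$ is $1-\prod_{\edge\ni\vertex}(1-\proba_{\edge})$, and since each factor obeys $1-\proba_{\edge}\le 1-\min_{\edgealt\in\edges}\proba_{\edgealt}$, the product is bounded \emph{above} by $(1-\min_{\edgealt}\proba_{\edgealt})^{\degr(\vertex)}$; the inequality your submartingale needs, namely $\prod_{\edge\ni\vertex}(1-\proba_{\edge})\ge(1-\min_{\edgealt}\proba_{\edgealt})^{\maxdegree}$, goes the wrong way. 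A vertex with one always-active incident edge has move probability $1$, strictly exceeding $1-(1-\min_{\edgealt}\proba_{\edgealt})^{\maxdegree}$. Worse, the defect cannot be patched within your scheme, because the lower bound in \cref{eq:value-bounds} is itself false for heterogeneous $\probas$: take a line of three edges rooted at one endpoint, with $\proba_{\edge_{1}}=\proba_{\edge_{2}}=1$ on the two edges nearest the root and $\proba_{\edge_{3}}=\eps$ on the far leaf edge. Then $\val(\boldsymbol{1})=3$ by \cref{pr:tree}, while walking straight to the penultimate vertex and then waiting for the leaf edge shows $\val(\probas)\le 2+1/\eps$, which is strictly smaller than $3/(1-(1-\eps)^{2})$ for every $\eps<1/2$.

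You should know that the paper's own proof contains exactly the same overstatement (``in expectation the searcher waits at least $(1-(1-\min_{\edge\in\edges}\proba_{\edge})^{\maxdegree})^{-1}$ for a neighbor edge to be active''), so your blind attempt has faithfully matched the published argument, defect included; your contribution is to have made explicit, via the submartingale, exactly where it breaks. Everything closes if $\min$ is replaced by $\max_{\edge\in\edges}\proba_{\edge}$ in the lower bound: then $1-\proba_{\edge}\ge 1-\max_{\edgealt}\proba_{\edgealt}$ yields $\prod_{\edge\ni\vertex}(1-\proba_{\edge})\ge(1-\max_{\edgealt}\proba_{\edgealt})^{\degr(\vertex)}\ge(1-\max_{\edgealt}\proba_{\edgealt})^{\maxdegree}$, and your process $Z_{\per}$ is a genuine submartingale. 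Alternatively, under the homogeneity assumption \cref{eq:pe-equal} in force from \cref{se:tree} onward, $1-(1-\proba)^{\degr(\vertex)}\le 1-(1-\proba)^{\maxdegree}$ and your argument works verbatim. Either repair leaves the limit \cref{eq:limit-value} intact, since $\max_{\edge\in\edges}\proba_{\edge}\to1$ whenever $\min_{\edge\in\edges}\proba_{\edge}\to1$.
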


\begin{proof}
The hider guarantees the lower bound by playing as in the  \ac{DSG}. In expectation the searcher waits at least $(1-(1-\min_{\edge\in\edges} \proba_{\edge})^{\maxdegree})^{-1}$ for a neighbor edge to be active.

We map a strategy of the searcher in the  \ac{DSG} to the strategy in the  \ac{SSG} following the same path, even if it means waiting for an edge to be active. The searcher guarantees the upper bound since it takes in expectation at most $1/\min_{\edge\in\edges} \proba_{\edge}$ stages to cross a single edge.
\end{proof}

\section{Dynamic programming}
\label{se:dynamic-programming}

The next proposition is a dynamic programming formula which allows to find best responses of the searcher against a known hiding distribution of the hider. The activation parameters $\probas\in(0,1]^\edges$ are fixed and we omit them.

For all $\graph_{1}\in \graphs$, $\vertex_{0}\in \vertices$, $\subedges\subset \edges$ and $\mixedh\in \simplex(\subedges)$, we define
\begin{equation}\label{eq:Value}
\Val(\graph_{1},\vertex_{0},\subedges,\mixedh) = \min_{\pure\in\pures} \ex_{\pure}\left[\sum_{\edge\in \subedges}\mixedh(\edge)\inf\{\per\geq 1 \vert \{\vertex_{\per-1},\vertex_{\per}\} = \edge\}\right].
\end{equation}
This quantity represents the value of the (one player) game in which the searcher knows  the graph $\graph_{1}$ and the distribution $\mixedh$ of the hider on $\subedges\subset \edges$, starts from $\vertex_{0}$ and chooses immediately $\vertex_{1}\in \neighbor(\graph_{1},\vertex_{0})$ at the first stage, before $\graph_{2}$ is drawn (and then the game continues). In other words, in the true game, a graph $\graph_{1}$ is drawn before Sally starts playing. Here the graph $\graph_{1}$ is already fixed and Sally starts playing immediately.
\begin{proposition}
If $\subedges = \varnothing$, then $\Val(\graph_{1},\vertex_{0},\subedges,\mixedh)=0$. Otherwise
\begin{equation}
\label{eq:dyn-prog}
\Val(\graph_{1},\vertex_{0},\subedges,\mixedh) =\\
 1+ \min_{\vertex_{1}\in \neighbor(\graph_{1},\vertex_{0})} \mixedh(\subedges\setminus \{\vertex_{0},\vertex_{1}\}) \ex\left[\Val\left(\graph_{2},\vertex_{1},\subedges\setminus\{\vertex_{0},\vertex_{1}\},\mixedh^{\{\vertex_{0},\vertex_{1}\}}\right)\right],
\end{equation}
where $\mixedh^{\{\vertex_{0},\vertex_{1}\}}(\cdot) = \frac{1}{\mixedh(\subedges\setminus \{\vertex_{0},\vertex_{1}\})}\mixedh(\cdot)$, and the randomness in  \cref{eq:dyn-prog} is over $\graph_{2}$.
\end{proposition}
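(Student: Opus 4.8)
The plan is to prove the recursion by a first-step analysis of the objective in \eqref{eq:Value}, combined with the standard dynamic-programming interchange of minimization and expectation. The case $\subedges=\varnothing$ is immediate, since the sum in \eqref{eq:Value} is then empty and $\Val(\graph_{1},\vertex_{0},\varnothing,\mixedh)=0$. For the recursion I would fix the searcher's first move $\vertex_{1}\in\neighbor(\graph_{1},\vertex_{0})$ and split the inner sum according to whether an edge $\edge\in\subedges$ is the first traversed edge $\{\vertex_{0},\vertex_{1}\}$ or not. If $\{\vertex_{0},\vertex_{1}\}\in\subedges$, its first-passage time $\inf\{\per\ge1\vert\{\vertex_{\per-1},\vertex_{\per}\}=\edge\}$ equals $1$, because the searcher crosses it at stage $1$. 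For every other edge $\edge\in\subedges\setminus\{\vertex_{0},\vertex_{1}\}$, this first-passage time equals $1$ plus the first-passage time of $\edge$ along the time-shifted trajectory $\vertex_{1},\vertex_{2},\dots$ that now starts at $\vertex_{1}$; denote the latter by $T'_{\edge}$.

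Next I would collect the unit contributions: the mass $\mixedh(\{\vertex_{0},\vertex_{1}\})$ contributes $\mixedh(\{\vertex_{0},\vertex_{1}\})$, and each remaining edge contributes $\mixedh(\edge)$ from its leading $1$, so these add up to $\mixedh(\subedges)=1$ because $\mixedh\in\simplex(\subedges)$. Writing $q\coloneqq\mixedh(\subedges\setminus\{\vertex_{0},\vertex_{1}\})$ and factoring it out of the residual sum converts the weights $\mixedh(\edge)/q$ into the conditional distribution $\mixedh^{\{\vertex_{0},\vertex_{1}\}}$. Hence, conditionally on the first move $\vertex_{1}$, the objective takes the form $1+q\sum_{\edge}\mixedh^{\{\vertex_{0},\vertex_{1}\}}(\edge)\,T'_{\edge}$. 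When $q=0$ all the mass lies on $\{\vertex_{0},\vertex_{1}\}$, the hider is found at stage $1$, and both this expression and the claimed right-hand side of \eqref{eq:dyn-prog} equal $1$ under the convention that the product $q\,\ex[\,\cdots\,]$ is $0$; otherwise $\mixedh^{\{\vertex_{0},\vertex_{1}\}}$ is a genuine probability measure on $\subedges\setminus\{\vertex_{0},\vertex_{1}\}$.

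It then remains to optimize. Since $\pure\mapsto\ex_{\pure}[\,\cdots\,]$ is linear in the searcher's strategy, the minimum is attained at a pure strategy, which factors into its first move $\vertex_{1}$---selected knowing only the fixed graph $\graph_{1}$, before $\graph_{2}$ is drawn---and a continuation policy adapted to the ensuing history $(\graph_{2},\vertex_{1},\graph_{3},\dots)$. Because $q$ depends only on $\vertex_{1}$, I would pull the constant $1$ and the factor $q$ out of the minimization, leaving the continuation to minimize $\ex[\sum_{\edge}\mixedh^{\{\vertex_{0},\vertex_{1}\}}(\edge)\,T'_{\edge}]$. For each realized $\graph_{2}$ this continuation problem is, by definition \eqref{eq:Value}, exactly the game with value $\Val(\graph_{2},\vertex_{1},\subedges\setminus\{\vertex_{0},\vertex_{1}\},\mixedh^{\{\vertex_{0},\vertex_{1}\}})$: its start vertex is $\vertex_{1}$, its known initial graph is $\graph_{2}$, and its first-passage times are the $T'_{\edge}$. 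Taking the minimum over $\vertex_{1}\in\neighbor(\graph_{1},\vertex_{0})$ then yields \eqref{eq:dyn-prog}.

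The main obstacle is the interchange of minimization over continuations with the expectation over $\graph_{2}$, namely $\min_{\text{cont}}\ex_{\graph_{2}}[\,\cdot\,]=\ex_{\graph_{2}}[\min_{\text{cont}}(\,\cdot\,)]$. This is legitimate because the searcher observes $\graph_{2}$ before continuing, so the continuation may be chosen separately on each event $\{\graph_{2}=\graph'\}$, while the law of $\graph_{2}$ is unaffected by the continuation; this is the adapted-control (measurable-selection) argument underlying Bellman's principle. Some care is also needed to justify rigorously that restricting to pure strategies and factoring them into a first move plus an adapted continuation loses nothing, and to treat the degenerate weight $q=0$---both being routine once the first-step decomposition above is in place.
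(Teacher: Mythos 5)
Your proof is correct and follows essentially the same route as the paper's: a first-step analysis decomposing the expected search time into the unit cost of the first move plus, with weight $\mixedh(\subedges\setminus\{\vertex_{0},\vertex_{1}\})$, the conditional continuation value under $\mixedh^{\{\vertex_{0},\vertex_{1}\}}$. The paper's argument is just a compressed version of yours (conditioning on whether the hider is found at stage $1$), leaving implicit the min/expectation interchange over $\graph_{2}$ and the degenerate case $q=0$ that you spell out.
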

\begin{proof}
If the searcher finds the hider in the first stage, which happens with probability $\mixedh(\{\vertex_{0},\vertex_{1}\})$, then the game ends and the continuation payoff is $0$. 
On the other hand, if the searcher does not find the hider in the first stage, which happens with probability $1-\mixedh(\{\vertex_{0},\vertex_{1}\})$, then the game continues with continuation payoff 
\begin{equation}\label{eq:cont-payoff}
\ex\left[\Val\left(\graph_{2},\vertex_{1},\subedges\setminus\{\vertex_{0},\vertex_{1}\},\mixedh^{\{\vertex_{0},\vertex_{1}\}}\right)\right],
\end{equation} 
since the edge $\{\vertex_{0},\vertex_{1}\}$ has been visited and the next graph $\graph_{2}$ is yet to be drawn.
\end{proof}

\section{Stochastic search games on trees}
\label{se:tree}

In this section and in the following one we assume
\begin{equation}\label{eq:pe-equal}
\proba_{\edge}=\proba\in(0,1],\quad\text{for all }\edge\in\edges.
\end{equation}
Moreover in this section we assume that $\graph$ is a tree $\tree$ with origin $\source$. 
Remark that in a tree, any strategy of the hider that consists in hiding in edges other than leaf edges is strictly dominated.

\subsection{Depth-first strategies and the equal branching density}

We define a particular class of strategies of the searcher in trees, called depth-first strategies. They have the property of never going backward at a vertex before having visited the whole subtree. They generalize the Chinese postman tours of the deterministic setting.

\begin{definition}
A \acfi{DFS}\acused{DFS} on a tree is a strategy of the searcher that prescribes the following, when arriving at a vertex:
\begin{itemize}
\item 
if the set of un-searched and active outgoing edges is non-empty, take one of its edges (possibly at random);
\item 
if all the un-searched outgoing edges are inactive, wait;
\item 
if all outgoing edges have been searched and the backward edge is active, take it;
\item 
if all outgoing edges have been searched and the backward edge is inactive, wait.
\end{itemize}
The \acfi{UDFS}\acused{UDFS} is the \ac{DFS} that, at every vertex, randomizes uniformly between all active and un-searched outgoing edges.
\end{definition}

\begin{definition}
A \ac{DFS} on $\tree$ induces an expected time to travel from the origin $\source$ back to it, covering the entire tree. 
This is called the \emph{cycle time} of $\tree$ and is denoted $\cycletime(\source)$.  
For any vertex or edge $\vertexedge$, the cycle time of $\subtree_{\vertexedge}$ is denoted $\cycletime(\vertexedge)$.
\end{definition}

Notice that $\cycletime(\source)$ depends on $\probas$, but is independent of the choice of \ac{DFS}.

We now generalize \cref{de:EBD} to the stochastic setting, where the relevant quantity is not the number of edges of the subtrees, but rather their cycle times.

\begin{definition}\label{de:EBD-p}
The \acfi{EBD}\acused{EBD} $\ebd$ of the hider is the unique distribution on the leaf edges such that, for every branching vertex $\vertex$ with outgoing edges $\edge_{1},\dots,\edge_n$, we have
\begin{equation}\label{eq:EBD-p}
\frac{\ebd(\edges_{\edge_{i}})}{\cycletime(\edge_{i})} = \frac{\ebd(\edges_{\edge_{1}})}{\cycletime(\edge_{1})}, \quad\text{for all }i\in \{1,\dots,n\}.
\end{equation}
\end{definition}
Notice that \cref{de:EBD,de:EBD-p} coincide when $\proba_{\edge}=1$ for all $\edge\in\edges$.

\subsection{Binary trees}
\subsubsection{Generalities}

In these sections we consider games played on binary trees, i.e., trees with at most two outgoing edges at any vertex.
We call $\trees$ the set of binary trees. 
\acp{DFS} allow us to obtain an upper bound for the value, when $\proba$ is large enough. 
We also prove that this upper bound is the value of the game in which Sally is restricted to play  \acp{DFS}.
As a by-product we will show that, for every $\proba\in(0,1]$, the \ac{UDFS} and \ac{EBD} are a pair of optimal strategies when the game is played on a line.

\begin{definition}\label{de:UD}
Given a tree $\tree=(\vertices,\edges)$, we define the function $\diffedge\colon\trees\to\R$ recursively as follows, where, for the sake of simplicity we use the notations $\diffedge(\edge)=\diffedge(\tree_\edge)$ and $\diffedge(\vertex)=\diffedge(\tree_\vertex)$:

If $\tree$ has a single edge $\edge=(\source,\vertex)$, as in \cref{fig:oneedge}, then 
\begin{equation}\label{eq:Lambda-single-edge}
\diffedge(\source)=\diffedge(\edge)=\diffedge(\vertex)=0.
\end{equation}

\begin{figure}[!ht]
  \begin{center}
    \begin{tikzpicture}[scale=1]
\node[above] at (0,0){$\source$};
\node[below] at (0,-1){$\vertex$};

\node at (0,0){$\bullet$};
\node at (0,-1){$\bullet$};

\draw[-] (0,0) to (0,-1);
\end{tikzpicture}
    \caption{One edge}
    \label{fig:oneedge}
  \end{center}
\end{figure}  
  
If $\degr(\source)=1$ and $\edge=(\source,\vertex)$, as in \cref{fig:Odegree1}, then $\diffedge(\source)=\diffedge(\edge)=\diffedge(\vertex)$.
  
\begin{figure}[!ht]
  \begin{center}
    \begin{tikzpicture}[scale=1]
\node[above] at (0,0){$\source$};
\node[below] at (0,-1){$\vertex$};
\node[left] at (0,-0.5){$\edge$};

\node at (0,0){$\bullet$};
\node at (0,-1){$\bullet$};
\node at (0,-2){$\subtree_{\vertex}$};

\draw[-] (0,0) to (0,-1);
\draw [line width=.5pt,dash pattern=on 1pt off 2pt] (0,-2) circle(1cm);
\end{tikzpicture}
    \caption{$\source$ has degree $1$}
    \label{fig:Odegree1}
  \end{center}
\end{figure}  

If $\tree$ has two edges and $\degr(\source)=2$, as in \cref{fig:twoedges}, then 
\begin{equation}\label{eq:Lambda-2edges}
\diffedge(\source) = \frac{1}{2}\left(\frac{1}{1-(1-\proba)^{2}}-\frac{1}{\proba}\right).
\end{equation}
  
\begin{figure}[!ht]
  \begin{center}
    \begin{tikzpicture}[scale=1]
\node[above] at (0,0){$\source$};
\node[below] at (-1,-1){$\vertex_{1}$};
\node[below] at (1,-1){$\vertex_{2}$};

\node at (0,0){$\bullet$};
\node at (-1,-1){$\bullet$};
\node at (1,-1){$\bullet$};

\draw[-] (0,0) to (-1,-1);
\draw[-] (0,0) to (1,-1);
\end{tikzpicture}
    \caption{Two edges}
    \label{fig:twoedges}
  \end{center}
\end{figure}

If $\degr(\source)=2$,  $\edge_{1}=(\source,\vertex_{1})$, and $\edge_{2}=(\source,\vertex_{2})$, as in \cref{fig:Odegree2}, then 
\begin{equation}\label{eq:O-degree-2}
\diffedge(\source) = \frac{\cycletime(\vertex_{1})}{\cycletime(\vertex_{1})+\cycletime(\vertex_{2})}\diffedge(\vertex_{1})+\frac{\cycletime(\vertex_{2})}{\cycletime(\vertex_{1})+\cycletime(\vertex_{2})}\diffedge(\vertex_{2})+\frac{1}{2}\left(\frac{1}{1-(1-\proba)^{2}}-\frac{1}{\proba}\right).
\end{equation}

\begin{figure}[!ht]
  \begin{center}
    \begin{tikzpicture}[scale=1]
\node[above] at (0,0){$\source$};
\node[below] at (-1,-1){$\vertex_{1}$};
\node[below] at (1,-1){$\vertex_{2}$};
\node[left] at (-0.5,-0.5){$\edge_{1}$};
\node[right] at (0.5,-0.5){$\edge_{2}$};

\node at (0,0){$\bullet$};
\node at (-1,-1){$\bullet$};
\node at (1,-1){$\bullet$};
\node at (-1.25,-2){$\subtree_{\vertex_{1}}$};
\node at (1.25,-2){$\subtree_{\vertex_{2}}$};

\draw[-] (0,0) to (-1,-1);
\draw[-] (0,0) to (1,-1);
\draw [line width=.5pt,dash pattern=on 1pt off 2pt] (-1.25,-2) circle(1cm);
\draw [line width=.5pt,dash pattern=on 1pt off 2pt] (1.25,-2) circle(1cm);
\end{tikzpicture}
    \caption{$\source$ has degree $2$}
    \label{fig:Odegree2}
  \end{center}
\end{figure}
\end{definition}

The function $\diffedge$  depends on $\proba$, but we do not make the dependence explicit.

\begin{lemma}\label{le:delta} 
Let $\vertex$ be a branching vertex with outgoing edges $\edge_{1}$ and $\edge_{2}$. Then for all $\proba\in(0,1]$,
\[
\frac{\vert\diffedge(\edge_{1})\vert + \vert\diffedge(\edge_{2})\vert}{\cycletime(\edge_{1})+\cycletime(\edge_{2})}<\frac{1}{2}.
\]
\end{lemma}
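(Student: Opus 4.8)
The plan is to derive \cref{le:delta} from a single scalar inequality satisfied by $\diffedge$ at \emph{every} subtree root, and then sum it over the two children of the branching vertex. First I would record the structural facts hidden in \cref{de:UD}. For any edge $\edge$ with head $\vertex$, the root of $\tree_{\edge}$ has degree one, so the degree-one clause of \cref{de:UD} gives $\diffedge(\edge)=\diffedge(\vertex)$; moreover, covering $\subtree_{\edge}$ forces the searcher to cross $\edge$ once going down and once coming back, whence $\cycletime(\edge)=\cycletime(\vertex)+2/\proba$. Abbreviating the expected waiting times $W_1=1/\proba$ and $W_2=1/(1-(1-\proba)^2)$, I note that $W_2\le W_1$ because $\proba\le 1$, so the branching increment appearing in \eqref{eq:Lambda-2edges} and \eqref{eq:O-degree-2} is $c:=\tfrac12(W_2-W_1)\le 0$ with $\abs{c}=\tfrac{1}{2\proba}-\tfrac12 W_2$. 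Finally, at a branching vertex $\vertex$ with children $\vertex_1,\vertex_2$ a depth-first tour pays $W_2$ for the first descent and $1/\proba$ for each of the three remaining crossings, so $\cycletime(\vertex)=W_2+3/\proba+\cycletime(\vertex_1)+\cycletime(\vertex_2)$; only the consequence $\cycletime(\vertex)\ge\cycletime(\vertex_1)+\cycletime(\vertex_2)+3/\proba$ is actually needed below.

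The heart of the proof is the auxiliary claim
\[
\abs{\diffedge(\vertex)}<\tfrac12\,\cycletime(\vertex)+\tfrac1\proba
\qquad\text{for every subtree root }\vertex,
\]
which I would prove by induction on the number of edges of $\subtree_{\vertex}$. When $\vertex$ is a leaf both sides vanish apart from the surviving term $1/\proba>0$. When $\vertex$ has a single outgoing edge to a child $\vertex'$, the two quantities change by $\diffedge(\vertex)=\diffedge(\vertex')$ and $\cycletime(\vertex)=\cycletime(\vertex')+2/\proba$, so the induction hypothesis at $\vertex'$ immediately yields the even sharper bound $\abs{\diffedge(\vertex)}<\tfrac12\cycletime(\vertex)$.

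The only substantial case is a branching vertex. Writing $T_i=\cycletime(\vertex_i)$ and $D_i=\diffedge(\vertex_i)$, the triangle inequality applied to \eqref{eq:O-degree-2} gives
\[
\abs{\diffedge(\vertex)}\le\frac{T_1\abs{D_1}+T_2\abs{D_2}}{T_1+T_2}+\abs{c}.
\]
Substituting the induction hypothesis $\abs{D_i}<\tfrac12 T_i+\tfrac1\proba$, the weighted $1/\proba$ terms collapse to $1/\proba$, while the weighted $T_i/2$ terms give $\tfrac{T_1^2+T_2^2}{2(T_1+T_2)}\le\tfrac12(T_1+T_2)$ because $2T_1T_2\ge0$. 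Hence $\abs{\diffedge(\vertex)}<\tfrac12(T_1+T_2)+\tfrac1\proba+\abs{c}$, and since $\abs{c}=\tfrac1{2\proba}-\tfrac12 W_2\le\tfrac12 W_2+\tfrac3{2\proba}$ this is bounded by $\tfrac12\cycletime(\vertex)+\tfrac1\proba$, strictly so by the induction hypothesis; the degenerate sub-case $T_1=T_2=0$ is read off directly from \eqref{eq:Lambda-2edges}. The lemma then follows by applying the claim to the heads $\vertex_1,\vertex_2$ of $\edge_1,\edge_2$: using $\diffedge(\edge_i)=\diffedge(\vertex_i)$ and $\cycletime(\edge_i)=\cycletime(\vertex_i)+2/\proba$,
\[
\abs{\diffedge(\edge_1)}+\abs{\diffedge(\edge_2)}<\tfrac12(T_1+T_2)+\tfrac2\proba=\tfrac12\bigl(\cycletime(\edge_1)+\cycletime(\edge_2)\bigr).
\]

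The main obstacle is pinning down the correct inductive statement: the additive constant $1/\proba$ is exactly what makes the leaf base case strict, survives the convex combination in the branching step, and sums over the two children to precisely the bound of \cref{le:delta}. The remaining care lies in the branching step, where one must combine the convexity estimate $\tfrac{T_1^2+T_2^2}{2(T_1+T_2)}\le\tfrac12(T_1+T_2)$ with a bound on the nonpositive increment $c$ by the waiting slack inside $\cycletime(\vertex)$, while keeping the inequality strict even when one child subtree is trivial.
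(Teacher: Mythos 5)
Your proof is correct, and it takes a genuinely different route from the paper's. The paper proves \cref{le:delta} by inducting directly on the ratio statement itself: in the induction step it passes to the \emph{first} branching vertex inside each subtree $\subtree_{\edge_{i}}$ (degree-one chains leave $\diffedge$ unchanged while inflating $\cycletime$, so they can only help), bounds the convex combination in the recursion by $\tfrac12\max\bigl(\cycletime(\edge_{i1}'),\cycletime(\edge_{i2}')\bigr)$ via the induction hypothesis, and then compares numerator and denominator of the ratio wholesale, absorbing the two branching increments $\tfrac12\abs[\big]{\tfrac{1}{1-(1-\proba)^{2}}-\tfrac1\proba}$ into the waiting-time slack $\tfrac2\proba+\tfrac{2}{1-(1-\proba)^{2}}$ sitting in the denominator. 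You instead prove the strengthened per-vertex invariant $\abs{\diffedge(\vertex)}<\tfrac12\cycletime(\vertex)+\tfrac1\proba$ by a three-case induction and then sum it over the two children; your additive slack $1/\proba$ plays exactly the role that the max-versus-sum slack plays in the paper, and since the final summation consumes that slack exactly, the strictness of the invariant is essential --- which you correctly arranged through the strict base case at leaves. I checked the branching step: with $\cycletime(\vertex)=\cycletime(\vertex_{1})+\cycletime(\vertex_{2})+3/\proba+\bigl(1-(1-\proba)^{2}\bigr)^{-1}$ (which matches the formula used in the paper's proof of \cref{th:BQDFequalizingleafs}), your requirement on the increment reduces to $\abs{c}\le\tfrac{3}{2\proba}+\tfrac12\bigl(1-(1-\proba)^{2}\bigr)^{-1}$, which holds with plenty of room since $\abs{c}\le\tfrac1{2\proba}$, and your convexity estimate $\tfrac{T_{1}^{2}+T_{2}^{2}}{2(T_{1}+T_{2})}\le\tfrac12(T_{1}+T_{2})$ is valid including the case $T_{1}=0$. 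Your version buys modularity: degree-one vertices need no special device, the degenerate two-leaf case and the strictness bookkeeping are transparent, and the lemma drops out in one line; the paper's version avoids having to guess an auxiliary statement but pays with the first-branching-vertex bookkeeping and leaves the mixed case (one subtree a bare path) implicit. One further point in your favor: you read the degree-two recursion with vertex weights $\cycletime(\vertex_{i})$, which is \cref{de:UD} as literally written, whereas the paper's own proofs slip into edge weights $\cycletime(\edge_{i})$; your estimate is insensitive to this discrepancy, because with edge weights the induction hypothesis reads $\abs{D_{i}}<\tfrac12\bigl(T_{i}+2/\proba\bigr)$, the same convexity bound gives $\tfrac12(T_{1}+T_{2})+\tfrac2\proba$, and the absorption $\abs{c}\le\tfrac1{2\proba}$ still closes the gap.
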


The proof of \cref{le:delta} is postponed to \cref{se:appendix}.
We now define the biased depth-first (behavior) strategy of the searcher.

\begin{definition}
\label{def:bdfs}
Assume that vertex $\vertex$ has outgoing edges $\edge_{1}$ and $\edge_{2}$ and they are both active and un-searched. 
A \ac{DFS} strategy $\mixed_{\bdfs}$ is called the \acfi{BDFS}\acused{BDFS} if it takes $\edge_{1}$ with probability $\bdfs(\edge_{1})$ and $\edge_{2}$ with probability $\bdfs(\edge_{2})$, where
\begin{align}\label{eq:bdsf}
\bdfs(\edge_{1}) &= \proj_{[0,1]}\left( \frac{1}{2} + \frac{\diffedge(\edge_{1})-\diffedge(\edge_{2})}{\cycletime(\edge_{1})+\cycletime(\edge_{2})} \frac{1-(1-\proba)^{2}}{\proba^{2}}\right)\\
\bdfs(\edge_{2}) &=  1- \bdfs(\edge_{1}),
\end{align}
where $\proj_{[0,1]}$ indicates the projection on $[0,1]$. 
\end{definition}

\begin{theorem}
\label{th:BQDFequalizingleafs}
There exists $\proba_{0}\in(0,1)$ such that  for all $\proba\geq \proba_{0}$, the time to reach any leaf edge using the \ac{BDFS} is $\frac{1}{2}\cycletime(\source) + \diffedge(\source)$. 
Hence for all $\proba\geq \proba_{0}$, we have
\begin{equation}\label{eq:valp-less}
\val(\proba)\leq \frac{1}{2}\cycletime(\source) + \diffedge(\source).
\end{equation}
\end{theorem}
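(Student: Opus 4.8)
The plan is to prove, by induction on the structure of the binary tree, the stronger \emph{equalizing} statement: for $\proba$ close enough to $1$, the \ac{BDFS} started at the root $r$ of any subtree $\subtree$ traverses \emph{every} leaf edge of $\subtree$ at the same expected stage, namely $\tfrac12\cycletime(r)+\diffedge(r)$. The theorem is then the case $\subtree=\tree$, $r=\source$. The base case is the single edge of \cref{fig:oneedge}: there $\cycletime(\source)=2/\proba$, $\diffedge(\source)=0$, and the unique leaf edge is crossed in expected time $1/\proba=\tfrac12\cycletime(\source)+\diffedge(\source)$. The inductive step splits according to $\degr(\source)\in\{1,2\}$.

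If $\degr(\source)=1$ with outgoing edge $\edge=(\source,\vertex)$, the searcher first spends an expected $1/\proba$ stages crossing $\edge$ and then runs the \ac{BDFS} on $\subtree_{\vertex}$; by the induction hypothesis every leaf below is reached in expected time $1/\proba+\tfrac12\cycletime(\vertex)+\diffedge(\vertex)$, which is $\tfrac12\cycletime(\source)+\diffedge(\source)$ because $\cycletime(\source)=\cycletime(\vertex)+2/\proba$ and $\diffedge(\source)=\diffedge(\vertex)$. The substantial case is a branching root with edges $\edge_1,\edge_2$, which I would treat by a first-cycle analysis. Let $W=\bigl(1-(1-\proba)^2\bigr)^{-1}$ be the expected number of stages until the first stage at which some outgoing edge is active; on that stage the searcher descends. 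Since for i.i.d.\ trials the number of preceding failures is independent of the configuration on the first success, this waiting time is independent of which subtree is entered first; writing $b_1$ (resp.\ $b_2=1-b_1$) for the probability of entering $\subtree_{\edge_1}$ (resp.\ $\subtree_{\edge_2}$) first, \cref{def:bdfs} gives $b_1=\tfrac{(1-\proba)+\proba\,\bdfs(\edge_1)}{2-\proba}$.

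Fixing a leaf $\ell$ of $\subtree_{\edge_1}$ and conditioning on the first descent: if $\subtree_{\edge_1}$ is entered first, $\ell$ is reached in expected time $W+\tfrac12\cycletime(\vertex_1)+\diffedge(\vertex_1)$ by the induction hypothesis; if $\subtree_{\edge_2}$ is entered first, the searcher additionally completes a cycle of $\subtree_{\vertex_2}$ and crosses $\edge_2$ up and $\edge_1$ down, an extra expected $\cycletime(\vertex_2)+2/\proba=\cycletime(\edge_2)$. Averaging,
\begin{equation*}
\pay(\ell,\mixed_{\bdfs})=W+\tfrac12\cycletime(\vertex_1)+\diffedge(\vertex_1)+b_2\,\cycletime(\edge_2),
\end{equation*}
and symmetrically for a leaf of $\subtree_{\edge_2}$. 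Using $\cycletime(\edge_i)=\cycletime(\vertex_i)+2/\proba$, $\diffedge(\edge_i)=\diffedge(\vertex_i)$, and $\tfrac{1-(1-\proba)^2}{\proba^2}=\tfrac{2-\proba}{\proba}$, one checks that the choice \cref{eq:bdsf} is precisely the one producing $b_1=\tfrac12+\tfrac{\diffedge(\edge_1)-\diffedge(\edge_2)}{\cycletime(\edge_1)+\cycletime(\edge_2)}$, which makes the two arrival expectations equal; substituting back, the common value telescopes, through the defining recursion \cref{eq:O-degree-2} for $\diffedge(\source)$ and the identity $\cycletime(\source)=\cycletime(\edge_1)+\cycletime(\edge_2)+\tfrac1{1-(1-\proba)^2}-\tfrac1\proba$, to exactly $\tfrac12\cycletime(\source)+\diffedge(\source)$, closing the induction.

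Two points remain. First, the above uses that the clipping $\proj_{[0,1]}$ in \cref{eq:bdsf} is \emph{inactive}, for only then does $b_1$ take the equalizing value; the argument inside the projection differs from $\tfrac12$ by at most $\tfrac{\abs{\diffedge(\edge_1)}+\abs{\diffedge(\edge_2)}}{\cycletime(\edge_1)+\cycletime(\edge_2)}\cdot\tfrac{2-\proba}{\proba}$, and since \cref{le:delta} bounds the first factor by $\tfrac12$ while $\tfrac{2-\proba}{\proba}\to1$ as $\proba\to1$ with all quantities continuous in $\proba$, a single $\proba_0\in(0,1)$ works simultaneously at the finitely many branching vertices. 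Second, for the value bound I would note that any non-leaf edge is crossed on the way down, hence strictly before any leaf edge beneath it, so $\pay(\edge,\mixed_{\bdfs})<\tfrac12\cycletime(\source)+\diffedge(\source)$ for non-leaf $\edge$; thus $\max_{\edge\in\edges}\pay(\edge,\mixed_{\bdfs})=\tfrac12\cycletime(\source)+\diffedge(\source)$, and since this is the hider's best reply to $\mixed_{\bdfs}$, we obtain $\val(\proba)\le\tfrac12\cycletime(\source)+\diffedge(\source)$, i.e.\ \cref{eq:valp-less}. The main obstacle is the branching bookkeeping — correctly handling the shared initial wait $W$, the asymmetric backtracking cost of visiting a subtree second, and verifying that the equalized common value simplifies exactly (which rests on the $\cycletime(\edge_i)=\cycletime(\vertex_i)+2/\proba$ interplay and the precise $\diffedge$ recursion); making $\proba_0$ uniform over all vertices is a secondary but necessary step.
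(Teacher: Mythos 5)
Your proof is correct and takes essentially the same route as the paper's: the same structural induction on the tree with the same three cases, the same appeal to \cref{le:delta} to make the projection in \cref{eq:bdsf} inactive for $\proba$ near $1$, and the same identities $\cycletime(\edge_{i})=\cycletime(\vertex_{i})+2/\proba$ and $\cycletime(\edge_{1})+\cycletime(\edge_{2})=\cycletime(\source)+\proba^{-1}-\bigl(1-(1-\proba)^{2}\bigr)^{-1}$, your first-success conditioning via $W$ and $b_{1}$ being an algebraically equivalent repackaging of the paper's one-stage recursion $\pay(\edge_{1},\mixed_{\bdfs})=(1-\proba)^{2}\bigl(1+\pay(\edge_{1},\mixed_{\bdfs})\bigr)+\cdots$. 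Your only additions are to spell out two steps the paper leaves implicit: the uniformity of $\proba_{0}$ over the finitely many branching vertices, and the observation that non-leaf edges are crossed strictly before any leaf edge beneath them, which justifies the ``hence'' leading to \cref{eq:valp-less}.
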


The proof of \cref{th:BQDFequalizingleafs} is postponed to \cref{se:appendix}.

\begin{theorem}
\label{th:EBDequalizingQDF}
The \ac{EBD} of the hider yields the same payoff against any \ac{DFS} of the searcher, and this payoff is $\frac{1}{2}\cycletime(\source) + \diffedge(\source)$.
\end{theorem}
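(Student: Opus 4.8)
The plan is to prove, by induction on $\card\edges$, the slightly stronger statement that against an \emph{arbitrary} \ac{DFS} the \ac{EBD} yields expected detection time $\frac12\cycletime(\source)+\diffedge(\source)$, measured from the instant the searcher sits at $\source$; the independence of this quantity from the chosen \ac{DFS} is exactly the equalizing claim. No restriction on $\proba$ is needed, in contrast with \cref{th:BQDFequalizingleafs}, because the hider's \ac{EBD} is a legitimate distribution for every $\proba\in(0,1]$ (there is no projection to clip). For the base case of a single leaf edge the \ac{EBD} is the point mass there, the expected crossing time is $1/\proba$, and $\frac12\cycletime(\source)+\diffedge(\source)=\frac12\cdot\frac2\proba+0=\frac1\proba$ by \eqref{eq:Lambda-single-edge}.

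I would then follow the recursive case split of \cref{de:UD}. When $\degr(\source)=1$ with $\edge=(\source,\vertex)$, the searcher crosses $\edge$ in expected time $1/\proba$, independently of the subtree below, and then runs a \ac{DFS} on $\subtree_{\vertex}$; since the \ac{EBD} of $\tree$ restricts (after renormalisation) to the \ac{EBD} of $\subtree_{\vertex}$, the induction hypothesis gives detection time $\frac1\proba+\frac12\cycletime(\vertex)+\diffedge(\vertex)$, which equals $\frac12\cycletime(\source)+\diffedge(\source)$ because $\cycletime(\source)=\frac2\proba+\cycletime(\vertex)$ and $\diffedge(\source)=\diffedge(\vertex)$. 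The substantive case is $\degr(\source)=2$, with outgoing edges $\edge_{1},\edge_{2}$ leading to $\subtree_{\vertex_{1}},\subtree_{\vertex_{2}}$.

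Here I would condition on which subtree $\subtree_{\edge_{i}}$ contains the hider---probability $\ebd(\edges_{\edge_{i}})=\cycletime(\edge_{i})/(\cycletime(\edge_{1})+\cycletime(\edge_{2}))$ by \eqref{eq:EBD-p}---and on which subtree the \ac{DFS} enters first. Let $N$ be the shared waiting time until some outgoing edge of $\source$ is first active, so $\ex[N]=1/(1-(1-\proba)^{2})$, and let $q_{i}$ be the probability the \ac{DFS} enters $\subtree_{\edge_{i}}$ first; the $q_{i}$ carry all of the \ac{DFS}'s branching randomisation, are independent of the hider's location, and satisfy $q_{1}+q_{2}=1$. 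Two observations drive the computation. First, conditional on entering the hider's subtree first, the detection time is $N$ plus the within-subtree detection time, whose expectation is $\frac12\cycletime(\vertex_{i})+\diffedge(\vertex_{i})$ by the induction hypothesis and is \ac{DFS}-independent; moreover $N$ is independent of which subtree is entered first, so it contributes only $\ex[N]$. Second, the penalty for entering the hider's subtree \emph{second} is exactly $\cycletime(\edge_{j})=\cycletime(\vertex_{j})+2/\proba$---cover the other subtree, cross back to $\source$, cross into the hider's subtree. Hence $\ex\bigl[\text{time}\mid\text{hider in }\subtree_{\edge_{i}}\bigr]=\ex[N]+\frac12\cycletime(\vertex_{i})+\diffedge(\vertex_{i})+q_{j}\,\cycletime(\edge_{j})$ with $\{i,j\}=\{1,2\}$. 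Averaging against the \ac{EBD} weights, the order-dependent cross term collapses:
\[
\ebd(\edges_{\edge_{1}})\,q_{2}\,\cycletime(\edge_{2})+\ebd(\edges_{\edge_{2}})\,q_{1}\,\cycletime(\edge_{1})=\frac{\cycletime(\edge_{1})\,\cycletime(\edge_{2})}{\cycletime(\edge_{1})+\cycletime(\edge_{2})}\,(q_{1}+q_{2}),
\]
and $q_{1}+q_{2}=1$ removes every \ac{DFS}-dependent quantity. This cancellation is precisely why the \ac{EBD} weights the subtrees by their cycle times $\cycletime(\edge_{i})$.

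What then remains is algebra: collecting $\ex[N]$, the two cycle-time-weighted within-subtree detection times, and the cross term, and checking the sum equals $\frac12\cycletime(\source)+\diffedge(\source)$ with $\cycletime(\source)=\ex[N]+\cycletime(\vertex_{1})+\cycletime(\vertex_{2})+3/\proba$. Writing $a=\cycletime(\vertex_{1})$, $b=\cycletime(\vertex_{2})$, $w=2/\proba$, the terms not involving $\diffedge$ reduce via the identity $\frac{(a+b)^{2}}{2}+\frac{3w(a+b)}{2}+w^{2}=\frac12(a+b+w)(a+b+2w)$ to $\ex[N]+\frac12(a+b)+\frac1\proba=\frac12\cycletime(\source)+\frac12\bigl(\ex[N]-\frac1\proba\bigr)$, the last summand being the branching correction in \eqref{eq:O-degree-2}; the terms carrying $\diffedge(\vertex_{1}),\diffedge(\vertex_{2})$ assemble into the cycle-time-weighted average $\frac{\cycletime(\edge_{1})\diffedge(\vertex_{1})+\cycletime(\edge_{2})\diffedge(\vertex_{2})}{\cycletime(\edge_{1})+\cycletime(\edge_{2})}$---the same subtree weighting used by the \ac{EBD} in \eqref{eq:EBD-p}---so that the total is exactly $\diffedge(\source)$. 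I expect the main obstacle to be the bookkeeping of the shared wait $N$, which must be charged exactly once, at the first entry into the tree, and whose independence from the order of exploration is what lets the conditional expectations factor and the cross term cancel; keeping the timing conventions (arrival at a vertex versus completion of a crossing) consistent throughout the recursion is the delicate point.
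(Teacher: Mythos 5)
Your proof is correct, and in the decisive degree-$2$ case it takes a genuinely different route from the paper's. The paper's proof writes a one-stage recursion for the payoff of the \ac{EBD} against the two canonical pure \acp{DFS} $\pure(\edge_{1})$ and $\pure(\edge_{2})$ (take $\edge_{i}$ first when both root edges are active) and shows that the two payoffs coincide if and only if $\ebd(\edges_{\edge_{1}})/\ebd(\edges_{\edge_{2}})=\cycletime(\edge_{1})/\cycletime(\edge_{2})$; equalization against an arbitrary, possibly history-dependent \ac{DFS} then follows because, by the induction hypothesis, any such strategy's payoff is a convex combination of these two. You instead condition on the entry order directly, compressing all of the searcher's root randomization into $q_{1},q_{2}$, and exhibit the cancellation $\ebd(\edges_{\edge_{1}})q_{2}\cycletime(\edge_{2})+\ebd(\edges_{\edge_{2}})q_{1}\cycletime(\edge_{1})=\cycletime(\edge_{1})\cycletime(\edge_{2})/(\cycletime(\edge_{1})+\cycletime(\edge_{2}))$ in one stroke. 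Your independence remark about $N$ is actually dispensable---linearity of expectation already charges $\ex[N]$ exactly once---while the facts you genuinely need all hold: the cycle time of a subtree is \ac{DFS}-independent, the graph draws are i.i.d.\ so post-entry conditional expectations do not depend on the past, and the \ac{EBD} of $\tree$ conditions to the \ac{EBD} of each subtree because the branching constraints in \cref{de:EBD-p} are local. What your route buys is the explicit common payoff $\frac{1}{2}\cycletime(\source)+\diffedge(\source)$ for every $\proba\in(0,1]$: the paper's own proof of this theorem establishes only the equalizing property, with the numerical value otherwise inherited from \cref{th:BQDFequalizingleafs}, which is proved only for $\proba\geq\proba_{0}$; your direct computation closes that loose end.

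One bookkeeping point deserves mention. Your final assembly produces the edge-weighted average $\bigl(\cycletime(\edge_{1})\diffedge(\vertex_{1})+\cycletime(\edge_{2})\diffedge(\vertex_{2})\bigr)/\bigl(\cycletime(\edge_{1})+\cycletime(\edge_{2})\bigr)$, whereas \cref{eq:O-degree-2} as printed weights $\diffedge(\vertex_{i})$ by $\cycletime(\vertex_{i})$; since $\cycletime(\edge_{i})=\cycletime(\vertex_{i})+2/\proba$, the two conventions differ unless $\diffedge(\vertex_{1})=\diffedge(\vertex_{2})$. Your reading is the one the paper actually uses: the last display in the proof of \cref{th:BQDFequalizingleafs} identifies $\diffedge(\source)$ with the $\cycletime(\edge_{i})$-weighted average, and on the tree of \cref{fig:completeresolution} only the edge-weighted convention reproduces the stated value $(92-75\proba+15\proba^{2})/\bigl(\proba(15-7\proba)(2-\proba)\bigr)$. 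So your proof matches the intended $\diffedge$, and \cref{eq:O-degree-2} as displayed appears to carry a typo rather than your argument a gap.
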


The proof of \cref{th:EBDequalizingQDF} is postponed to \cref{se:appendix}.

Note that \cref{def:bdfs,th:BQDFequalizingleafs,th:EBDequalizingQDF} above have a superficial resemblance to results on the value and on biased depth-first strategies in \citep{Alp:SJCO2010,AlpLid:MOR2014}, where the searcher moves on a network at a speed that depends on her location and direction.

\cref{th:BQDFequalizingleafs,th:EBDequalizingQDF} imply that in a binary tree $\graph$, if \acp{DFS} are best responses to the \ac{EBD}, then there exists $\proba_{0}\in (0,1)$ such that for all $\proba\geq \proba_{0}$ the value of the game is $\frac{1}{2}\cycletime(\source)+\diffedge(\source)$. Moreover the \ac{BDFS} and the \ac{EBD} are optimal.

\cref{ex:no-best-response} below is an important counterexample, as it refutes the conjecture that \acp{DFS} are best responses to the \ac{EBD}.

\begin{example}\label{ex:no-best-response}
We study the game played on the tree represented in  \cref{fig:DFnotbestresponsetoEBD}.

\begin{figure}[h]
\begin{center}
\begin{tikzpicture}[scale=1]
\node[above] at (0,0){$\source$};
\node[left] at (-2.5,-2.5){$\vertex_{1}$};
\node[right] at (1,-1){$\vertex_{2}$};
\node[below] at (0.5,-2){$\vertex_{21}$};
\node[below] at (1.75,-3){$\vertex_{22}$};

\node at (0,0){$\bullet$};
\node at (-0.5,-0.5){$\bullet$};
\node at (-1,-1){$\bullet$};
\node at (-1.5,-1.5){$\bullet$};
\node at (-2,-2){$\bullet$};
\node at (-2.5,-2.5){$\bullet$};
\node at (1,-1){$\bullet$};
\node at (0.5,-2){$\bullet$};
\node at (1.5,-2){$\bullet$};
\node at (1.75,-3){$\bullet$};

\node[left,above] at (-0.5,-0.5){$\edge_{1}$};
\node[right,above] at (0.5,-0.5){$\edge_{2}$};
\node at (1.5,-1.5){$\edge_{22}$};
\node at (0.5,-1.5){$\edge_{21}$};

\node at (1.85,-2.5){$\edge_{22}'$};
\node at (-2.5,-2.1){$\edge_{1}'$};

\draw[-] (0,0) to (-2.5,-2.5);
\draw[-] (0,0) to (1,-1);
\draw[-] (1,-1) to (0.5,-2);
\draw[-] (1,-1) to (1.5,-2);
\draw[-] (1.5,-2) to (1.75,-3);
\end{tikzpicture}
\end{center}
\caption{A counter-example}
\label{fig:DFnotbestresponsetoEBD}
\end{figure}

Consider the case where Sally visits $\vertex_{22}$ before any other leaf vertex. 
When she plays a \ac{DFS}, this event has positive probability.  
Assume also that, when she has returned to $\vertex_{2}$, after visiting $\vertex_{22}$, the edge $\edge_{2}$ is active but $\edge_{21}$ is not. 
At this point she can either take edge $\edge_{2}$ and visit $\vertex_{1}$ before $\vertex_{21}$  or wait until $\edge_{21}$ becomes active and visit $\vertex_{21}$ before $\vertex_{1}$. The first choice yields a lower payoff to Sally.

Indeed, visiting $\vertex_{1}$ first yields the continuation payoff
\begin{align*}
\pay_{1} =	\ebd(\edge_{1}')\left(1+ \frac{5}{\proba}\right) + \ebd(\edge_{21})\left(1 + \frac{12}{\proba}\right),
\end{align*} 
whereas visiting $\vertex_{21}$ first yields the continuation payoff
\begin{align*}
\pay_{2} =	\ebd(\edge_{21})\left(1+ \frac{1}{\proba}\right) + \ebd(\edge_{1}')\left(1 + \frac{8}{\proba}\right).
\end{align*}
The sign $\pay_{1} - \pay_{2}$ is the same as the sign of $11 \ebd(\edge_{21}) - 3 \ebd(\edge_{1}')$, which is the same as 
\begin{equation*}
\frac{11}{3}\left(\frac{7}{\proba}+\frac{1}{1-(1-\proba)^{2}}\right)- \frac{30}{\proba},
\end{equation*} 
which is negative for all $\proba\in(0,1)$.
\end{example}

\subsubsection{A simple binary tree}
\label{subsec_simplebinarytree}
We now present a game played on a tree (\cref{fig:completeresolution}) for which we give the value and a pair of optimal strategies for any value of $\proba\in(0,1]$.

\begin{figure}[ht]
\begin{center}
\begin{tikzpicture}[scale=1]
\node[above] at (0,0){$\source$};
\node[left] at (-1,-1){$\vertex$};
\node[right] at (1,-1){$\vertex_{2}$};
\node[below] at (0.5,-2){$z$};
\node[below] at (1.5,-2){$t$};

\node at (0,0){$\bullet$};
\node at (-1,-1){$\bullet$};
\node at (1,-1){$\bullet$};
\node at (0.5,-2){$\bullet$};
\node at (1.5,-2){$\bullet$};

\node[left,above] at (-0.5,-0.5){$\edge_{1}$};
\node[right,above] at (0.5,-0.5){$\edge_{2}$};
\node at (1.5,-1.5){$\edge_{22}$};
\node at (0.5,-1.5){$\edge_{21}$};

\draw[-] (0,0) to (-1,-1);
\draw[-] (0,0) to (1,-1);
\draw[-] (1,-1) to (0.5,-2);
\draw[-] (1,-1) to (1.5,-2);
\end{tikzpicture}
\end{center}
\caption{A simple binary tree}
\label{fig:completeresolution}
\end{figure}
Let 
\begin{equation}\label{eq:p0}
\proba_{0} = \frac{9-\sqrt{65}}{8}\approx 0.12.
\end{equation} 

\paragraph{First case $\proba\geq \proba_{0}$:}
in this case, Sally's \ac{BDFS} and Harry's \ac{EBD} are a pair of optimal strategies. The value of the game is thus  
\[
\val(\proba) = \frac{1}{2}\cycletime(\source)+\diffedge(\source) = \frac{92-75\proba+15\proba^{2}}{\proba(15-7\proba)(2-\proba)}.
\]

\paragraph{Second case $\proba\leq \proba_{0}$:} 
Harry's strategy $\left(\frac{1}{3},\frac{1}{3},\frac{1}{3}\right)$  is optimal. We now describe an optimal strategy of Sally.
\begin{itemize}
\item If no  leaf edges have been visited:
\begin{itemize}
\item 
At $\source$: if $\edge_{1}$ is active, take it. Otherwise, if $\edge_{2}$ is active but  $\edge_{1}$ is not, take $\edge_{2}$.
\item 
At $\vertex_{2}$: take the first active edge between $\edge_{21}$ and $\edge_{22}$, drawing uniformly, if they both are.
\end{itemize}
\item 
If only $\edge_{1}$ has been visited, play the \ac{UDFS} in the continuation game.
\item 
If only $\edge_{21}$ (resp. $\edge_{22}$) has been visited, at $\vertex_{2}$:
\begin{itemize}
\item 
If $\edge_{22}$ (resp. $\edge_{21}$) is active, take it.
\item 
If $\edge_{2}$ is active but  $\edge_{22}$ (resp. $\edge_{21}$) is not, randomize, waiting at $\vertex_{2}$ with probability $\probbin(\proba)$ and taking $\edge_{22}$ (resp. $\edge_{21}$) with probability $1-\probbin(\proba)$.
\end{itemize}
\item 
If two leaf edges have been visited, go to the third leaf edge as quickly as possible.
\end{itemize}
The waiting probability $\probbin(\proba)$ is given by 
\[
\probbin(\proba) = \frac{8(2-\proba)-(1-\proba)(1+\proba)(2-\proba)}{8(2-\proba)(1-\proba)-\proba(1-\proba)^{2}}.
\]
The value of the game is 
\[
\val(\proba) = \frac{1}{3}\frac{37-33\proba+7\proba^{2}}{\proba(2-\proba)^{2}}.
\]

\subsubsection{The line}

We consider a \ac{SSG} played on a line. If the origin $\source$ is an extreme vertex, then the value of the game is $\card (\edges)/\proba$. We now suppose that the origin $\source$ is not an extreme vertex, and that the line has $\totedges=\length_{1}+\length_{2}$ edges ($\length_{1}$ on the left side of $\source$ and $\length_{2}$ on the right side) as shown in  \cref{fig:line}. 

\begin{figure}[ht]
\begin{center}
\begin{tikzpicture}[scale=0.7]
\node[above] at (0,0){$\source$};
\node[below,right] at (-2.5,-2.65){$\edge_{1}$};
\node[below,left] at (1.6,-1.6){$\edge_{2}$};

\node at (0,0){$\bullet$};
\node at (-1,-1){$\bullet$};
\node at (-2,-2){$\bullet$};
\node at (-3,-3){$\bullet$};
\node at (2,-2){$\bullet$};
\node at (1,-1){$\bullet$};
\node at (2,-2){$\bullet$};

\draw[-] (0,0) to (-3,-3);
\draw[-] (0,0) to (2,-2);
\end{tikzpicture}
\end{center}
\caption{The line with $\length_{1} = 3$ and $\length_{2}=2$}
\label{fig:line}
\end{figure}
In this case, for all $\proba\in (0,1]$ the \ac{BDFS} is the \ac{UDFS} $\eulerstr$, and the \ac{EBD} of the hider is 
\begin{equation*}
\ebd=\left(\frac{\length_{1}}{\totedges},\frac{\length_{2}}{\totedges}\right).
\end{equation*}

\begin{proposition}
\label{pr:line}
If the graph $\graph$ is a line, then \ac{DFS} are best responses to the \ac{EBD}. Hence, $(\ebd,\eulerstr)$ is a pair of optimal strategies.
\end{proposition}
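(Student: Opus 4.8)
The plan is to reduce the statement to the two equalizing results already established and to isolate the single genuinely new ingredient—that on a line a \ac{DFS} is a best response to the \ac{EBD}—proving it by a structural argument that forbids backtracking. First I would record the simplifications special to the line. Every vertex other than $\source$ has a unique outgoing edge, so each of the two subtrees hanging from $\source$ is a path; applying the degree-$1$ clause of \cref{de:UD} repeatedly from the leaf upward gives $\diffedge(\edge_{1})=\diffedge(\edge_{2})=0$. Hence in \cref{def:bdfs} the argument of $\proj_{[0,1]}$ equals $1/2$, the projection is never binding, and $\bdfs(\edge_{1})=1/2$; thus the \ac{BDFS} coincides with the \ac{UDFS} $\eulerstr$, and—since no clipping ever occurs—the conclusion of \cref{th:BQDFequalizingleafs} holds for \emph{every} $\proba\in(0,1]$, not only for $\proba\geq\proba_{0}$. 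In particular $\eulerstr$ reaches each leaf edge in the same expected time $\tfrac12\cycletime(\source)+\diffedge(\source)$. Because hiding on a non-leaf edge is strictly dominated (as noted at the start of the section), this shows $\max_{\mixedh}\pay(\mixedh,\eulerstr)=\tfrac12\cycletime(\source)+\diffedge(\source)$, so $\eulerstr$ guarantees this quantity for the searcher and every leaf-supported hiding distribution, $\ebd$ included, is a best response to $\eulerstr$.

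Next I would address the core claim. Writing the two leaf masses as $q_{1}=\length_{1}/\totedges$ and $q_{2}=\length_{2}/\totedges$ and letting $T_{i}$ be the stage at which the searcher first traverses leaf edge $i$, observe that against $\ebd$ capture occurs only upon traversing a leaf edge and the searcher learns nothing about the hider's side before capture. Her problem therefore reduces to the single-agent minimization of
\[
q_{1}\,\ex_{\mixed}[T_{1}]+q_{2}\,\ex_{\mixed}[T_{2}],
\]
the only residual randomness being the activation process, which she observes. I would then prove that an optimal searcher is \emph{monotone on each side until she reaches that side's leaf}, which on a line is exactly the definition of a \ac{DFS}; combined with \cref{th:EBDequalizingQDF} this identifies the best-response payoff as $\tfrac12\cycletime(\source)+\diffedge(\source)$.

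The main obstacle is ruling out the two departures from monotone play: idling when an un-searched active edge is available, and retreating toward $\source$ before the current side is finished. Both are governed by the independence and memorylessness of the activation process. At an interior vertex whose forward un-searched edge is momentarily inactive, the expected residual number of stages before that edge can be crossed is $1/\proba$ regardless of when the searcher stands in front of it; hence leaving and returning later re-incurs the same crossing cost and merely adds the round trip, strictly increasing both hitting times. Likewise, abandoning a side explored to depth $d$ inflates every later traversal by the $2d$ wasted crossings, raising both $\ex_{\mixed}[T_{1}]$ and $\ex_{\mixed}[T_{2}]$. I expect the delicate point to be handling adaptivity, since the searcher may legitimately react to the realized active set: the clean way is an interchange/coupling that, given any strategy, excises its first non-monotone move without increasing the objective and is then iterated, or equivalently a backward induction on the recursion \cref{eq:dyn-prog}, whose state on a line collapses to the current vertex, the set of leaves still to be reached, and the current active pattern. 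Note that taking an active edge in the ``wrong'' direction at $\source$ when the preferred edge is inactive is already prescribed by a \ac{DFS}, so it is not a deviation to be excluded.

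Finally I would conclude. The core claim makes $\eulerstr$—being a \ac{DFS}—a best response to $\ebd$, while the preliminary step makes $\ebd$ a best response to $\eulerstr$; these mutual best responses form a saddle point, so by \cref{pr:valueexistence} the pair $(\ebd,\eulerstr)$ is optimal and $\val(\proba)=\tfrac12\cycletime(\source)+\diffedge(\source)$ for every $\proba\in(0,1]$.
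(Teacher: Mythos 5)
Your architecture matches the paper's (reduce to showing a \ac{DFS} minimizes $q_{1}\ex_{\mixed}[T_{1}]+q_{2}\ex_{\mixed}[T_{2}]$ against $\ebd$, then invoke \cref{th:BQDFequalizingleafs,th:EBDequalizingQDF}), and your preliminary observations are correct and needed: on a line $\diffedge\equiv 0$, so the argument of the projection in \cref{eq:bdsf} is exactly $1/2$, the \ac{BDFS} is the \ac{UDFS}, and both equalization theorems hold for every $\proba\in(0,1]$. The genuine gap is in your deviation-exclusion step, and it is precisely the one computation that constitutes the paper's entire proof. Memorylessness of the activation process does \emph{not} rule out idling at $\source$: the strategy ``when only the wrong edge is active, wait for the preferred one'' is unprofitable only because, against $\ebd$, the continuation payoff after taking \emph{either} edge at $\source$ is the same, namely $(\length_{1}+\length_{2}-1)/\proba$. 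This equality is a property of the \ac{EBD} weights $\length_{i}/\totedges$, not of the geometric crossing times: if the hider sat on the left leaf edge with probability one, waiting at $\source$ for the left edge would strictly beat taking the right one, memorylessness notwithstanding. Your sketch never performs this computation, and the reasons you do give cannot substitute for it.

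Relatedly, your claim that retreating from depth $d$ raises \emph{both} $\ex_{\mixed}[T_{1}]$ and $\ex_{\mixed}[T_{2}]$ is false: switching sides at depth $d$ strictly \emph{decreases} the hitting time of the opposite leaf. What is true is that the $\ebd$-weighted objective increases, by exactly $2d$ expected crossings: continuing from depth $d$ on the left branch costs $(\length_{1}-d+\length_{2})/\proba$ in expectation under $\ebd$, while switching costs $(\length_{1}+\length_{2}+d)/\proba$. For general weights $(q_{1},q_{2})$ the difference is $2d+(\length_{2}-\length_{1})+(q_{1}-q_{2})\totedges$ crossings, which can be negative, so the \ac{EBD} weights are again indispensable. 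Both exclusions thus reduce to a single equalization identity whose $d=0$ case is the root equality; once you insert it---for instance inside the backward induction on \cref{eq:dyn-prog} that you propose, which does work since the state on a line collapses to a finite set---your proof closes and coincides in substance with the paper's one-line argument, which simply states the $d=0$ identity and treats the mid-branch cases as implicit.
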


\begin{proof}
Harry plays $\ebd$. At $\source$, whatever active edge Sally takes, the continuation payoff is
$(\length_{1}+\length_{2}-1)/\proba$.
Hence she does not profit from waiting for one specific edge to be active.
\end{proof}
Together with  \cref{th:BQDFequalizingleafs,th:EBDequalizingQDF},  \cref{pr:line} yields the following corollary.

\begin{corollary}
The value of the game played on the line with $\totedges$ edges is 
\[
\val(\proba)=\frac{1}{2}\cycletime(\source) + \diffedge(\source) = \frac{\totedges}{\proba}+\frac{1}{1-(1-\proba)^{2}}-\frac{1}{\proba},
\]
for all $\proba\in(0,1]$, if the root is not an extreme vertex. Moreover the \ac{EBD} and the \ac{UDFS}  are optimal strategies.
\end{corollary}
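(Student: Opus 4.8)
The plan is to obtain the corollary as a synthesis of \cref{pr:line,th:EBDequalizingQDF,th:BQDFequalizingleafs}: I would prove the two inequalities $\val(\proba)\le \tfrac12\cycletime(\source)+\diffedge(\source)$ and $\val(\proba)\ge\tfrac12\cycletime(\source)+\diffedge(\source)$ separately, and then evaluate $\diffedge(\source)$ and $\cycletime(\source)$ in closed form. The one point that needs care is that the statement holds for \emph{all} $\proba\in(0,1]$ rather than only above a threshold, so I would isolate the special feature of the line that makes this work before doing anything else.

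\textbf{Reduction and the role of $\proba_0$.} On the line, each subtree $\subtree_{\edge_1},\subtree_{\edge_2}$ hanging from the root is a path, so iterating the degree-one clause of \cref{de:UD} down to the single-edge base case gives $\diffedge(\edge_1)=\diffedge(\edge_2)=0$. Substituting into \cref{eq:bdsf}, the argument of $\proj_{[0,1]}$ equals exactly $\tfrac12$, so the projection never binds and the \ac{BDFS} coincides with the \ac{UDFS} $\eulerstr$ for every $\proba\in(0,1]$. Since the sole purpose of the threshold $\proba_0$ in \cref{th:BQDFequalizingleafs} is to keep this projection inactive, on the line the conclusion of that theorem — that the \ac{UDFS} reaches every leaf edge in expected time $\tfrac12\cycletime(\source)+\diffedge(\source)$ — extends to all $\proba\in(0,1]$. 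As this time is uniform over the hider's choice of leaf (and hiding on a non-leaf edge is dominated), the \ac{UDFS} guarantees $\val(\proba)\le\tfrac12\cycletime(\source)+\diffedge(\source)$.

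\textbf{Lower bound.} For the reverse inequality I would invoke \cref{pr:line}: on the line the \acp{DFS} are best responses to the \ac{EBD}, so $\min_{\mixed}\pay(\ebd,\mixed)$ is attained by a \ac{DFS}; and by \cref{th:EBDequalizingQDF} every \ac{DFS} returns exactly $\tfrac12\cycletime(\source)+\diffedge(\source)$ against $\ebd$. Hence $\ebd$ guarantees the hider at least $\tfrac12\cycletime(\source)+\diffedge(\source)$. Combined with the upper bound this gives $\val(\proba)=\tfrac12\cycletime(\source)+\diffedge(\source)$ and shows that $(\ebd,\eulerstr)$ is a pair of optimal strategies, for all $\proba\in(0,1]$.

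\textbf{Explicit evaluation.} It remains to compute the two terms. From \cref{eq:O-degree-2}, using $\diffedge(\vertex_1)=\diffedge(\vertex_2)=0$ (again because the subtrees below the root are paths), one reads $\diffedge(\source)=\tfrac12\bigl(\tfrac{1}{1-(1-\proba)^2}-\tfrac1\proba\bigr)$. For the cycle time I would sum the expected single-edge crossing times along one \ac{DFS} tour: each of the $\totedges$ edges is traversed exactly twice, every crossing costing $1/\proba$ in expectation, except the initial departure from the degree-two root, which costs $\tfrac{1}{1-(1-\proba)^2}$ because the searcher then waits only for the first of the two incident edges to activate; this yields $\cycletime(\source)=\tfrac{2\totedges-1}{\proba}+\tfrac{1}{1-(1-\proba)^2}$. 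Substituting and simplifying, $\tfrac12\cycletime(\source)+\diffedge(\source)=\tfrac{\totedges-1}{\proba}+\tfrac{1}{1-(1-\proba)^2}=\tfrac{\totedges}{\proba}+\tfrac{1}{1-(1-\proba)^2}-\tfrac1\proba$, as claimed. The main obstacle is not the arithmetic but the two structural observations: that the \ac{BDFS} reduces to the \ac{UDFS} on the line so that \cref{th:BQDFequalizingleafs} extends below $\proba_0$, and that $\cycletime(\source)$ carries the initial-departure correction $\tfrac{1}{1-(1-\proba)^2}-\tfrac1\proba$ arising from the branching at the root.
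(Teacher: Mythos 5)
Your proof is correct and takes essentially the same route as the paper, which obtains the corollary by combining \cref{pr:line} with \cref{th:BQDFequalizingleafs,th:EBDequalizingQDF}. You merely spell out two points the paper leaves implicit — that on the line $\diffedge(\edge_1)=\diffedge(\edge_2)=0$, so the projection in \cref{eq:bdsf} never binds, the \ac{BDFS} coincides with the \ac{UDFS}, and the threshold $\proba_0$ becomes vacuous; and the closed-form evaluation $\cycletime(\source)=\frac{2\totedges-1}{\proba}+\frac{1}{1-(1-\proba)^{2}}$ — both of which check out against \cref{de:UD} and the recursion in the proof of \cref{th:BQDFequalizingleafs}.
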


\section{Stochastic search games on Eulerian graphs}
\label{se:eulerian}

\subsection{Eulerian strategies and the uniform density}

For Eulerian graphs we define a strategy of the searcher, called  \acfi{ES}\acused{ES}, which generalizes an Eulerian cycle of the deterministic setting. 
At any vertex  an \ac{ES} chooses an active outgoing edge that had not previously been visited in  such a way that  the induced path is an Eulerian cycle. 
The \ac{ES} that at any vertex randomizes uniformly over the outgoing edges is called the \acfi{UES}\acused{UES} and is denoted $\eulerstr$.
\begin{definition}\label{de:cycle-time-EG}
The \ac{UES} on a Eulerian graph $\graph$ induces an expected time to travel from the origin $\source$ covering the entire Eulerian graph. 
This is called the \emph{cycle time} of $\graph$ and is denoted $\eulertime(\graph)$. 
\end{definition}

\subsection{Parallel Eulerian graphs}

\subsubsection{Generalities}

We call \emph{parallel graph} a graph where parallel paths link two vertices, one of these two vertices being the root $\source$, as in  \cref{fig:parallel}. Such a graph is denoted $\euler_{\npaths}(\lengthprof)$, where $\lengthprof=\parens{\length_{1},\dots,\length_{\npaths}}$ is the vector of the lengths of the parallel paths. The parallel uniform strategy of Sally consists in choosing at $\source$ uniformly between active and unsearched edges and then going straight to $\sink$ on the current parallel path (and similarly at $\sink$).

Remark that if the number of parallel paths $\npaths = 2\halfpaths$ is even, then the parallel graph is Eulerian and we call it a \emph{parallel Eulerian graph}. In this case, the parallel uniform strategy is the \ac{UES}.
For a parallel Eulerian graph $\euler_{2\halfpaths}(\lengthprof)$ with $2\halfpaths$ parallel lines, the cycle time of $\euler_{2\halfpaths}(\lengthprof)$ is
\begin{align*}
 \eulertime(\euler_{2\halfpaths}(\lengthprof)) = \sum_{\ppathalt=1}^{2\halfpaths} \left(\frac{1}{1-(1-\proba)^{\ppathalt}} + \frac{\length_\ppathalt-1}{\proba}\right).
\end{align*}

The \ac{UES} allows us to obtain an upper bound for the value. 
We also prove that this upper bound is the value of the game in which Sally is restricted to play  \acp{ES}.
As a by-product we will show that, for every $\proba\in(0,1]$, the \ac{UES} and \ac{UD} are a pair of optimal strategies when the game is played on a circle.

\begin{figure}[ht]
\begin{center}
\begin{tikzpicture}[scale=1]
\node[left] at (0,0){$\source$};
\node[right] at (5,0){$\sink$};

\node at (0,0){$\bullet$};
\node at (0.7,0.7){$\bullet$};
\node at (1.5,1.3){$\bullet$};
\node at (4.3,0.7){$\bullet$};
\node at (3.5,1.3){$\bullet$};
\node at (5,0){$\bullet$};
\draw[-] (0,0) to (0.7,0.7);
\draw[-] (0.7,0.7) to (1.5,1.3);
\draw[-] (5,0) to (4.3,0.7);
\draw[-] (3.5,1.3) to (4.3,0.7);
\draw[dotted] (3.5,1.3) to (1.5,1.3);

\node at (0.7,-0.7){$\bullet$};
\node at (1.5,-1.3){$\bullet$};
\node at (4.3,-0.7){$\bullet$};
\node at (3.5,-1.3){$\bullet$};
\draw[-] (0,0) to (0.7,-0.7);
\draw[-] (0.7,-0.7) to (1.5,-1.3);
\draw[-] (5,0) to (4.3,-0.7);
\draw[-] (3.5,-1.3) to (4.3,-0.7);
\draw[dotted] (3.5,-1.3) to (1.5,-1.3);

\node at (0.9,-0.4){$\bullet$};
\node at (1.85,-0.7){$\bullet$};
\node at (4.1,-0.4){$\bullet$};
\node at (3.15,-0.7){$\bullet$};
\draw[-] (0,0) to (0.9,-0.4);
\draw[-] (0.9,-0.4) to (1.85,-0.7);
\draw[dotted] (3.15,-0.7) to (1.85,-0.7);
\draw[-] (3.15,-0.7) to (4.1,-0.4);
\draw[-] (4.1,-0.4) to (5,0);
\draw[dotted] (3.5,-1.3) to (1.5,-1.3);

\node at (0.9,0.4){$\bullet$};
\node at (1.85,0.7){$\bullet$};
\node at (4.1,0.4){$\bullet$};
\node at (3.15,0.7){$\bullet$};
\draw[-] (0,0) to (0.9,0.4);
\draw[-] (0.9,0.4) to (1.85,0.7);
\draw[dotted] (3.15,0.7) to (1.85,0.7);
\draw[-] (3.15,0.7) to (4.1,0.4);
\draw[-] (4.1,0.4) to (5,0);
\draw[dotted] (3.5,1.3) to (1.5,1.3);

\end{tikzpicture}
\end{center}
\caption{A parallel Eulerian graph}
\label{fig:parallel}
\end{figure}

\begin{definition}
Given a parallel Eulerian graph $\euler_{2\halfpaths}(\lengthprof)$ with $2\halfpaths$ parallel lines, let $\altdiff_{\halfpaths}$ be the following quantity defined recursively:
\begin{equation}
\altdiff_{1} = \frac{1}{2} \left(\frac{1}{1-(1-\proba)^2}-\frac{1}{\proba}\right),
\end{equation}
 and for each $\halfpaths > 1$,
\begin{align*}
\altdiff_{\halfpaths} 
&= \frac{1}{2}\frac{1}{1-(1-\proba)^{2\halfpaths}} + \left(\frac{1}{2}-\frac{1}{2\halfpaths}\right)\frac{1}{1-(1-\proba)^{2\halfpaths-1}} \\
&\quad -\frac{1}{2\halfpaths}\left(\sum_{\ppathalt=1}^{2(\halfpaths-1)} \frac{1}{1-(1-\proba)^\ppathalt}+ \frac{1}{\proba}\right) + \frac{\halfpaths-1}{\halfpaths}\altdiff_{\halfpaths-1}.
\end{align*}
\end{definition}
Remark that $\altdiff_{\halfpaths}$ only depends on the number of parallel paths and not on their length.

\begin{theorem}
\label{th:EUequalizingedges}
On a parallel Eulerian graph $\euler_{2\halfpaths}(\lengthprof)$, the expected time to reach any edge using the \ac{UES} is 
\begin{equation}\label{eq:time-Euler}
\frac{\eulertime(\euler_{2\halfpaths}(\lengthprof))+\proba^{-1}}{2} + \altdiff_\halfpaths.
\end{equation} 
Hence, for all $\proba\in (0,1]$, we have 
\begin{equation}\label{eq:val-Euler}
\val(\proba)\leq \frac{\eulertime(\euler_{2\halfpaths}(\lengthprof))+\proba^{-1}}{2} + \altdiff_\halfpaths.
\end{equation}
\end{theorem}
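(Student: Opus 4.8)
The plan is to show that the \ac{UES} is an \emph{equalizing} strategy: against it, the expected time at which Sally first traverses a prescribed edge $\edge$ is a constant $c$ independent of $\edge$. Since under the \ac{UES} each edge is traversed exactly once (the induced path is an Eulerian cycle), this gives $\pay(\edge,\eulerstr)=c$ for every $\edge$, hence $\pay(\mixedh,\eulerstr)=c$ for every hiding distribution $\mixedh\in\hull(\edges)$; so the searcher guarantees $c$ and $\val(\proba)\le c$, exactly as in the equalizing argument of \cref{pr:eul}. It then remains to compute $c$ and identify it with the right-hand side of \cref{eq:time-Euler}. Throughout write $w_{m}=\frac{1}{1-(1-\proba)^{m}}$ for the expected number of stages Sally waits at a vertex with exactly $m$ unsearched parallel paths incident, and recall $w_{1}=1/\proba$.

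The probabilistic core, which I expect to be the main obstacle to make fully rigorous, is the following. Under the \ac{UES} the $2\halfpaths$ parallel paths are traversed in a \emph{uniformly random order}, and this order is \emph{independent of the sequence of waiting times}. Indeed, at a decision vertex with $m$ unsearched paths incident, Sally waits until at least one of their leading edges is active and then chooses uniformly among the active ones; since activations are i.i.d.\ with common parameter $\proba$, the set of active leading edges is exchangeable, so conditional on the waiting time taking any given value the chosen path is uniform among the $m$ remaining. Thus each step selects a uniform remaining path independently of how long Sally waited, the expected wait at that step equals $w_{m}$, and the overall order is a uniform permutation. Note also that the $r$-th path in this order is traversed from $\source$ to $\sink$ when $r$ is odd and from $\sink$ to $\source$ when $r$ is even, so exactly $\halfpaths$ of the positions $r\in\{1,\dots,2\halfpaths\}$ are ``forward'' and $\halfpaths$ are ``backward''.

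Fix an edge $\edge$ on path $k$ at distance $d$ from $\source$, and condition on path $k$ occupying position $r$. The time to reach $\edge$ splits into the full traversal of the first $r-1$ paths, the wait $w_{2\halfpaths-r+1}$ before entering path $k$, and the within-path crossing. By the previous paragraph the prefix contributes $\sum_{j=1}^{r-1}w_{2\halfpaths-j+1}$ in waits and, since each other path is equally likely to occupy each earlier slot, $\frac{1}{\proba}(r-1)\bigl(\frac{\card\edges-\length_{k}}{2\halfpaths-1}-1\bigr)$ in edge crossings; the within-path crossing contributes $(d-1)/\proba$ if $r$ is odd and $(\length_{k}-d)/\proba$ if $r$ is even. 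Averaging over the uniform position $r\in\{1,\dots,2\halfpaths\}$, the within-path crossing becomes $\frac{1}{2\halfpaths}\cdot\frac{\halfpaths}{\proba}\bigl((d-1)+(\length_{k}-d)\bigr)=\frac{\length_{k}-1}{2\proba}$, so $d$ cancels; adding the averaged prefix crossing $\frac{1}{2\proba}\bigl(\card\edges-\length_{k}-2\halfpaths+1\bigr)$ then cancels $\length_{k}$, leaving $\frac{\card\edges-2\halfpaths}{2\proba}$. Collecting the wait terms gives $\frac{1}{2\halfpaths}\sum_{r=1}^{2\halfpaths}\bigl(w_{2\halfpaths-r+1}+\sum_{j=1}^{r-1}w_{2\halfpaths-j+1}\bigr)=\frac{1}{2\halfpaths}\sum_{m=1}^{2\halfpaths}m\,w_{m}$, because $w_{m}$ occurs with total multiplicity $m$. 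Hence $c=\frac{1}{2\halfpaths}\sum_{m=1}^{2\halfpaths}m\,w_{m}+\frac{\card\edges-2\halfpaths}{2\proba}$, independent of $\edge$.

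Finally I would match $c$ with \cref{eq:time-Euler}. Since $\eulertime(\euler_{2\halfpaths}(\lengthprof))=\sum_{m=1}^{2\halfpaths}w_{m}+\frac{\card\edges-2\halfpaths}{\proba}$, the assertion $c=\frac{\eulertime(\euler_{2\halfpaths}(\lengthprof))+\proba^{-1}}{2}+\altdiff_{\halfpaths}$ is equivalent to $\altdiff_{\halfpaths}=\frac{1}{2\halfpaths}\sum_{m=1}^{2\halfpaths}(m-\halfpaths)w_{m}-\frac{1}{2}w_{1}$. This closed form is checked directly against the recursive definition of $\altdiff_{\halfpaths}$: for $\halfpaths=1$ it gives $\altdiff_{1}=\frac12(w_{2}-w_{1})$, which is the stated base case, and using the identity $\sum_{m=1}^{2\halfpaths}(m-\halfpaths)w_{m}=\sum_{m=1}^{2\halfpaths-2}(m-\halfpaths+1)w_{m}-\sum_{m=1}^{2\halfpaths-2}w_{m}+(\halfpaths-1)w_{2\halfpaths-1}+\halfpaths\,w_{2\halfpaths}$ one verifies it satisfies the displayed recursion for $\altdiff_{\halfpaths}$ (the $w_{1}$-terms combine as $-\frac{w_{1}}{2\halfpaths}-\frac{(\halfpaths-1)w_{1}}{2\halfpaths}=-\frac{w_{1}}{2}$). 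This establishes $T(\edge)=\frac{\eulertime(\euler_{2\halfpaths}(\lengthprof))+\proba^{-1}}{2}+\altdiff_{\halfpaths}$ for every $\edge$, and the equalizing argument of the first paragraph yields \cref{eq:val-Euler}.
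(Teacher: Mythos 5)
Your proof is correct, but it takes a genuinely different route from the paper's. The paper proves \cref{th:EUequalizingedges} by induction on $\halfpaths$: it writes the one-stage recursive equation for $\pay(\edge(\ppath,\edgel),\eulerstr)$, unrolls it one further level so that the continuation games live on $2(\halfpaths-1)$ parallel paths, plugs in the induction hypothesis, and grinds through the algebra; the quantity $\altdiff_{\halfpaths}$ enters only through its recursive definition. You instead argue directly: under the \ac{UES} the traversal order of the $2\halfpaths$ paths is a uniform random permutation, independent of the waiting times — your exchangeability argument is sound, though it is worth flagging that it hinges on the standing assumption $\proba_{\edge}\equiv\proba$ of this section, which makes the wait at the $j$-th decision vertex depend only on the number $m=2\halfpaths-j+1$ of remaining paths and not on their identities. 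Conditioning on the position $r$ of the hider's path and its parity, the within-path term averages to $(\length_{k}-1)/(2\proba)$ (cancelling the depth $d$), it combines with the prefix-crossing term to cancel $\length_{k}$, and the waits aggregate with multiplicity $m$ into $\frac{1}{2\halfpaths}\sum_{m=1}^{2\halfpaths}m\,w_{m}$. I verified your bookkeeping, including the matching step: the closed form $\altdiff_{\halfpaths}=\frac{1}{2\halfpaths}\sum_{m=1}^{2\halfpaths}(m-\halfpaths)w_{m}-\frac{1}{2\proba}$ does satisfy the paper's base case and recursion (the $w_{1}$ terms combine to $-w_{1}/2$ as you say, and the coefficients of $w_{2\halfpaths-1}$ and $w_{2\halfpaths}$ come out as $\frac{\halfpaths-1}{2\halfpaths}$ and $\frac{1}{2}$ on both sides). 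What your route buys: a closed form for $\altdiff_{\halfpaths}$ that the paper never states, which makes its remark that $\altdiff_{\halfpaths}$ depends only on the number of paths and not on $\lengthprof$ immediate, and a conceptual explanation of \emph{why} the \ac{UES} is equalizing — forward/backward symmetry of odd and even slots kills the dependence on the edge's position within its path, and uniformity of the permutation kills the dependence on the path itself. What the paper's route buys: a mechanical template uniform with its treatment of trees in \cref{th:BQDFequalizingleafs,th:EBDequalizingQDF}, and no need to isolate and justify the independence/exchangeability lemma, which is indeed the one step of your argument requiring the care you flagged.
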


The proof of \cref{th:EUequalizingedges} is postponed to \cref{se:appendix}.

\begin{theorem}
\label{th:UequalizingEU}
On a parallel Eulerian graph $\euler_{2\halfpaths}(\lengthprof)$, the uniform density of the hider yields the same payoff 
\begin{equation*}
\frac{\eulertime(\euler_{2\halfpaths}(\lengthprof))+\proba^{-1}}{2} + \altdiff_\halfpaths
\end{equation*}
against any Eulerian strategy of the searcher.
\end{theorem}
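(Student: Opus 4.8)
The plan is to show that when Harry plays the uniform density $\ud$ over the $\totlength \coloneqq \sum_{\ppathalt} \length_{\ppathalt}$ edges of the parallel Eulerian graph, his expected payoff against \emph{any} \ac{ES} of Sally equals the quantity in \cref{eq:time-Euler}. By \cref{th:EUequalizingedges} we already know that the \ac{UES}, which is one particular \ac{ES}, achieves exactly this value against any hider strategy, and in particular against $\ud$; the point here is the dual equalizing property, namely that $\ud$ forces this same value no matter how Sally breaks her ties. Combined with \cref{th:EUequalizingedges}, this pins down the value of the \ac{ES}-restricted game as the common saddle value.

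The key observation is that under any \ac{ES}, Sally traverses the edges in the order dictated by some Eulerian cycle, waiting at each vertex for an active unsearched outgoing edge. Since Harry hides uniformly, his payoff is $\frac{1}{\totlength}\sum_{\edge\in\edges} \ex_{\mixed}[T_{\edge}]$, where $T_{\edge}$ is the (random) stage at which Sally first traverses $\edge$, and the expectation is over both the graph process and Sally's randomization. First I would argue that the total expected time $\sum_{\edge} \ex_{\mixed}[T_{\edge}]$ is invariant across all \acp{ES}. The traversal time of an edge decomposes into the actual crossings (each contributing $1/\proba$ in expectation, independent of the route) plus the accumulated waiting times at the vertices visited before $\edge$. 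The crucial structural fact is that in a parallel graph, at the root $\source$ (and symmetrically at $\sink$), when $\vertexalt$ unsearched parallel edges remain active-or-inactive, the expected wait for at least one to become active is $1/(1-(1-\proba)^{\vertexalt})$, and this wait is \emph{charged once per return to the branching vertex} regardless of which specific edge Sally selects. Because every \ac{ES} visits $\source$ and $\sink$ the same number of times and faces the same sequence of "number of remaining parallel edges," the aggregate expected waiting time is the same for all of them.

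More concretely, I would compute $\sum_{\edge}\ex_{\mixed}[T_{\edge}]$ by summing the waiting-plus-crossing contributions along a generic Eulerian traversal and show the result is symmetric in the path lengths $\length_{1},\dots,\length_{2\halfpaths}$ and independent of Sally's choices; this is precisely the recursion encoded in the definition of $\altdiff_{\halfpaths}$ together with $\eulertime(\euler_{2\halfpaths}(\lengthprof))$. Dividing by $\totlength$ and matching terms should reproduce $\frac{\eulertime(\euler_{2\halfpaths}(\lengthprof))+\proba^{-1}}{2} + \altdiff_{\halfpaths}$. The cleanest route is probably to reuse the leaf-time computation from \cref{th:EUequalizingedges}: there it is shown that \emph{each} edge has the same reach time $\frac{\eulertime+\proba^{-1}}{2}+\altdiff_\halfpaths$ under the \ac{UES}; here I instead average the reach times over edges under an arbitrary \ac{ES} and show the average is unchanged.

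The main obstacle will be handling the waiting times at the two branching vertices $\source$ and $\sink$ carefully, since these are the only places where different \acp{ES} differ and where the nontrivial $1-(1-\proba)^{\vertexalt}$ factors enter. The subtlety is that the \emph{identity} of the remaining active parallel edges is correlated with the history, so I must verify that the expected wait at each visit depends only on the number of remaining parallel paths and not on Sally's earlier selections --- this follows from the independence of edge activations across stages, but should be stated explicitly. Once this invariance of the per-visit expected wait is established, the equality of the averaged reach time across all \acp{ES} follows by the same symmetric bookkeeping that underlies \cref{th:EUequalizingedges}, and the theorem is proved.
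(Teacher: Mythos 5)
There is a genuine gap, and it sits exactly at your central claim that $\sum_{\edge\in\edges}\ex_{\mixed}[T_{\edge}]$ is invariant across all \acp{ES}. What \emph{is} invariant is the total tour time (this is the content of the cycle time $\eulertime$): every \ac{ES} incurs the same multiset of expected leg durations, namely $2\halfpaths$ path-entering legs with means $w_j=1/(1-(1-\proba)^{j})$, $j=2\halfpaths,\dots,1$, plus $\sum_{\ppathalt}(\length_{\ppathalt}-1)$ within-path legs of mean $1/\proba$. But the sum of reach times weights the $\ell$-th leg by the number $\totedges-\ell+1$ of still-unfound edges, and the \emph{positions} of the expensive entering legs are $1+\sum_{i<r}\length_{\pi(i)}$, which depend on the order $\pi$ in which the paths are consumed. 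Conditional on the order one gets
\begin{equation*}
\sum_{\edge\in\edges}\ex[T_{\edge}]
=\frac{1}{\proba}\binom{\totedges+1}{2}
+\sum_{r=1}^{2\halfpaths}\left(\frac{1}{1-(1-\proba)^{2\halfpaths-r+1}}-\frac{1}{\proba}\right)\left(\totedges-\sum_{i<r}\length_{\pi(i)}\right),
\end{equation*}
which is \emph{not} symmetric in $\pi$ once the lengths are unequal and $2\halfpaths\geq 4$: for $\lengthprof=(1,1,1,3)$ the orders $(3,1,1,1)$ and $(1,1,1,3)$ differ by $2w_3+2w_2-4/\proba\neq 0$. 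Since an \ac{ES} is free to break ties among active unsearched edges (e.g., take the long path whenever its edge is active, versus avoid it), different \acp{ES} induce genuinely different distributions over the position of the long path, so your averaged reach time is not \ac{ES}-independent. Your independence observation about the per-visit wait is correct --- the expected wait with $j$ unsearched paths is $w_j$ regardless of past selections --- but that only gives invariance of the unweighted durations, not of the reach-time bookkeeping, and the argument collapses at that step.

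For contrast, the paper's proof never asserts any order-invariance. It proceeds by induction on $\halfpaths$: it writes a one-step recursion in which the first path is selected with weight $1/(2\halfpaths)$, relates the continuation payoff $\pay^{\length_{\ppathalt}-1}(\ud,\pure)$ after one edge of path $\ppathalt$ to the payoff $\pay_{\ppathalt}$ of the reduced game on $\euler_{2\halfpaths-1}(\lengthprof\setminus\length_{\ppathalt})$, expands one more level, and applies the induction hypothesis to $\pay_{\ppathalt,\ppathsec}$ on $\euler_{2(\halfpaths-1)}$; the symmetric double sums then collapse to $(\eulertime(\euler_{2\halfpaths}(\lengthprof))+\proba^{-1})/2+\altdiff_{\halfpaths}$. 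Note that $\pay^{\length_{\ppathalt}-1}$ genuinely depends on $\length_{\ppathalt}$, which is precisely why a path-by-path invariance of the kind you propose cannot hold: the equalization emerges only after averaging with the uniform $1/(2\halfpaths)$ selection weights at each branching visit. You should also be aware that this uniform weighting is exactly where the searcher's tie-breaking enters the paper's recursion, so the displayed computation is really tailored to \acp{ES} that select uniformly among active unsearched edges (such as the \ac{UES}); your counterexample calculation above shows this is a substantive, not cosmetic, restriction, so simply ``reusing the leaf-time computation from \cref{th:EUequalizingedges}'' cannot be made to cover arbitrary tie-breaking rules either.
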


The proof of \cref{th:UequalizingEU} is postponed to \cref{se:appendix}.
\cref{th:EUequalizingedges,th:UequalizingEU} imply that in a parallel Eulerian graph $\euler_{2\halfpaths}(\lengthprof)$, if Eulerian strategies are best responses to the uniform density, for all $\proba\in (0,1]$ the value of the game is 
\begin{equation*}
\frac{\eulertime(\euler_{2\halfpaths}(\lengthprof))+\proba^{-1}}{2} + \altdiff_\halfpaths.
\end{equation*}
Moreover the \ac{UES} and the \ac{UD} are optimal.

However, Eulerian strategies are not always best responses to the \ac{UD}, as we now argue.

\begin{example}
We study the game played on a parallel Eulerian graph with four parallel paths. Each path $\ppath$ has two edges $\edge_{\ppath1} = \braces{\source,\vertex_{\ppath}}$ and $\edge_{\ppath2} = \braces{\vertex_{\ppath},\sink}$, where $\vertex_{\ppath}$ is the middle vertex of the $i$th path.

Consider the case where Sally visits $\edge_{41}$, $\edge_{42}$ and $\edge_{12}$  before any other edge. 
When she plays an \ac{ES}, this event has positive probability.  
Assume also that, when at $\vertex_1$, the edge $\edge_{12}$ is active but $\edge_{11}$ is not.
At this point she can either wait at $\vertex_2$ until $\edge_{11}$ becomes active in order to follow an \ac{ES}, or she can take $\edge_{12}$, then the first active edge between $\edge_{22}$ and $\edge_{32}$ and continue with $\edge_{21}$ or $\edge_{31}$ respectively. 
Finally, she takes the first active edge between $\edge_{11}$ and the other edge at $\source$ that has not been visited yet, and then visits the two remaining edges as quickly as possible.

Following an \ac{ES} yields the continuation payoff
\begin{align*}
\pay_1 = \frac{1}{5}\left(5 + \frac{11}{\proba} + \frac{4}{1-(1-\proba)^2}\right).
\end{align*}
Following the second strategy yields the continuation payoff
\begin{align*}
\pay_2 = \frac{1}{5}\left(5 + \frac{17}{2\proba} + \frac{8}{1-(1-\proba)^2}\right).
\end{align*}
Hence if $\proba<2/5$, the second strategy yields a lower payoff to Sally than an \ac{ES}.
\end{example}

\subsubsection{The circle}

We now examine the game played on a circle.

\begin{lemma}
\label{le:circle}
If the graph $\graph$ is a circle, then Eulerian strategies are best responses to the uniform density.
\end{lemma}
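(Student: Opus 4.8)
The plan is to fix the hider at the uniform density $\ud$ and to show that no searcher strategy can do better than an \ac{ES}, by rewriting the payoff as a sum over time of the expected number of still-unsearched edges and then bounding the rate at which \emph{any} strategy can discover new edges on a circle. Throughout I restrict to searcher strategies that eventually traverse every edge, since any other strategy leaves some $T_\edge=\infty$ and hence gives payoff $+\infty$.

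First I would reformulate the objective. For a searcher strategy $\mixed$ write $T_\edge = \inf\{\per\ge 1 : \{\vertex_{\per-1},\vertex_\per\}=\edge\}$ for the first-traversal time of edge $\edge$, matching \cref{eq:payoff}, and let $N(\per)=\card\{\edge\in\edges : T_\edge\le\per\}$ be the number of edges already searched by stage $\per$. Since $\ud$ is uniform, $\pay(\ud,\mixed)=\frac{1}{\card\edges}\ex_\mixed\bracks*{\sum_{\edge\in\edges}T_\edge}$, and the elementary identity $\sum_{\edge}T_\edge=\sum_{\per\ge 0}\parens*{\card\edges-N(\per)}$ yields $\pay(\ud,\mixed)=\frac{1}{\card\edges}\sum_{\per\ge 0}\parens*{\card\edges-\ex_\mixed[N(\per)]}$. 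Minimizing the payoff against $\ud$ is therefore exactly maximizing $\ex_\mixed[N(\per)]$ for every $\per$ simultaneously, i.e. discovering edges as fast as possible.

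Next comes the structural input specific to the circle. Because the searcher moves along the circle from $\source$, at each stage the set of visited vertices is a sub-arc containing $\source$ and the set of traversed edges is exactly its interior, hence a single contiguous arc of edges. Consequently, once at least one edge has been searched, each of the two boundary vertices of this arc is incident to exactly one unsearched frontier edge, while $\source$ is incident to two unsearched edges only at the very first stage. As in one stage the searcher sits at a single vertex and crosses at most one edge, this gives the key per-stage bound: conditionally on the past, the probability of discovering a new edge at stage $\per$ is at most $1-(1-\proba)^{2}$ when $N(\per-1)=0$, and at most $\proba$ when $1\le N(\per-1)<\card\edges$ (the unique reachable frontier edge must be active). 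A routine coupling then shows that $N(\per)$ is stochastically dominated, uniformly in $\per$, by the counting process $N^{*}$ whose first increment is $\mathrm{Geometric}(1-(1-\proba)^{2})$ and whose subsequent increments are $\mathrm{Geometric}(\proba)$; in particular $\ex_\mixed[N(\per)]\le\ex[N^{*}(\per)]$ for all $\per$ and all $\mixed$. Finally I would check equality for an \ac{ES}: following an Eulerian cycle, the searcher crosses the first active edge at $\source$ (rate $1-(1-\proba)^{2}$), and thereafter advances in a fixed direction, so after each discovery she stands at the unique advancing frontier vertex with a single edge ahead of her, active with probability $\proba$, and never backtracks. Hence under any \ac{ES} the discovery process has exactly the law of $N^{*}$, the lower bound on $\pay(\ud,\cdot)$ is attained, and \acp{ES} are best responses to the uniform density.

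I expect the delicate point to be the per-stage discovery bound together with the domination coupling, not the (elementary) reformulation: one must argue that the two-frontier structure of a circle cannot be exploited to beat the rate $\proba$, because reaching the opposite frontier costs a stage of travel across already-searched edges, so no strategy is ever positioned to attempt two unsearched edges at once (except at $\source$ at stage one, which the \ac{ES} already exploits). This is exactly the property that fails on general Eulerian graphs, where a single vertex may be incident to several unsearched edges and the per-stage rate can reach $1-(1-\proba)^{k}>\proba$, which is why the preceding example shows that \acp{ES} need not be best responses there.
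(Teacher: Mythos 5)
Your proof is correct, but there is no written argument in the paper to compare it against: the authors declare the proof of \cref{le:circle} ``rather straightforward'' and omit it, so what you have produced is a complete formalization of an argument the paper leaves implicit. Your route — rewriting the payoff against $\ud$ as $\frac{1}{\card\edges}\sum_{\per\geq 0}\left(\card\edges-\ex_\mixed[N(\per)]\right)$ and then bounding the conditional per-stage discovery probability by $1-(1-\proba)^{2}$ before the first discovery and by $\proba$ thereafter — is sound, and it correctly reproduces the value $\frac{1}{1-(1-\proba)^{2}}+\frac{\card\edges-1}{2\proba}$ of the corollary following the lemma. Two small points deserve care in a final write-up. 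First, your claim that the traversed edges are \emph{exactly} the interior of the visited vertex-arc degenerates in the endgame, when every vertex has been visited but one or two edges remain unsearched; what you actually need, and what stays true through the wrap-around, is that the searched edges always form a contiguous arc and that after the first discovery every \emph{visited} vertex (the only vertices the searcher can occupy) is incident to at most one unsearched edge, so the hazard bound $\proba$ holds at every stage with $1\leq N(\per-1)<\card\edges$, adjacency of the last unsearched edges notwithstanding. Second, in the domination step the comparison process should be capped at $\card\edges$: the coupling gives $N(\per)\leq N^{*}(\per)\wedge\card\edges$, and an \ac{ES} realizes exactly this capped law, since on a circle either initial direction begins an Eulerian cycle (so the first discovery indeed occurs at rate $1-(1-\proba)^{2}$) and thereafter the strategy stands at the advancing frontier vertex every stage with hazard exactly $\proba$. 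With these two glosses the argument is airtight, and your closing observation — that on a circle no strategy is ever positioned to attempt two unsearched edges simultaneously after the first move, whereas a high-degree vertex allows rate $1-(1-\proba)^{k}>\proba$ — correctly isolates why the lemma is special to the circle, in line with the paper's four-path parallel Eulerian counterexample and with the equalization results of \cref{th:EUequalizingedges,th:UequalizingEU}.
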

The proof of \cref{le:circle} is rather straightforward and we omit it. Together with \cref{th:EUequalizingedges,th:UequalizingEU},  \cref{le:circle} yields the following corollary.

\begin{corollary}
The value of the game played on the circle with $\totedges$ edges is 
\[
\val(\proba)=\frac{\eulertime(\graph)+\proba^{-1}}{2} + \altdiff_2 =\frac{1}{1-(1-\proba)^{2}}+\frac{\totedges-1}{2\proba},
\]
for all $\proba\in(0,1]$. Moreover the uniform density and the uniform Eulerian strategy are optimal strategies.
\end{corollary}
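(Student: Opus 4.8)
The plan is to read this corollary off directly from the machinery already assembled for parallel Eulerian graphs, since a circle with $\totedges$ edges is exactly the parallel Eulerian graph $\euler_2(\length_1,\length_2)$ with $\halfpaths=1$, the two parallel paths being the two arcs joining $\source$ to the antipodal vertex $\sink$, with $\length_1+\length_2=\totedges$. Write $P\coloneqq \tfrac12\parens{\eulertime(\graph)+\proba^{-1}}+\altdiff_1$ for the candidate value (the relevant index is $\halfpaths=1$). The three inputs I would invoke are \cref{th:EUequalizingedges} (the \ac{UES} equalizes the expected time to reach every edge), \cref{th:UequalizingEU} (the \ac{UD} equalizes the payoff against every \ac{ES}), and \cref{le:circle} (on the circle, \acp{ES} are best responses to the \ac{UD}); the corollary is essentially the conjunction of these.

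First I would establish the upper bound. By \cref{th:EUequalizingedges} the \ac{UES} $\eulerstr$ reaches every edge in the common expected time $P$; since the hider's payoff is linear in his mixed strategy, this gives $\pay(\mixedh,\eulerstr)=P$ for every hiding strategy $\mixedh$, so the searcher guarantees $P$ and $\val(\proba)\le P$. For the matching lower bound, playing $\ud$ the hider guarantees $\min_{\mixed\in\pures}\pay(\ud,\mixed)$. By \cref{le:circle} this minimum is attained by some \ac{ES}, and by \cref{th:UequalizingEU} every \ac{ES} yields payoff exactly $P$ against $\ud$; hence the minimum equals $P$ and $\val(\proba)\ge P$. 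The two bounds coincide, giving $\val(\proba)=P$ and showing that $(\ud,\eulerstr)$ is a pair of optimal strategies.

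It then remains only to make $P$ explicit, a routine simplification. Specializing the cycle-time formula for $\euler_{2\halfpaths}(\lengthprof)$ to $\halfpaths=1$ and using $1-(1-\proba)^{1}=\proba$ gives
\begin{equation*}
\eulertime(\graph)=\frac{1}{\proba}+\frac{\length_1-1}{\proba}+\frac{1}{1-(1-\proba)^2}+\frac{\length_2-1}{\proba}=\frac{1}{1-(1-\proba)^2}+\frac{\totedges-1}{\proba}.
\end{equation*}
Substituting this together with $\altdiff_1=\tfrac12\parens{\tfrac{1}{1-(1-\proba)^2}-\tfrac1\proba}$ into $P$, the two copies of $\tfrac12\cdot\tfrac{1}{1-(1-\proba)^2}$ add to a single $\tfrac{1}{1-(1-\proba)^2}$ while the $+\tfrac{1}{2\proba}$ and $-\tfrac{1}{2\proba}$ cancel, leaving $\val(\proba)=\tfrac{1}{1-(1-\proba)^2}+\tfrac{\totedges-1}{2\proba}$, as claimed.

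I expect no genuine obstacle here: the content is entirely carried by \cref{th:EUequalizingedges,th:UequalizingEU,le:circle}, and the only step that deserves a word of care is the passage from ``every \ac{ES} gives payoff $P$ against $\ud$'' to ``$\ud$ guarantees $P$ against all searcher strategies,'' which is exactly where the best-response property of \cref{le:circle} is used to restrict the searcher's minimization to \acp{ES}.
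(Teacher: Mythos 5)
Your proof is correct and is essentially the paper's own argument: the authors obtain this corollary precisely by combining \cref{th:EUequalizingedges,th:UequalizingEU,le:circle} (their ``proof'' is just this remark), with the same identification of the circle as the parallel Eulerian graph $\euler_{2}(\length_{1},\length_{2})$, $\length_{1}+\length_{2}=\totedges$, and the same routine simplification of the cycle time. One remark in your favor: your index $\altdiff_{1}$ is the correct one under the paper's definition of $\altdiff_{\halfpaths}$ (the circle has $2\halfpaths=2$ parallel paths, so $\halfpaths=1$), and it is exactly what makes the explicit formula $\frac{1}{1-(1-\proba)^{2}}+\frac{\totedges-1}{2\proba}$ come out right, so the subscript $2$ in the corollary as printed is evidently a typo.
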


\section*{Acknowledgments}
This work has been partly supported by the COST Action CA16228 European Network for Game Theory, by the INdAM-GNAMPA Project 2019 ``Markov chains and games on networks,'' and by the Italian MIUR PRIN 2017 Project ALGADIMAR ``Algorithms, Games, and Digital Markets.''
Marco Scarsini is a member of GNAMPA-INdAM.
Tristan Garrec gratefully acknowledges the hospitality of the Department of Economics and Finance at LUISS, where part of this research was carried out. 

\appendix

\section{Omitted proofs}
\label{se:appendix}

\subsection{Omitted proofs of \cref{se:value}}
The following lemma is a corollary of \citet[Theorem~12]{FlePreSud:AMOfrth}.
\begin{lemma}
\label{le:positivesto}
Positive zero-sum stochastic games with finite state space and action spaces have a value. Moreover the minimizer has an optimal (stationary) strategy.
\end{lemma}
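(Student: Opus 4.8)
The plan is to exhibit the lemma as the finite special case of the general existence theorem for positive zero-sum stochastic games, while indicating the value-iteration mechanism that underlies it. Recall that in a \emph{positive} game every stage payoff is nonnegative and the overall payoff is their undiscounted sum (in our application the stage payoff is $1$ at each stage before absorption, so the total is the hitting time). Write $\set$ for the finite state space, let the two finite action sets, the nonnegative stage reward $r$, and the transition kernel $q$ be fixed, and introduce the Shapley operator $\Psi$ acting on functions $v\colon\set\to[0,+\infty]$ by
\[
(\Psi v)(s)=\val\bigl[\,r(s,\cdot,\cdot)+\textstyle\sum_{s'\in\set} q(s'\mid s,\cdot,\cdot)\,v(s')\,\bigr],
\]
where $\val$ is the value of the one-shot matrix game at state $s$; this value exists because both action sets are finite (minimax theorem). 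First I would run value iteration from below, setting $v_{0}\equiv 0$ and $v_{n+1}=\Psi v_{n}$. Since $r\ge 0$ we have $v_{1}\ge v_{0}$, and $\Psi$ is monotone, so $(v_{n})$ is nondecreasing and converges pointwise to $v^{*}=\sup_{n}v_{n}$, with values in $[0,+\infty]$.

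Next I would verify that $v^{*}$ is the value and extract the minimizer's strategy. Because $\set$ is finite, the bracketed quantity is a finite sum and $\val$ depends continuously on its finitely many payoff entries, so $\Psi v_{n}\to\Psi v^{*}$; since $\Psi v_{n}=v_{n+1}\to v^{*}$, this gives the fixed-point equation $\Psi v^{*}=v^{*}$. Now $v_{n}$ is exactly the value of the game truncated at horizon $n$ with zero terminal payoff, and discarding the remaining nonnegative rewards can only lower the maximizer's total; hence the maximizer, by playing optimally in the $n$-stage truncation, guarantees at least $v_{n}$ in the infinite game, and letting $n\to\infty$ guarantees at least $v^{*}$. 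For the matching upper bound, let the minimizer use the stationary strategy $y^{*}$ that attains the minimum in the matrix game defining $\Psi v^{*}$ at every state. Fixing $y^{*}$ turns the maximizer's problem into a positive Markov decision process for which $v^{*}$ is an excessive (super-solution) function; since in the positive case value iteration from $0$ converges to the true optimal reward, a monotone induction shows that no maximizer strategy exceeds $v^{*}(s)$ from $s$. Thus $y^{*}$ caps the payoff at $v^{*}$, so the value exists, equals $v^{*}$, and $y^{*}$ is an optimal stationary strategy for the minimizer.

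The delicate point, and the main obstacle, is the infinite-horizon passage: one must ensure no payoff mass escapes to the horizon and must accommodate states where $v^{*}=+\infty$, precisely the place where the nonnegativity of the rewards is indispensable and a discounted-limit shortcut would fail. Rather than carry out this tail analysis in full, the economical route is to observe that the hypotheses of \citet[Theorem~12]{FlePreSud:AMOfrth} — nonnegative, lower-semicontinuous stage payoffs over a state space and action spaces that are here finite, hence trivially countable and Borel — are all met, so the existence of the value together with an optimal stationary strategy for the minimizer follows at once as the claimed corollary.
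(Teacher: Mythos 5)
Your proposal is correct, and its decisive step is in fact the same as the paper's: the paper's entire proof of \cref{le:positivesto} is the one-line observation that it is a corollary of \citet[Theorem~12]{FlePreSud:AMOfrth}, which is exactly where your final paragraph lands. What you add is a self-contained sketch of the standard argument behind such results: value iteration $v_{n+1}=\Psi v_{n}$ from $v_{0}\equiv 0$, monotone convergence to a fixed point $v^{\ast}$ of the Shapley operator, the lower bound via optimal play in the $n$-stage truncation (using nonnegativity of the stage rewards), and the upper bound via a stationary minimizing selection $y^{\ast}$ from the fixed-point equation, with $v^{\ast}$ excessive for the positive Markov decision problem the maximizer faces once $y^{\ast}$ is fixed. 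That outline is sound, and your use of the fact that in \emph{positive} dynamic programming value iteration from $0$ converges to the optimal reward is correctly placed. The one soft spot is the claim that the one-shot value ``depends continuously on its finitely many payoff entries'': this is automatic only for real entries, whereas here $v^{\ast}$ may take the value $+\infty$ (in this paper's game the total payoff is a hitting time, which is infinite on non-absorbing histories). One must check that finite matrix games with entries in $[0,+\infty]$ still have a value and an optimal mixed action for the minimizer, and that these values pass to the limit under $v_{n}\uparrow v^{\ast}$ — a monotone-truncation argument, not plain continuity. You correctly identify this as the delicate point and defer it to the cited theorem, which is precisely what the paper's citation absorbs; so your text amounts to the paper's proof plus an honest, if not fully executed, sketch of what the black box contains.
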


\begin{proof}[Proof of \cref{pr:valueexistence}]
We restate the stochastic search game as a positive zero-sum stochastic game with finite state and action spaces and apply \cref{le:positivesto}. 
The idea is that Sally's stage payoff is $1$ at each stage until she finds Harry, transitioning then to an absorbing state in which the payoff is $0$ forever. The total payoff is  then the sum of the stage payoffs.

Let $\graph = (\vertices,\edges)$ be the underlying graph. 
In order to cast our problem in the framework of \citet{FlePreSud:AMOfrth}, we will use a finite state space $\left(\vertices\times \graphs\times (\edges\cup \{\dagger\})\right)\cup\{\ast\}$, which is larger than the state space $\graphs\times \vertices$, used in \cref{se:model}.
The state $(\vertex_{0},(\vertices,\varnothing),\dagger)$ is the initial state at stage $0$, where $\dagger$ indicated that the hider has not chosen an edge where to hide.
In this state, the finite action space of the searcher is $\neighbor((\vertices,\varnothing),\vertex_{0})=\{\vertex_{0}\}$, the finite action space of the hider is $\edges$ and the payoff is $0$. 
The state $\ast$ is an absorbing state in which the payoff is $0$ forever. In any other state the payoff is $1$.

The state moves from the initial state to $(\vertex_{0},\graph_{1},\edge)$ where $\graph_{1}$ is the graph drawn at stage $1$ and $\edge$ is the edge chosen by the hider (which is fixed for the rest of the game). In any state $(\vertex,\graph',\edge)\in \vertices\times \graphs\times \edges$ the searcher selects $\vertex'\in\neighbor(\graph',\vertex)$ and the hider selects $\edge\in\{\edge\}$. If $\{\vertex,\vertex'\}=\edge$ then the state next moves to the absorbing state $\ast$. If $\{\vertex,\vertex'\}\neq \edge$, the state moves to $(\vertex',\graph'',\edge)$ where $\graph''$ is drawn according to the activation parameter.

Finally, since $\edges$ is finite, the hider has an optimal strategy.
\end{proof}

\subsection{Omitted proofs of \cref{se:tree}}
\begin{proof}[Proof of \cref{le:delta}]

We proceed by induction on the number of edges. The base case is immediate since $\diffedge(\edge_{1})=\diffedge(\edge_{2}) =0$. For the induction step the situation is represented in  \cref{fig:prooflemmadelta}. The vertex $\vertex_{1}$ is the first vertex encountered in $\subtree_{\edge_{1}}$ with two outgoing edges, and similarly for $\vertex_{2}$ and $\subtree_{\edge_{2}}$.

\begin{figure}[ht!]
\begin{center}
\begin{tikzpicture}[scale=1]
\node[above] at (0,0){$\vertex$};
\node[left] at (-2,-2.25){$\vertex_{1}$};
\node[right] at (2,-2.25){$\vertex_{2}$};
\node[left] at (-0.5,-0.5){$\edge_{1}$};
\node[right] at (0.5,-0.5){$\edge_{2}$};
\node[left] at (-1.75,-1.75){$\edge_{1}'$};
\node[right] at (1.75,-1.75){$\edge_{2}'$};
\node[left] at (-2.2,-2.6){$\edge_{11}'$};
\node[right] at (-1.8,-2.6){$\edge_{12}'$};
\node[left] at (1.8,-2.6){$\edge_{21}'$};
\node[right] at (2.2,-2.6){$\edge_{22}'$};

\node at (0,0){$\bullet$};
\node at (-1,-1){$\bullet$};
\node at (1,-1){$\bullet$};
\node at (-1.5,-1.5){$\bullet$};
\node at (1.5,-1.5){$\bullet$};
\node at (-2,-2.25){$\bullet$};
\node at (2,-2.25){$\bullet$};
\node at (-2.4,-3){$\bullet$};
\node at (2.4,-3){$\bullet$};
\node at (-1.6,-3){$\bullet$};
\node at (1.6,-3){$\bullet$};

\draw[-] (0,0) to (-1,-1);
\draw[-] (0,0) to (1,-1);
\draw[dotted] (-1,-1) to (-1.5,-1.5);
\draw[dotted] (1,-1) to (1.5,-1.5);
\draw[-] (-1.5,-1.5) to (-2,-2.25);
\draw[-] (1.5,-1.5) to (2,-2.25);
\draw[-] (-2,-2.25) to (-2.4,-3);
\draw[-] (-2,-2.25) to (-1.6,-3);
\draw[-] (2,-2.25) to (2.4,-3);
\draw[-] (2,-2.25) to (1.6,-3);

\draw [line width=.5pt,dash pattern=on 1pt off 2pt] (-2.4,-3.75) ellipse (0.3cm and 0.75cm);
\draw [line width=.5pt,dash pattern=on 1pt off 2pt] (-1.6,-3.75) ellipse (0.3cm and 0.75cm);
\draw [line width=.5pt,dash pattern=on 1pt off 2pt] (2.4,-3.75) ellipse (0.3cm and 0.75cm);
\draw [line width=.5pt,dash pattern=on 1pt off 2pt] (1.6,-3.75) ellipse (0.3cm and 0.75cm);
\end{tikzpicture}
    \caption{The induction step}
    \label{fig:prooflemmadelta}
\end{center}
\end{figure}
We have
\begin{align*}
\vert\diffedge(\edge_{1})\vert = 
\vert\diffedge(\edge_{1}')\vert &= 
\left\vert\frac{\cycletime(\edge_{11}')}{\cycletime(\edge_{11}')+\cycletime(\edge_{12}')}\diffedge(\edge_{11}') + \frac{\cycletime(\edge_{12}')}{\cycletime(\edge_{11}')+\cycletime(\edge_{12}')}\diffedge(\edge_{12}') + \frac{1}{2}\left(\frac{1}{1-(1-\proba)^{2}}-\frac{1}{\proba}\right)\right\vert\\
&<\frac{1}{2}\max(\cycletime(\edge_{11}'),\cycletime(\edge_{12}'))+\frac{1}{2}\left\vert\frac{1}{1-(1-\proba)^{2}}-\frac{1}{\proba}\right\vert
\end{align*}
by induction, and similarly for $\diffedge(\edge_{2})$. 
Moreover we have 
\begin{align*}
\cycletime(\edge_{1}) > \cycletime(\edge_{1}') = \cycletime(\edge_{11}')+\cycletime(\edge_{12}')+\frac{1}{\proba}+\frac{1}{1-(1-\proba)^{2}},
\end{align*}
and similarly for $\cycletime(\edge_{2})$. Finally,
\begin{align*}
\frac{\vert\diffedge(\edge_{1})\vert+\vert\diffedge(\edge_{2})\vert}{\cycletime(\edge_{1})+\cycletime(\edge_{2})} 
&< \frac{\dfrac{1}{2}\left(\max(\cycletime(\edge_{11}'),\cycletime(\edge_{12}')) + \max(\cycletime(\edge_{21}'),\cycletime(\edge_{22}'))+\dfrac{2}{\proba}-\dfrac{2}{1-(1-\proba^{2})}\right)}{\cycletime(\edge_{11}')+\cycletime(\edge_{12}')+\cycletime(\edge_{21}')+\cycletime(\edge_{22}')+\dfrac{2}{\proba}+\dfrac{2}{1-(1-\proba^{2})}}\\
&<\frac{1}{2}. \qedhere
\end{align*}

\end{proof}

\begin{proof}[Proof of \cref{th:BQDFequalizingleafs}]
We proceed by induction on the number of edges in the tree $\tree$. 
If $\tree$ has only one edge $\edge$, then 
\begin{equation}\label{eq:one-edge}
\pay(\edge,\mixed_{\bdfs}) = \frac{1}{\proba} = \frac{1}{2}\left(\frac{2}{\proba}+ 0\right).
\end{equation} 
Suppose that for any tree that has less edges than $\tree$, the time to reach any leaf edge using the \ac{BDFS} is $\frac{1}{2}\cycletime(\source) + \diffedge(\source)$.

If the origin $\source$ has degree $1$ (as in  \cref{fig:Odegree1}), then, for any leaf edge $\edge$, we have
\begin{equation}\label{eq:pay-bdfs}
\pay(\edge,\mixed_{\bdfs}) = \frac{1}{\proba} + \frac{1}{2}(\cycletime(\vertex))+\diffedge(\vertex) 
= \frac{1}{2}\left(\cycletime(\vertex)+ \frac{2}{\proba}\right)+\diffedge(\vertex)
= \frac{1}{2}\cycletime(\source)+ \diffedge(\source).
\end{equation}

Consider the case where $\source$ has degree $2$ (as in \cref{fig:Odegree2}) and let $\edge_{1}$ be a leaf edge in $\subtree_{\vertex_{1}}$. Then
\begin{align*}
\pay(\edge_{1},\mixed_{\bdfs})
&= (1-\proba)^{2}(1+\pay(\edge_{1},\mixed_{\bdfs})) \\
&\quad+ \proba(1-\proba)\left(1+\frac{1}{2}\cycletime(\vertex_{1})+\diffedge(\vertex_{1}) + 1 + \cycletime(\vertex_{2}) + \frac{2}{\proba} + \frac{1}{2}\cycletime(\vertex_{1})+\diffedge(\vertex_{1})\right)\\
&\quad+ \proba^{2}\left(\bdfs(\edge_{1})\left(1+\frac{1}{2}\cycletime(\vertex_{1})+\diffedge(\vertex_{1})\right) + \bdfs(\edge_{2})\left(1 + \cycletime(\vertex_{2}) + \frac{2}{\proba} + \frac{1}{2}\cycletime(\vertex_{1})+\diffedge(\vertex_{1})\right)\right).
\end{align*}
and
\begin{align*}
\pay(\edge_{1},\mixed_{\bdfs}) (1-(1-\proba)^{2}) 
&= 1 + \proba(1-\proba)\left(\cycletime(\vertex_{1})+\cycletime(\vertex_{2}) + \frac{2}{\proba} + 2\diffedge(\vertex_{1})\right) \\
&\quad + \proba^{2}\left(\frac{1}{2}\cycletime(\vertex_{1})+\diffedge(\vertex_{1}) + \bdfs(\edge_{2})\left(\cycletime(\vertex_{2})+\frac{2}{\proba}\right)\right).
\end{align*}
Furthermore,
\begin{align*}
\cycletime(\source) &= \cycletime(\vertex_{1})+\cycletime(\vertex_{2})+\frac{3}{\proba} + \frac{1}{1-(1-\proba)^{2}} 
\intertext{ and } 
\cycletime(\edge_{1})+\cycletime(\edge_{2}) &= \cycletime(\vertex_{1})+\cycletime(\vertex_{2}) + \frac{4}{\proba} = \cycletime(\source) + \frac{1}{\proba} - \frac{1}{1-(1-\proba)^{2}}.
\end{align*}
Hence, by \cref{le:delta}, for $\proba$ large enough we do not need the projection in \cref{eq:bdsf}, so we have 
\begin{align*}
\pay(\edge_{1},\mixed_{\bdfs}) (1-(1-\proba)^{2}) 
&= 1 + p(1-\proba)\left(\cycletime(\source)-\frac{1}{\proba}-\frac{1}{1-(1-\proba)^{2}} + 2\diffedge(\edge_{1})\right)\\ 
&\quad+ \proba^{2}\left(\frac{1}{2}\left(\cycletime(\edge_{1}) - \frac{2}{\proba}\right) + \diffedge(\edge_{1}) + \left(\frac{1}{2} + \frac{\diffedge(\edge_{2})-\diffedge(\edge_{1})}{\cycletime(\edge_{1})+\cycletime(\edge_{2})} \frac{1-(1-\proba)^{2}}{\proba^{2}}\right)\cycletime(\edge_{2})\right).
\end{align*}
Thus,
\begin{align*}
\pay(\edge_{1},\mixed_{\bdfs}) 
&= \frac{1}{1-(1-\proba)^{2}} + \frac{1}{2}\left(\cycletime(\source) - \frac{1}{\proba}-\frac{1}{1-(1-\proba)^{2}}+2\diffedge(\edge_{1})\right) + \frac{\diffedge(\edge_{2})-\diffedge(\edge_{1})}{\cycletime(\edge_{1})+\cycletime(\edge_{2})}\cycletime(\edge_{2})\\
&= \frac{1}{2}\cycletime(\source) + \frac{1}{2}\left(\frac{1}{1-(1-\proba)^{2}}- \frac{1}{\proba}\right) + \diffedge(\edge_{1})\frac{\cycletime(\edge_{1})}{\cycletime(\edge_{1})+\cycletime(\edge_{2})} + \diffedge(\edge_{2})\frac{\cycletime(\edge_{2})}{\cycletime(\edge_{1})+\cycletime(\edge_{2})}\\
&= \frac{1}{2}\cycletime(\source) + \diffedge(\source). \qedhere
\end{align*}
\end{proof}

\begin{proof}[Proof of \cref{th:EBDequalizingQDF}]
The proof is by induction on the number of edges of the tree $\tree$. If $\tree$  has only one edge, the result is immediate. 
Suppose now that the results holds for any tree with fewer edges than $\tree$.

If the degree of the origin $\source$ is $1$,  the result follows immediately from the induction hypothesis. 
Assume now that the degree of $\source$ is $2$ (as in \cref{fig:Odegree2}). Let $\pure(\vertex_{1})$ and $\pure(\vertex_{2})$ be two \acp{DFS}  on $\subtree_{\vertex_{1}}$ and $\subtree_{\vertex_{2}}$, respectively. 
Let $\pure(\edge_{1})$ be the pure \ac{DFS} on $\tree$ that, when both $\edge_{1}$ and $\edge_{2}$ are active, takes edge $\edge_{1}$ concatenated with $\pure(\vertex_{1})$ and then $\pure(\vertex_{2})$, in case Harry is not found in $\subtree_{\vertex_{1}}$.
The pure strategy $\pure(\edge_{2})$ is defined analogously.
Given a vertex $\vertex$, call $\ebdcond{\vertex}$ the conditional probability measure on $\edges_{\vertex}$ induced by $\ebd$.
Then
\begin{align*}
\pay(\ebd,\pure(\edge_{1})) 
&= (1-\proba)^{2}(1+\pay(\ebd,\pure(\edge_{1}))) \\
&\quad+ \proba\left(\ebd(\edges_{\edge_{1}})(1+\pay(\ebdcond{\vertex_{1}},\pure(\vertex_{1})) + \ebd(\edges_{\edge_{2}})\left(1+\frac{2}{\proba}+\cycletime(\vertex_{1}) + \pay(\ebdcond{\vertex_{2}},\pure(\vertex_{2}))\right)\right) \\
&\quad+ \proba(1-\proba)\left(\ebd(\edges_{\edge_{2}})(1+\pay(\ebdcond{\vertex_{2}},\pure(\vertex_{2})) + \ebd(\edges_{\edge_{1}})\left(1+\frac{2}{\proba}+\cycletime(\vertex_{2}) + \pay(\ebdcond{\vertex_{1}},\pure(\vertex_{1}))\right)\right).
\end{align*}
Hence,
\begin{align*}
\pay(\ebd,\pure(\edge_{1})) = \pay(\ebd,\pure(\edge_{2})) 
&\Longleftrightarrow \ebd(\edges_{\edge_{1}})\left(\frac{2}{\proba}+\cycletime(\vertex_{2})\right) = \ebd(\edges_{\edge_{2}})\left(\frac{2}{\proba}+\cycletime(\vertex_{1})\right)\\
&\Longleftrightarrow \ebd(\edges_{\edge_{1}}) = \frac{\cycletime(\edge_{1})}{\cycletime(\edge_{1})+\cycletime(\edge_{2})}.   \qedhere
\end{align*} 
\end{proof}

\subsection{Omitted proofs of \cref{se:eulerian}}
\begin{proof}[Proof of \cref{th:EUequalizingedges}]
We denote $\edge(\ppath,\edgel)$ the $\edgel$-th edge  of path $\ppath$, starting from the root $\source$. 
We proceed by induction on $\halfpaths$.

Consider that, with probability $(1-\proba)^{2\halfpaths}$ all edges starting from $\source$ are inactive; if this happens, Sally has to wait one turn and her payoff is $(1+\pay(\edge(\ppath,\edgel),\eulerstr))$. 
With probability $1-(1-\proba)^{2\halfpaths}$ at least one edge is active and each of the available edges is chosen with equal probability. 
Given that Harry hides in $\edge(\ppath,\edgel)$, if the chosen path is $\ppath$, then the game ends in $(\edgel-1)/\proba$ units of time.
If the chosen path is $\ppathalt\neq\ppath$, then Sally goes to $\sink$ and the continuation payoff is $\pay_{\ppathalt}(\edge(\ppath,\length_{\ppath}-\edgel+1),\eulerstr)$, where $\pay_{\ppathalt}$ is the payoff of the game played on $\euler_{2\halfpaths-1}(\lengthprof\setminus\length_{\ppathalt})$, in which path $\ppathalt$ has been visited, $\lengthprof\setminus\length_{\ppathalt}$ is the vector $(\length_1,\dots,\length_{\ppathalt-1},\length_{\ppathalt+1},\dots,\length_{2\halfpaths})$ of size $2\halfpaths-1$, and the game starts in $\sink$. 

In formula:
\begin{align*}
\pay(\edge(\ppath,\edgel),\eulerstr) 
&= 
(1-\proba)^{2\halfpaths}(1+\pay(\edge(\ppath,\edgel),\eulerstr)) \\ 
&\quad +\frac{1-(1-\proba)^{2\halfpaths}}{2\halfpaths}\left(1+\frac{\edgel-1}{\proba} + \sum_{\ppathalt\neq \ppath}\left( 1 + \frac{\length_{\ppathalt}-1}{\proba} + \pay_{\ppathalt}(\edge(\ppath,\length_{\ppath}-\edgel+1),\eulerstr) \right)\right).
\end{align*}
The above expression yields
\begin{equation}\label{eq:geijsigma}
\pay(\edge(\ppath,\edgel),\eulerstr) 
= \frac{1}{1-(1-\proba)^{2\halfpaths}} + \frac{1}{2\halfpaths}\left(\frac{\edgel-1}{\proba} + \sum_{\ppathalt\neq \ppath}\left( \frac{\length_{\ppathalt}-1}{\proba} + \pay_{\ppathalt}(\edge(\ppath,\length_{\ppath}-\edgel+1),\eulerstr) \right)\right)
\end{equation}
A similar expression holds for $\pay_{\ppathalt}(\edge(\ppath,\length_{\ppath}-\edgel+1),\eulerstr)$. Plugging it in \cref{eq:geijsigma}, we obtain
\begin{equation}\label{eq:geijsigma-2}
\begin{split}
\pay(\edge(\ppath,\edgel),\eulerstr) =& \frac{1}{1-(1-\proba)^{2\halfpaths}}
+\frac{1}{2\halfpaths}\left(\frac{\edgel-1}{\proba} + \sum_{\ppathalt\neq \ppath}\left( \frac{\length_{\ppathalt}-1}{\proba} + \frac{1}{1-(1-\proba)^{2\halfpaths-1}} \right.\right.\\
&+\left.\left. \frac{1}{2\halfpaths-1}\left(\frac{\length_{\ppath}-\edgel}{\proba} + \sum_{\ppathsec\neq \ppathalt,\ppath}\left(\frac{\length_{\ppathsec}-1}{\proba}+\pay_{\ppathalt,\ppathsec}(\edge(\ppath,\edgel),\eulerstr) \right)\right)\right)\right),
\end{split}
\end{equation}
where $\pay_{\ppathalt,\ppathsec}$ is the payoff of the game played on $\euler_{2(\halfpaths-1)}(\lengthprof\setminus \length_{\ppathalt},\length_{\ppathsec})$, in which both path $\ppathalt$ and path $\ppathsec$ have been visited.
The induction hypothesis is
\begin{equation}\label{eq:g-induction}
\pay_{\ppathalt,\ppathsec}(\edge(\ppath,\edgel),\eulerstr) = \frac{\eulertime(\euler_{2(\halfpaths-1)}(\lengthprof\setminus \length_{\ppathalt},\length_{\ppathsec}))+\proba^{-1}}{2}+\altdiff_{\halfpaths-1}.
\end{equation} 
Therefore, plugging \cref{eq:g-induction} into \cref{eq:geijsigma-2}, we get
\begin{align*}
\pay(\edge(\ppath,\edgel),\eulerstr) 
&= \frac{1}{1-(1-\proba)^{2\halfpaths}} + \frac{2\halfpaths-1}{2\halfpaths}\frac{1}{1-(1-\proba)^{2\halfpaths-1}} \\
&\quad + \frac{1}{2\halfpaths}\left(1+\frac{2\halfpaths-2}{2\halfpaths-1}\right) \sum_{\ppathalt\neq \ppath}\frac{\length_{\ppathalt}-1}{\proba}  + \frac{1}{2\halfpaths}\frac{\length_{\ppath}-1}{\proba} \\
&\quad + \frac{1}{2\halfpaths(2\halfpaths-1)}\sum_{\ppathalt\neq \ppath}\sum_{\ppathsec\neq \ppath,\ppathalt} \left(\frac{\eulertime(\euler_{2(\halfpaths-1)}(\lengthprof\setminus \length_{\ppathalt},\length_{\ppathsec}))+\proba^{-1}}{2}+\altdiff_{\halfpaths-1}\right).
\end{align*}
Furthermore,
\begin{align*}
\sum_{\ppathalt\neq \ppath}\sum_{\ppathsec\neq \ppath,\ppathalt} \left(\eulertime(\euler_{2(\halfpaths-1)}(\lengthprof\setminus \length_{\ppathalt},\length_{\ppathsec}))+\proba^{-1}\right) 
&= \frac{(2\halfpaths-1)(2\halfpaths-2)}{\proba}\\
&\quad + (2\halfpaths-1)(2\halfpaths-2)\sum_{\ppathalt=1}^{2(\halfpaths-1)}\frac{1}{1-(1-\proba)^{\ppathalt}} \\ 
&\quad + (2\halfpaths-1)(2\halfpaths-2)\frac{\length_{\ppath}-1}{\proba}\\
&\quad + (2\halfpaths-2)(2\halfpaths-3)\sum_{\ppathalt\neq \ppath}\frac{\length_{\ppathalt}-1}{\proba}.
\end{align*}
And finally, one obtains the following simplifications
\begin{align*}
\pay(\edge(\ppath,\edgel),\eulerstr) 
&=\frac{1}{1-(1-\proba)^{2\halfpaths}} + \frac{2\halfpaths-1}{2\halfpaths}\frac{1}{1-(1-\proba)^{2\halfpaths-1}} + \frac{2\halfpaths-2}{4\halfpaths}\left(\frac{1}{\proba}+\sum_{\ppathalt=1}^{2(\halfpaths-1)}\frac{1}{1-(1-\proba)^{\ppathalt}}\right)  \\
&\quad + \frac{1}{2\halfpaths}\left(1+\frac{2\halfpaths-2}{2\halfpaths-1}\right) \sum_{\ppathalt\neq \ppath}\frac{\length_{\ppathalt}-1}{\proba}  + \frac{1}{2\halfpaths}\frac{\length_{\ppath}-1}{\proba} + \frac{2\halfpaths-2}{4\halfpaths}\frac{\length_{\ppath}-1}{\proba} \\
&\quad + \frac{(2\halfpaths-2)(2\halfpaths-3)}{4\halfpaths(2\halfpaths-1)}\sum_{k\neq \ppath}\frac{\length_{\ppathalt}-1}{\proba}+ 
 \frac{2\halfpaths-2}{2\halfpaths}\altdiff_{\halfpaths-1}\\
&= \frac{1}{1-(1-\proba)^{2\halfpaths}} + \frac{2\halfpaths-1}{2\halfpaths}\frac{1}{1-(1-\proba)^{2\halfpaths-1}} \\
&\quad + \frac{\halfpaths-1}{2\halfpaths}\left(\frac{1}{\proba}+\sum_{\ppathalt=1}^{2(\halfpaths-1)}\frac{1}{1-(1-\proba)^{\ppathalt}}\right) +  \frac{1}{2}\sum_{\ppathalt=1}^{2\halfpaths} \frac{\length_{\ppathalt}-1}{\proba} +  \frac{\halfpaths-1}{\halfpaths}\altdiff_{\halfpaths-1}\\
&=\frac{\eulertime(\euler_{2\halfpaths}(\lengthprof))+\proba^{-1}}{2} + \altdiff_\halfpaths. \qedhere
\end{align*}
\end{proof}

\begin{proof}[Proof of \cref{th:UequalizingEU}]
The proof is by induction on the number of parallel paths.
Let $\pure$ be a \ac{ES} of Sally, and denote 
\begin{equation*}
\totedges = \sum_{\ppathalt=1}^{2\halfpaths} \length_{\ppathalt}
\end{equation*} 
the number of edges of $\euler_{2\halfpaths}(\lengthprof)$. 
First,
\begin{align*}
\pay(\ud,\pure) = \frac{1}{1-(1-\proba)^{2\halfpaths}} + \frac{1}{2\halfpaths}\left(1-\frac{1}{\totedges}\right)\sum_{\ppathalt=1}^{2\halfpaths} \pay^{\length_\ppathalt-1}(\ud,\pure),
\end{align*}
where $ \pay^{\length_\ppathalt-1}(\ud,\pure)$ is the payoff of the continuation game after one edge of path $\ppathalt$ has been visited.
It is not difficult to prove that 
\begin{align*}
(\totedges - 1)\pay^{\length_\ppathalt-1}(\ud,\pure) = (\totedges - \length_\ppath)\pay_\ppathalt(\ud,\pure) + \frac{\totedges(\length_\ppathalt-1)}{\proba} - \frac{\length_\ppathalt(\length_\ppathalt-1)}{2\proba},
\end{align*}
where $\pay_{\ppathalt}$ is the payoff of the game played on $\euler_{2(\halfpaths-1)}(\lengthprof\setminus \length_{\ppathalt})$, in which path $\ppathalt$ has been visited.
Therefore
\begin{align*}
\pay(\ud,\pure) = \frac{1}{1-(1-\proba)^{2\halfpaths}} + \frac{\totedges}{2\halfpaths\proba} - \frac{1}{\proba} +
\frac{1}{2\halfpaths \totedges}\sum_{\ppathalt=1}^{2\halfpaths} \left(\frac{-\length_{\ppathalt}(\length_{\ppathalt}-1)}{2\proba} + (\totedges - \length_{\ppathalt})\pay_{\ppathalt}(\ud,\pure)\right).
\end{align*}
Computing a similar expression for $\pay_{\ppathalt}(\ud,\pure)$ and plugging it in the above equation one has
\begin{align*}
\pay(\ud,\pure) &= \frac{1}{1-(1-\proba)^{2\halfpaths}} +\frac{2\halfpaths-1}{2\halfpaths}\frac{1}{1-(1-\proba)^{2\halfpaths-1}} + \frac{\totedges}{2\halfpaths\proba} - \frac{1}{\proba} - \frac{2\halfpaths-1}{2\halfpaths\proba}\\
&+\frac{1}{2\halfpaths \totedges}\sum_{\ppathalt=1}^{2\halfpaths}
\left(\frac{-\length_{\ppathalt}(\length_{\ppathalt}-1)}{2\proba} + \frac{(\totedges - \length_{\ppathalt})^2}{(2\halfpaths-1)\proba} 
\right. \\
&+ \left. \frac{1}{(2\halfpaths-1)}\sum_{\ppathsec \neq \ppathalt}\left(\frac{-\length_{\ppathsec}(\length_{\ppathsec}-1)}{2\proba} + (\totedges - \length_{\ppathalt}-\length_{\ppathsec})\pay_{\ppathalt,\ppathsec}(\ud,\pure)\right)\right),
\end{align*}
where $\pay_{\ppathalt,\ppathsec}$ is the payoff of the game played on $\euler_{2(\halfpaths-1)}(\lengthprof\setminus \length_{\ppathalt},\length_{\ppathsec})$, in which both path $\ppathalt$ and path $\ppathsec$ have been visited.
From the induction hypothesis, one has
\begin{align*}
\pay_{\ppathalt,\ppathsec}(\ud,\pure) = \frac{1}{2}\left(\frac{1}{\proba} + \frac{\totedges - \length_{\ppathalt}-\length_{\ppathsec}-2(\halfpaths-1)}{\proba}+ \sum_{l = 1}^{2(\halfpaths-1)}\frac{1}{1-(1-\proba)^l} \right) + \altdiff_{\halfpaths-1}.
\end{align*}
Plugging this expression in the previous equation, one has
\begin{align*}
\pay(\ud,\pure) &= \frac{1}{1-(1-\proba)^{2\halfpaths}} +\frac{2\halfpaths-1}{2\halfpaths}\frac{1}{1-(1-\proba)^{2\halfpaths-1}} + \frac{\totedges}{2\halfpaths\proba} - \frac{1}{\proba} - \frac{2\halfpaths-1}{2\halfpaths\proba} + \frac{\halfpaths-1}{\halfpaths}\altdiff_{\halfpaths-1}\\
&+\frac{\halfpaths-1}{2\halfpaths}\left(\frac{1}{\proba} - \frac{2(\halfpaths-1)}{\proba} + \sum_{l = 1}^{2(\halfpaths-1)}\frac{1}{1-(1-\proba)^l}\right)\\
&+\frac{1}{2\halfpaths \totedges}\sum_{\ppathalt=1}^{2\halfpaths}\left(\frac{-\length_{\ppathalt}(\length_{\ppathalt}-1)}{2\proba} + \frac{(\totedges - \length_{\ppathalt})^2}{(2\halfpaths-1)\proba} 
+ \frac{1}{2\halfpaths-1}\sum_{\ppathsec \neq \ppathalt}\left(\frac{-\length_{\ppathsec}(\length_{\ppathsec}-1)}{2\proba} + \frac{(\totedges - \length_{\ppathalt}-\length_{\ppathsec})^2}{2\proba}\right)\right).
\end{align*}
Furthermore
\begin{multline*}
\frac{1}{2\halfpaths \totedges}\sum_{\ppathalt=1}^{2\halfpaths}\left(\frac{-\length_{\ppathalt}(\length_{\ppathalt}-1)}{2\proba} + \frac{(\totedges - \length_{\ppathalt})^2}{(2\halfpaths-1)\proba} 
+ \frac{1}{2\halfpaths-1}\sum_{\ppathsec \neq \ppathalt}\left(\frac{-\length_{\ppathsec}(\length_{\ppathsec}-1)}{2\proba} + \frac{(\totedges - \length_{\ppathalt}-\length_{\ppathsec})^2}{2\proba}\right)\right)\\
= \frac{1}{2\halfpaths\proba} + \frac{(\halfpaths-1)\totedges}{2\halfpaths\proba}.
\end{multline*}
Finally, we have
\begin{align*}
\pay(\ud,\pure)
&= \frac{1}{1-(1-\proba)^{2\halfpaths}} + \frac{2\halfpaths-1}{2\halfpaths}\frac{1}{1-(1-\proba)^{2\halfpaths-1}} + \frac{\halfpaths-1}{2\halfpaths}\left(\frac{1}{\proba}+\sum_{\ppathalt=1}^{2(\halfpaths-1)}\frac{1}{1-(1-\proba)^{\ppathalt}}\right) \\ 
&\quad +\frac{1}{2}\sum_{\ppathalt=1}^{2\halfpaths} \frac{\length_{\ppathalt}-1}{\proba} +  \frac{\halfpaths-1}{\halfpaths}\altdiff_{\halfpaths-1}\\
&= \frac{\eulertime(\euler_{2\halfpaths}(\lengthprof))+\proba^{-1}}{2} + \altdiff_\halfpaths. \qedhere
\end{align*}
\end{proof}

\bibliographystyle{apalike}
\bibliography{bibsearch}

\end{document}